\newtheorem{theorem}{Theorem}
\newtheorem{proposition}{Proposition}
\newtheorem{definition}{Definition}
\newtheorem{remark}{Remark}
\DeclareMathOperator*{\argmin}{arg\,min}
\newcommand{\weak}{%
  \overset{\scriptscriptstyle\smash{d}}{\rightarrow} 
}
\newcommand{\nweak}{%
  \overset{\scriptscriptstyle\smash{d}}{\nrightarrow} 
}
\title{Gradient-Free Kernel Stein Discrepancy}
\author{Matthew A. Fisher$^1$, Chris. J Oates$^{1,2}$ \\
\small $^1$Newcastle University, UK \\
\small $^2$Alan Turing Institute, UK}
\begin{document}

\maketitle

\begin{abstract}
    Stein discrepancies have emerged as a powerful statistical tool, being applied to fundamental statistical problems including parameter inference, goodness-of-fit testing, and sampling.
    The canonical Stein discrepancies require the derivatives of a statistical model to be computed, and in return provide theoretical guarantees of convergence detection and control.
    However, for complex statistical models, the stable numerical computation of derivatives can require bespoke algorithmic development and render Stein discrepancies impractical.
    This paper focuses on posterior approximation using Stein discrepancies, and introduces a collection of non-canonical Stein discrepancies that are \emph{gradient-free}, meaning that derivatives of the statistical model are not required.
    Sufficient conditions for convergence detection and control are established, and applications to sampling and variational inference are presented.
\end{abstract}

\section{Introduction}
\label{sec: intro}

Stein discrepancies were introduced in \citet{gorham2015measuring}, as a way to measure the quality of an empirical approximation to a continuous statistical model involving an intractable normalisation constant.
Rooted in \emph{Stein's method} \citep{stein1972bound}, the idea is to consider empirical averages of a large collection of test functions, each of which is known to integrate to zero under the statistical model.
To date, test functions have been constructed by combining derivatives of the statistical model with reproducing kernels \citep{chwialkowski2016kernel,liu2016kernelized,gorham2017measuring,gong2020sliced,gong2021active}, random features \citep{huggins2018random}, diffusion coefficients and functions with bounded derivatives \citep{gorham2019measuring}, neural networks \citep{grathwohl2020learning}, and polynomials \citep{chopin2021fast}.
The resulting discrepancies have been shown to be powerful statistical tools, with diverse applications including parameter inference \citep{barp2019minimum,matsubara2021robust}, goodness-of-fit testing \citep{jitkrittum2017linear,fernandez2020kernelized}, and sampling \citep{liu2017black,chen2018stein,chen2019stein,riabiz2020optimal,hodgkinson2020reproducing,fisher2021measure}.
However, one of the main drawbacks of these existing works is the requirement that derivatives both exist and can be computed.

The use of non-differentiable statistical models is somewhat limited but includes, for example, Bayesian analyses where Laplace priors are used \citep{park2008bayesian,rovckova2018spike}.
Much more common is the situation where derivatives exist but cannot easily be computed.
In particular, for statistical models with parametric differential equations involved, one often requires different, more computationally intensive numerical methods to be used if the \emph{sensitivities} (i.e. derivatives of the solution with respect to the parameters) are to be stably computed \citep{cockayne2021probabilistic}.
For large-scale partial differential equation models, as used in finite element simulation, computation of sensitivities can increase simulation times by several orders of magnitude, if it is practical at all.

The motivation and focus of this paper is on computational methods for posterior approximation, and to this end we propose a collection of non-canonical Stein discrepancies that are \emph{gradient free}, meaning that computation of the derivatives of the statistical model is not required.
Gradient-free Stein operators were introduced in \cite{han2018stein} in the context of Stein variational gradient descent \citep{liu2016stein}, but the use of gradient-free Stein operators to construct test functions for a discrepancy has yet to be investigated.
General classes of Stein discrepancies were analysed in \citet{huggins2018random,gorham2019measuring}, but their main results do not cover the gradient-free Stein discrepancies developed in this work, for reasons that will be explained.
The combination of gradient-free Stein operators and reproducing kernels is studied in detail, to obtain discrepancies that can be explicitly computed.
The usefulness of these discrepancies depends crucially on their ability to detect the convergence and non-convergence of sequences of probability measures to the posterior target, and in both directions positive results are established.

\paragraph{Outline}
Gradient-free kernel Stein discrepancy is proposed and theoretically analysed in \Cref{sec: methods}.
The proposed discrepancy involves certain degrees of freedom, including a probability density denoted $q$ in the sequel, and strategies for specifying these degrees of freedom are empirically assessed in \Cref{sec: empirical assessment}.
Two applications are then explored in detail; Stein importance sampling (\Cref{subsec: importance}) and Stein variational inference (\Cref{subsec: variational}).
Conclusions are drawn in \Cref{sec: conclusion}.

\section{Methods}
\label{sec: methods}

This section contains our core methodological (\Cref{subsec: gfksd}) and theoretical (\Cref{subsec: theory}) development.
The following notation will be used:

\paragraph{Real Analytic Notation}

For a twice differentiable function $f : \mathbb{R}^d \rightarrow \mathbb{R}$, let $\partial_i f$ denote the partial derivative of $f$ with respect to its $i$th argument, let $\nabla f$ denote the gradient vector with entries $\partial_i f$, and let $\nabla^2 f$ denote the Hessian matrix with entries $\partial_i \partial_j f$.
For a sufficiently regular bivariate function $f : \mathbb{R}^d \times \mathbb{R}^d \rightarrow \mathbb{R}$, let $(\partial_i \otimes \partial_j) f$ indicate the  application of $\partial_i$ to the first argument of $f$, followed by the application of $\partial_j$ to the second argument.
(For derivatives of other orders, the same tensor notation $\otimes$ will be used.)

\paragraph{Probabilistic Notation}

Let $\mathcal{P}(\mathbb{R}^d)$ denote the set of probability distributions on $\mathbb{R}^d$.
Let $\delta(x) \in \mathcal{P}(\mathbb{R}^d)$ denote an atomic distribution located at $x \in \mathbb{R}^d$.
For $\pi, \pi_0 \in \mathcal{P}(\mathbb{R}^d)$, let $\pi \ll \pi_0$ indicate that $\pi$ is absolutely continuous with respect to $\pi_0$.
For $\pi \in \mathcal{P}(\mathbb{R}^d)$ and $(\pi_n)_{n \in \mathbb{N}} \subset \mathcal{P}(\mathbb{R}^d)$, write $\pi_n \weak \pi$ to indicate weak convergence of $\pi_n$ to $\pi$.
The symbols $p$ and $q$ are reserved for probability density functions on $\mathbb{R}^d$, while $\pi$ is reserved for a generic element of $\mathcal{P}(\mathbb{R}^d)$.
For convenience, the symbols $p$ and $q$ will also be used to refer to the probability distributions that these densities represent.

\subsection{Gradient-Free Kernel Stein Discrepancy} \label{subsec: gfksd}

The aim of this section is to explain how a gradient-free Stein discrepancy can be constructed.
Let $p \in \mathcal{P}(\mathbb{R}^d)$ be a target distribution of interest.
Our starting point is a gradient-free Stein operator, introduced in \citet{han2018stein} in the context of Stein variational gradient descent \citep{liu2016stein}:

\begin{definition}[Gradient-Free Stein Operator] \label{def: gfso}
For $p,q \in \mathcal{P}(\mathbb{R}^d)$ with $q \ll p$ and $\nabla \log q$ well-defined, the \emph{gradient-free Stein operator} is defined as
$$
\mathcal{S}_{p,q} h := \frac{q}{p} \left( \nabla \cdot h + h \cdot \nabla \log q \right) ,
$$
acting on differentiable functions $h : \mathbb{R}^d \rightarrow \mathbb{R}^d$.
\end{definition}

\noindent
The canonical (or \emph{Langevin}) Stein operator is recovered when $p = q$, but when $q \neq p$ the dependence on the derivatives of $p$ is removed.
The operator $\mathcal{S}_{p,q}$ can still be recognised as a \emph{diffusion} Stein operator, being related to the infinitesimal generator of a diffusion process that leaves $p$ invariant; however, it falls outside the scope of the theoretical analysis of \citet{huggins2018random,gorham2019measuring}, for reasons explained in \Cref{rem: Gorham19}.
The inclusion of $q$ introduces an additional degree of freedom, specific choices for which are discussed in \Cref{sec: empirical assessment}. 
The \emph{Stein operator} nomenclature derives from the vanishing integral property in \Cref{prop: int to 0} below, which is central to Stein's method \citep{stein1972bound}:

\begin{proposition} \label{prop: int to 0}
In the setting of \Cref{def: gfso}, assume that $\|x\|^{d-1} q(x) \rightarrow 0$ as $\|x\| \rightarrow \infty$ and $\int \|\nabla \log q\| \; \mathrm{d}q < \infty$.
Then, for any function $h : \mathbb{R}^d \rightarrow \mathbb{R}^d$ whose first derivatives exist and are bounded, it holds that
\begin{align}
\int \mathcal{S}_{p,q} h \; \mathrm{d}p = 0.
\label{eq: int to 0}
\end{align}
\end{proposition}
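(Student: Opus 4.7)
The plan is to reduce the claim to an integration-by-parts identity and then use the decay hypotheses to kill the boundary term. Starting from the definition of $\mathcal{S}_{p,q}$, the ratio $q/p$ cancels against the density $p$ and the integrand splits as
\begin{align*}
\int \mathcal{S}_{p,q} h \, \mathrm{d}p
 \;=\; \int q \,(\nabla \cdot h) \, \mathrm{d}x \;+\; \int q \, h \cdot \nabla \log q \, \mathrm{d}x.
\end{align*}
The identity $q\, \nabla \log q = \nabla q$ (valid $q$-a.e., and extended trivially on $\{q = 0\}$) rewrites the second integrand as $h \cdot \nabla q$, so the product rule for divergence gives
\begin{align*}
\int \mathcal{S}_{p,q} h \, \mathrm{d}p \;=\; \int \bigl[q \,\nabla \cdot h + h \cdot \nabla q \bigr] \, \mathrm{d}x \;=\; \int \nabla \cdot (q h) \, \mathrm{d}x.
\end{align*}
Before applying a divergence-theorem argument, I would note that the hypothesis $\int \|\nabla \log q\| \, \mathrm{d}q < \infty$ combined with boundedness of $h$ (delivered, together with its derivatives, by the stated smoothness assumption) ensures $\int \|h \cdot \nabla q\| \, \mathrm{d}x < \infty$, and boundedness of $\nabla \cdot h$ together with $q$ being a density ensures $\int |q \,\nabla \cdot h| \, \mathrm{d}x < \infty$. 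Hence the integrals above are absolutely convergent and the split is legitimate.

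The second step is to show $\int \nabla \cdot (qh) \, \mathrm{d}x = 0$. I would apply the divergence theorem on a Euclidean ball $B_R$ of radius $R$:
\begin{align*}
\int_{B_R} \nabla \cdot (qh) \, \mathrm{d}x \;=\; \int_{\partial B_R} q \, h \cdot \nu \, \mathrm{d}S,
\end{align*}
with $\nu$ the outward unit normal. Letting $c_d$ denote the surface area of the unit sphere in $\mathbb{R}^d$, the surface integral is bounded by
$\|h\|_\infty \, c_d \, R^{d-1} \, \sup_{\|x\|=R} q(x),$
and the hypothesis $\|x\|^{d-1} q(x) \to 0$ as $\|x\| \to \infty$ forces this bound to vanish in the limit $R \to \infty$. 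Combined with absolute integrability established above, dominated convergence yields the full-space identity $\int \nabla \cdot (qh) \, \mathrm{d}x = 0$, completing the proof.

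The main obstacle is the vanishing-flux step: one must match the polynomial growth of the surface measure $R^{d-1}$ against the decay of $q$, which is exactly what the tail hypothesis $\|x\|^{d-1} q(x) \to 0$ is calibrated to do. The other hypothesis, $\int \|\nabla \log q\| \, \mathrm{d}q < \infty$, plays the complementary but crucial role of making the intermediate integrals absolutely convergent so that the algebraic rearrangement and the limit $R\to\infty$ are justified without any further regularity assumption on $p$ (in particular, without requiring $\nabla \log p$ to exist, which is the whole point of a gradient-free operator).
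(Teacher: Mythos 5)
Your proof is correct and follows essentially the same route as the paper's: establish absolute integrability from the boundedness of $h$ and $\nabla \cdot h$ together with $\int \|\nabla \log q\|\,\mathrm{d}q < \infty$, rewrite the integrand as $\nabla \cdot (qh)$, apply the divergence theorem on balls, and use $\|x\|^{d-1} q(x) \to 0$ to make the flux term vanish. The only cosmetic difference is that you name the dominated convergence step explicitly where the paper leaves it implicit.
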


\noindent 
All proofs are contained in \Cref{sec: proofs}.
There are several ways to relate the operator $\mathcal{S}_{p,q}$ to the literature on Stein's method, and perhaps the most natural is to view it as a non-standard instance of the \textit{density method} of \citep{diaconis2004use}; see Section 2 of \citet{anastasiou2021stein} for background.

From \Cref{prop: int to 0}, the expectation of $\mathcal{S}_{p,q}h$ with respect to a distribution $\pi \in \mathcal{P}(\mathbb{R}^d)$ will be zero when $\pi$ and $p$ are equal; conversely, the value of such an expectation can be used to quantify the extent to which $\pi$ and $p$ are different.
Consideration of multiple test functions increases the number and nature of the differences between $\pi$ and $p$ that may be detected.
A \emph{discrepancy} is obtained by specifying which test functions $h$ are considered, and then taking a supremum over the expectations associated to this set.
For computational convenience, in this work we take $h$ to be contained in the unit ball of a reproducing kernel Hilbert space, as described next.

For a symmetric positive definite function $k : \mathbb{R}^d \times \mathbb{R}^d \rightarrow \mathbb{R}$, called a \emph{kernel}, denote the associated reproducing kernel Hilbert space as $\mathcal{H}(k)$.
Let $\mathcal{H}(k)^d$ denote the Cartesian product of $d$ copies of $\mathcal{H}(k)$, equipped with the inner product $\langle h , g \rangle_{\mathcal{H}(k)^d} := \sum_{i=1}^d \langle h_i , g_i \rangle_{\mathcal{H}(k)}$.

\begin{proposition} \label{prop: L1}
Let $\pi \in \mathcal{P}(\mathbb{R}^d)$.
In the setting of \Cref{def: gfso}, assume there is an $\alpha > 1$ such that $\int (q/p)^\alpha \; \mathrm{d}\pi < \infty$ and $\int \|\nabla \log q\|^{\alpha/(\alpha - 1)} \; \mathrm{d}\pi < \infty$.
Let $k : \mathbb{R}^d \times \mathbb{R}^d \rightarrow \mathbb{R}$ be a continuously differentiable kernel such that both $k$ and its first derivatives $x \mapsto (\partial_{i} \otimes \partial_{i}) k(x,x)$, $i = 1,\dots,d$, are bounded.
Then $\mathcal{S}_{p,q}$ is a bounded linear operator from $\mathcal{H}(k)^d$ to $L^1(\pi)$.
\end{proposition}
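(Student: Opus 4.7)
The plan is to verify linearity (which is immediate from the definition of $\mathcal{S}_{p,q}$) and then to bound $\|\mathcal{S}_{p,q} h\|_{L^1(\pi)}$ uniformly in $h$ on the unit ball of $\mathcal{H}(k)^d$. The two moment assumptions and the two kernel assumptions match up naturally with the two summands $\nabla \cdot h$ and $h \cdot \nabla \log q$ inside the definition of $\mathcal{S}_{p,q}$, so the strategy is to handle them separately and combine via Hölder.

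First, I would use the reproducing property to obtain pointwise bounds on the integrand. For each coordinate $h_i \in \mathcal{H}(k)$, the reproducing property gives $h_i(x) = \langle h_i, k(\cdot,x)\rangle_{\mathcal{H}(k)}$, and differentiability of $k$ together with the standard argument (e.g.\ Steinwart--Christmann) gives $\partial_i h_i(x) = \langle h_i, \partial_i^{(2)} k(\cdot,x)\rangle_{\mathcal{H}(k)}$, where the superscript flags differentiation in the second argument. Cauchy--Schwarz then yields
\[
\|h(x)\| \leq \|h\|_{\mathcal{H}(k)^d}\,\sqrt{k(x,x)}, \qquad |\nabla \cdot h(x)| \leq \|h\|_{\mathcal{H}(k)^d}\, \sqrt{\textstyle\sum_{i=1}^d (\partial_i \otimes \partial_i) k(x,x)}.
\]
Under the stated boundedness of $k$ and of $x \mapsto (\partial_i \otimes \partial_i) k(x,x)$, both factors involving $k$ are uniformly bounded by constants $C_0, C_1 < \infty$ depending only on $k$ and $d$.

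Next I would combine these pointwise bounds with the triangle inequality to obtain
\[
|\mathcal{S}_{p,q} h(x)| \leq \|h\|_{\mathcal{H}(k)^d}\, \frac{q(x)}{p(x)} \Bigl( C_1 + C_0 \|\nabla \log q(x)\| \Bigr),
\]
and then apply Hölder's inequality with conjugate exponents $\alpha$ and $\alpha/(\alpha-1)$ after integrating against $\pi$. The first term is controlled by $\bigl(\int (q/p)^\alpha \,\mathrm{d}\pi\bigr)^{1/\alpha}$ since $\pi$ is a probability measure, and the second by $\bigl(\int (q/p)^\alpha \,\mathrm{d}\pi\bigr)^{1/\alpha} \bigl(\int \|\nabla \log q\|^{\alpha/(\alpha-1)} \,\mathrm{d}\pi\bigr)^{(\alpha-1)/\alpha}$. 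Both are finite by hypothesis, so $\|\mathcal{S}_{p,q} h\|_{L^1(\pi)} \leq C\, \|h\|_{\mathcal{H}(k)^d}$ for a constant $C$ depending only on $\alpha$, $k$, $d$, $p$, $q$ and $\pi$.

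The only subtle point is justifying that the derivative $\partial_i h_i$ can genuinely be represented as an inner product of $h_i$ against a Fréchet-type derivative of the feature map; this is standard for a $C^1$ kernel whose partial derivatives on the diagonal are finite, but it should be flagged and cited. Once that is in place, everything else is a one-shot application of Cauchy--Schwarz plus Hölder, so I do not expect any further obstacle.
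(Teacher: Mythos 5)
Your proposal is correct and follows essentially the same route as the paper: the same reproducing-property/Cauchy--Schwarz pointwise bounds on $\|h(x)\|$ and $|\nabla\cdot h(x)|$ (the derivative reproducing property being justified exactly as you flag, via Corollary 4.36 of Steinwart and Christmann), followed by H\"{o}lder with exponents $\alpha$ and $\alpha/(\alpha-1)$ against the moment assumptions. The only cosmetic difference is that you split via the triangle inequality before applying H\"{o}lder to each piece, whereas the paper applies H\"{o}lder once and then splits the second factor using $(a+b)^{\beta}\leq 2^{\beta-1}(a^{\beta}+b^{\beta})$; the two orderings yield the same conclusion.
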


\noindent
For discrete distributions $\pi$, supported on a finite subset of $\mathbb{R}^d$, the moment conditions in \Cref{prop: L1} are automatically satisfied.
For general distributions $\pi$ on $\mathbb{R}^d$, the exponent $\alpha$ can be taken arbitrarily close to 1 to enable the more stringent moment condition $\int (q/p)^\alpha \; \mathrm{d}\pi < \infty$ to hold.
An immediate consequence of \Cref{prop: L1} is that \Cref{def: Stein discrepancy} below is well-defined.

\begin{definition}[Gradient-Free Kernel Stein Discrepancy] \label{def: Stein discrepancy}
For $p$, $q$, $k$ and $\pi$ satisfying the preconditions of \Cref{prop: L1}, the \emph{gradient-free kernel Stein discrepancy} is defined as
\begin{align}
\mathrm{D}_{p,q}(\pi) = \sup\left\{ \int \mathcal{S}_{p,q} h \; \mathrm{d}\pi : \|h\|_{\mathcal{H}(k)^d} \leq 1 \right\} . \label{eq: Stein discrepancy}
\end{align}
\end{definition}

\noindent
The gradient-free kernel Stein discrepancy coincides with the canonical kernel Stein discrepancy when $p=q$, and is thus strictly more general.
Note that $\mathrm{D}_{p,q}(\pi)$ is precisely the operator norm of the linear functional $h \mapsto \int \mathcal{S}_{p,q} h \; \mathrm{d}\pi$, which exists due to \Cref{prop: L1}.
Most common kernels satisfy the assumptions of \Cref{prop: L1}, and a particularly important example in this context is the \emph{inverse multi-quadric} kernel
\begin{align}
    k(x,y) = (\sigma^2 + \|x-y\|^2)^{-\beta}, \qquad \sigma \in (0,\infty), \; \beta \in (0,1) . \label{eq: imq}
\end{align}
The inverse multi-quadric kernel in \eqref{eq: imq} has bounded derivatives of all orders; see Lemma 4 of \citet{fisher2021measure}.
Note also that the conditions on $\pi$ in \Cref{prop: L1} are automatically satisfied when $\pi$ has a finite support.

The use of reproducing kernels ensures that gradient-free kernel Stein discrepancy can be explicitly computed:

\begin{proposition}[Explicit Form] \label{prop: expectation}
For $p$, $q$, and $\pi$ satisfying the preconditions of \Cref{prop: L1}, and $k$ the inverse multi-quadric kernel in \Cref{eq: imq}, we have that
\begin{align}
\mathrm{D}_{p,q}(\pi)^2 = \iint \frac{q(x)q(y)}{p(x)p(y)} & \left\{ \frac{4 \beta (\beta + 1) \|x-y\|^2}{(\sigma^2 + \|x-y\|^2)^{\beta + 2}} + 2 \beta \left[ \frac{d + \langle \nabla \log q(x) - \nabla \log q(y) , x-y \rangle }{(\sigma^2 + \|x-y\|^2)^{1 + \beta}} \right] \right. \nonumber \\
& \qquad \left. + \frac{ \langle \nabla \log q (x) , \nabla \log q (y) \rangle }{ (\sigma^2 + \|x-y\|^2)^\beta } \right\} \; \mathrm{d}\pi(x) \mathrm{d}\pi(y) \label{eq: gfksd explicit}
\end{align}
\end{proposition}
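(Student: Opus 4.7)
The plan is to recognise $\mathrm{D}_{p,q}(\pi)^2$ as the squared norm of a Riesz representer in $\mathcal{H}(k)^d$ and then evaluate that norm in closed form. By \Cref{prop: L1}, the map $L_\pi : h \mapsto \int \mathcal{S}_{p,q} h \; \mathrm{d}\pi$ is a bounded linear functional on $\mathcal{H}(k)^d$, so Riesz representation furnishes some $g_\pi \in \mathcal{H}(k)^d$ with $L_\pi(h) = \langle h, g_\pi\rangle_{\mathcal{H}(k)^d}$ and $\mathrm{D}_{p,q}(\pi) = \|g_\pi\|_{\mathcal{H}(k)^d}$. My first task is to identify $g_\pi$ pointwise. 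Combining the reproducing property $h_i(x) = \langle h_i, k(\cdot, x)\rangle_{\mathcal{H}(k)}$ with its differentiated form $\partial_i h_i(x) = \langle h_i, \partial_i k(\cdot, x)\rangle_{\mathcal{H}(k)}$ (where $\partial_i$ acts on the second argument of $k$; this identity is valid because $k$ is continuously differentiable with $(\partial_i \otimes \partial_i) k(x, x)$ bounded) rewrites $(\mathcal{S}_{p,q} h)(x)$ as $\langle h, g_x\rangle_{\mathcal{H}(k)^d}$ for an explicit $g_x \in \mathcal{H}(k)^d$ whose $i$-th component is
$$ (g_x)_i(\cdot) = \frac{q(x)}{p(x)} \Big[ \partial_i k(\cdot, x) + \partial_i \log q(x)\, k(\cdot, x) \Big] . $$

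Next, I would swap the outer integral with the inner product (a Bochner interchange) to obtain $g_\pi = \int g_x \; \mathrm{d}\pi(x)$ and hence
$$ \mathrm{D}_{p,q}(\pi)^2 = \iint \langle g_x, g_y\rangle_{\mathcal{H}(k)^d} \; \mathrm{d}\pi(x)\mathrm{d}\pi(y) . $$
Expanding by bilinearity factors out the prefactor $q(x)q(y)/(p(x)p(y))$ and leaves four contributions: a double-derivative term $\sum_i (\partial_i \otimes \partial_i) k(x, y)$, obtained from $\langle \partial_i k(\cdot, x), \partial_i k(\cdot, y)\rangle_{\mathcal{H}(k)} = (\partial_i \otimes \partial_i) k(x, y)$ together with symmetry of $k$; two cross-derivative terms pairing one factor of $\nabla \log q$ with one gradient of $k$, computed via $\langle k(\cdot, x), \partial_i k(\cdot, y)\rangle_{\mathcal{H}(k)} = \partial_i k(x, y)$ (second-argument derivative); and the product $\langle \nabla \log q(x), \nabla \log q(y)\rangle\, k(x, y)$. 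Substituting $k(x, y) = (\sigma^2 + \|x - y\|^2)^{-\beta}$ and performing the routine differentiation gives $\nabla_x k$, $\nabla_y k$ and $\sum_i (\partial_i \otimes \partial_i) k$ in closed form; the two cross-terms collapse to $2\beta \langle \nabla \log q(x) - \nabla \log q(y), x - y\rangle/(\sigma^2 + \|x - y\|^2)^{\beta + 1}$, and collecting all contributions by common powers of $\sigma^2 + \|x - y\|^2$ produces \Cref{eq: gfksd explicit}.

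The hard part is not the algebra but justifying the two analytic manoeuvres above. For the differentiated reproducing identity, I need $\partial_i k(\cdot, x) \in \mathcal{H}(k)$ with norm controlled by $\|\partial_i k(\cdot, x)\|_{\mathcal{H}(k)}^2 = (\partial_i \otimes \partial_i) k(x, x)$, which the boundedness hypothesis on this diagonal derivative directly supplies. For the Bochner interchange, it suffices to verify $\int \|g_x\|_{\mathcal{H}(k)^d} \; \mathrm{d}\pi(x) < \infty$; using the explicit form of $g_x$ together with boundedness of $k$ and of its first mixed derivatives, this reduces via Hölder's inequality with exponent $\alpha$ to the two moment hypotheses $\int (q/p)^\alpha \; \mathrm{d}\pi < \infty$ and $\int \|\nabla \log q\|^{\alpha/(\alpha - 1)} \; \mathrm{d}\pi < \infty$ already assumed in \Cref{prop: L1}, so nothing beyond the standing assumptions is required.
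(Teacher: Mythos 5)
Your proposal is correct and follows essentially the same route as the paper: the paper likewise identifies the Riesz representer $\xi = \int \frac{q(x)}{p(x)}[\nabla_x k(x,\cdot) + k(x,\cdot)\nabla\log q(x)]\,\mathrm{d}\pi(x)$ as a Bochner integral (justified by the same H\"{o}lder bound from the moment conditions of \Cref{prop: L1} and the derivative reproducing property), expands $\|\xi\|^2_{\mathcal{H}(k)^d}$ into the same four-term Stein kernel $k_q$, and then substitutes the inverse multi-quadric derivatives. The only point worth flagging is the sign of the leading term: direct differentiation gives $\nabla_x\cdot\nabla_y k(x,y) = -4\beta(\beta+1)\|x-y\|^2(\sigma^2+\|x-y\|^2)^{-\beta-2} + 2\beta d(\sigma^2+\|x-y\|^2)^{-\beta-1}$, so that term enters $k_q$ with a minus sign (as in the paper's appendix computation), whereas the displayed statement of \Cref{eq: gfksd explicit} carries a plus; your write-up defers this to "routine differentiation," so make sure you land on the negative sign rather than reproducing the statement verbatim.
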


\noindent
The result in \Cref{prop: expectation} is specialised to the inverse multi-quadric kernel, since this is the kernel that we will recommend in \Cref{subsec: theory}, but a more general statement is contained in \Cref{prop: expectation form} in the supplement.
In addition, a general spectral characterisation of gradient-free kernel Stein discrepancy is provided in \Cref{prop: spectral char} of the supplement, inspired by the recent work of \citet{wynne2022spectral}.
Note that the scale of \Cref{eq: gfksd explicit} does not matter when one is interested in the relative performance of different $\pi \in \mathcal{P}(\mathbb{R}^d)$ as approximations of a \emph{fixed} target $p \in \mathcal{P}(\mathbb{R}^d)$.
In this sense, gradient-free kernel Stein discrepancy may be employed with $\tilde{p}$ in place of $p$, where $p \propto \tilde{p} / Z$ and $Z$ is an intractable normalisation constant.
This feature makes gradient-free Stein discrepancy applicable to problems of posterior approximation, and will be exploited for both sampling and variational inference in \Cref{sec: applications}.
On the other hand, gradient-free kernel Stein discrepancy is not applicable to problems in which the target distribution $p_\theta$ involves a parameter $\theta$, such as estimation and composite hypothesis testing, since then the normalisation term $Z_\theta$ cannot be treated as constant.

Going beyond the discrepancies discussed in \Cref{sec: intro}, several non-canonical discrepancies have recently been proposed based on Stein's method, for example \emph{sliced} \citep{gong2020sliced,gong2021active}, \emph{stochastic} \citep{gorham2020stochastic}, and \emph{conditional} \citep{singhal2019kernelized} Stein discrepancies (in all cases derivatives of $p$ are required).
However, only a subset of these discrepancies have been shown to enjoy important guarantees of convergence detection and control.
Convergence control is critical for the posterior approximation task considered in this work, since this guarantees that minimisation of Stein discrepancy will produce a consistent approximation of the posterior target.
The aim of the next section is to establish such guarantees for gradient-free kernel Stein discrepancy.

\subsection{Convergence Detection and Control} \label{subsec: theory}

The canonical kernel Stein discrepancy benefits from theoretical guarantees of convergence detection and control \citep{gorham2017measuring}.
The aim of this section is to establish analogous guarantees for gradient-free kernel Stein discrepancy.

First, the convergence detection properties of gradient-free kernel Stein discrepancy are considered.
The mode of convergence detected by gradient-free kernel Stein discrepancy is not stringent but is non-standard.
To set the scene, for a Lipschitz function $f : \mathbb{R}^d \rightarrow \mathbb{R}^d$, denote its Lipschitz constant $L(f) := \sup_{x \neq y} \|f(x) - f(y)\| / \|x-y\|$.
Then, for measurable $g : \mathbb{R}^d \rightarrow \mathbb{R}$, we denote the \emph{tilted} Wasserstein distance \citep{huggins2018random} as
\begin{align}
\mathrm{W}_1(\pi,p;g) := \sup_{L(f) \leq 1} \left| \int fg \; \mathrm{d}\pi - \int fg \; \mathrm{d}p \right| \label{eq: W1}
\end{align}
whenever this expression is well-defined.
Note that the standard 1-Wasserstein distance $\mathrm{W}_1(\pi,p)$ is recovered when $g = 1$.
There is no dominance relation between $\mathrm{W}_1(\cdot,\cdot ; g)$ for different $g$; the topologies they induce are different.

\begin{theorem}[Convergence Detection] \label{thm: convergence detection}
Let $p, q \in \mathcal{P}(\mathbb{R}^d)$ with $q \ll p$, $\nabla \log q$ Lipschitz and $\int \|\nabla \log q\|^2 \; \mathrm{d}q < \infty$.

Assume there is an $\alpha > 1$ such that the sequence $(\pi_n)_{n \in \mathbb{N}} \subset \mathcal{P}(\mathbb{R}^d)$ satisfies
$\int (q/p)^\alpha \mathrm{d}\pi_n \in (0, \infty)$, $\int \|\nabla \log q\|^{\alpha / (\alpha - 1)} \mathrm{d}\pi_n < \infty$, $\int \|\nabla \log q\|^{\alpha / (\alpha - 1)} (q/p) \; \mathrm{d}\pi_n < \infty$, and $\int f q / p \; \mathrm{d}\pi_n < \infty$ with $f(x) = \|x\|$, for each $n \in \mathbb{N}$.

Let $k$ be a kernel such that each of $k$, $(\partial_{i} \otimes \partial_{i}) k$ and $(\partial_{i}\partial_{j} \otimes \partial_{i}\partial_{j}) k$ exist, are continuous, and are bounded, for $i,j \in \{1,\dots,d\}$.
Then we have that
\begin{align*}
\mathrm{W}_1(\pi_n,p;q/p) \rightarrow 0 \quad \Rightarrow \quad \mathrm{D}_{p,q}(\pi_n) \rightarrow 0 .
\end{align*}
\end{theorem}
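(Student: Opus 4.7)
My plan is to reduce the implication to a Lipschitz-style bound on the Stein integrand, in order to invoke the definition of tilted Wasserstein.

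First, I would apply \Cref{prop: int to 0} to each $h$ in the unit ball of $\mathcal{H}(k)^d$. The kernel assumptions and the reproducing property ensure that every such $h$ has bounded components and bounded first derivatives, and the hypotheses on $q$ (integrability of $\|\nabla \log q\|$ under $q$, with $q$ a probability density) furnish the remaining preconditions; hence $\int \mathcal{S}_{p,q} h \, dp = 0$. Factoring $\mathcal{S}_{p,q} h = (q/p) \psi_h$ with $\psi_h := \nabla \cdot h + h \cdot \nabla \log q$ then gives
\[
\mathrm{D}_{p,q}(\pi_n) \;=\; \sup_{\|h\|_{\mathcal{H}(k)^d} \le 1} \int \psi_h \cdot (q/p) \, d(\pi_n - p).
\]

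Next, if one can establish a constant $C = C(k, q)$, independent of $h$ and $n$, such that $L(\psi_h) \le C \|h\|_{\mathcal{H}(k)^d}$, then by the definition of $\mathrm{W}_1(\cdot, \cdot; q/p)$,
\[
\Bigl| \int \psi_h (q/p) \, d(\pi_n - p) \Bigr| \;\le\; C \|h\|_{\mathcal{H}(k)^d} \, \mathrm{W}_1(\pi_n, p; q/p),
\]
and taking the supremum over unit-norm $h$ together with the hypothesis $\mathrm{W}_1(\pi_n, p; q/p) \to 0$ yields the conclusion. The divergence term $\nabla \cdot h$ is Lipschitz with the required constant because the assumed boundedness of $(\partial_i \partial_j \otimes \partial_i \partial_j) k$, combined with the reproducing identity for second partials, uniformly bounds the second derivatives of the components of $h$, hence the gradient of $\nabla \cdot h$.

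The principal obstacle is the score term $h \cdot \nabla \log q$. Here $h$ is bounded and Lipschitz (using $k$ and $(\partial_i \otimes \partial_i) k$ bounded), but $\nabla \log q$ is only assumed Lipschitz, hence possibly unbounded, so the product is not globally Lipschitz and the naive bound $L(h \cdot \nabla \log q) \le \|h\|_\infty L(\nabla \log q) + L(h) \|\nabla \log q\|_\infty$ need not have a finite right-hand side. I would address this by decomposing $\nabla \log q(x) = \nabla \log q(x_0) + [\nabla \log q(x) - \nabla \log q(x_0)]$ about a fixed reference point $x_0$: the first summand contributes a bounded Lipschitz test function to which tilted Wasserstein applies directly, while the second has at most linear growth and should be handled by combining $\mathrm{W}_1(\pi_n, p; q/p) \to 0$ with the moment hypotheses $\int \|\nabla \log q\|^{\alpha/(\alpha-1)} (q/p) \, d\pi_n < \infty$ and $\int \|x\| (q/p) \, d\pi_n < \infty$---likely via a truncation-and-approximation argument in which the linearly-growing piece is approximated by bounded Lipschitz functions and the tail error is controlled uniformly in $h$ by the moment bounds and Hölder's inequality. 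Assembling these contributions and passing to the supremum over the unit ball gives $\mathrm{D}_{p,q}(\pi_n) \to 0$. The technical crux throughout is the uniform-in-$h$ treatment of unbounded $\nabla \log q$, for which the integrability conditions on $\|\nabla \log q\|$ and $\|x\|$ weighted by $q/p$ in the theorem's hypotheses appear tailor-made.
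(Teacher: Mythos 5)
Your overall strategy is genuinely different from the paper's. The paper does not estimate $\sup_h \int \mathcal{S}_{p,q}h\,\mathrm{d}(\pi_n-p)$ directly; instead it re-casts the gradient-free discrepancy as a canonical one via \Cref{prop: reparam ksd}, writing $\mathrm{D}_{p,q}(\pi_n) = Z_n \mathrm{D}_{q,q}(\bar{\pi}_n)$ with $Z_n = \int (q/p)\,\mathrm{d}\pi_n$ and $\bar{\pi}_n = (q\pi_n)/(pZ_n)$, shows $Z_n \to 1$ and $\mathrm{W}_1(\bar{\pi}_n,q)\to 0$ by a triangle-inequality argument (this is where the hypothesis $\int \|x\|\,(q/p)\,\mathrm{d}\pi_n<\infty$ enters), and then invokes Proposition 9 of \citet{gorham2017measuring}, which states that the canonical kernel Stein discrepancy with target $q$ detects $1$-Wasserstein convergence under exactly the hypotheses on $q$ and $k$ assumed here. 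The entire difficulty you isolate --- the product of a bounded RKHS function with the unbounded Lipschitz score $\nabla\log q$ --- is thereby delegated to that cited result rather than re-proved.

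The gap in your plan sits at precisely that crux. For the truncation step you need the tail contribution $\int_{\|x\|>R}(1+\|x\|)\,(q/p)\,\mathrm{d}\pi_n$ to be small \emph{uniformly in $n$}, so that you may send $n\to\infty$ at fixed $R$ and then $R\to\infty$; the hypotheses only give finiteness of $\int\|x\|\,(q/p)\,\mathrm{d}\pi_n$ and of the score moments \emph{for each} $n$, with no uniform bound, and H\"older's inequality applied to these per-$n$ quantities cannot manufacture one. The missing ingredient is that asymptotic uniform integrability of $\|x\|\,(q/p)$ under $\pi_n$ must be extracted from the convergence hypothesis itself: tilted-$\mathrm{W}_1$ convergence yields convergence of the tilted first moments to $\int\|x\|\,\mathrm{d}q$, whence the required uniform tail control. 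Without this observation your argument does not close, and it is essentially this packaging of "Lipschitz test functions with unbounded linear-growth multipliers" that the paper's reduction to Gorham and Mackey's Proposition 9 buys for free. A second, smaller issue: you justify $\int\mathcal{S}_{p,q}h\,\mathrm{d}p=0$ by \Cref{prop: int to 0}, but that proposition additionally requires $\|x\|^{d-1}q(x)\to 0$ as $\|x\|\to\infty$, which is not among the theorem's hypotheses and is not obviously implied by them; the paper's route never needs the mean-zero identity as a separate ingredient. If both points are repaired, your direct argument would amount to re-deriving Gorham and Mackey's Proposition 9 in the tilted setting, which is workable but strictly more labour than the reparametrisation.
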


\noindent
Thus the convergence of $\pi_n$ to $p$, in the sense of \Cref{eq: W1} with weighting function $g = q/p$, is \emph{detected} by the gradient-free kernel Stein discrepancy $\mathrm{D}_{p,q}$.

Despite being a natural generalisation of the canonical kernel Stein discrepancy, this gradient-free kernel Stein discrepancy does \emph{not} in general provide weak convergence control in the equivalent theoretical context.
Indeed, \citet{gorham2017measuring} established positive results on convergence control for distributions in $\mathcal{Q}(\mathbb{R}^d)$, the set of probability distributions on $\mathbb{R}^d$ with positive density function $q : \mathbb{R}^d \rightarrow (0,\infty)$ for which $\nabla \log q$ is Lipschitz and $q$ is \emph{distantly dissipative}, meaning that
\begin{align*}
    \liminf_{r \rightarrow \infty} \; \inf \left\{ - \frac{\langle \nabla \log q(x) - \nabla \log q(y) , x - y \rangle}{\|x-y\|^2} : \|x-y\| = r \right\} > 0 .
\end{align*}
Under the equivalent assumptions, convergence control fails for gradient-free kernel Stein discrepancy in general:

\begin{proposition}[Convergence Control Fails in General] \label{prop: no convergence control}
Let $k$ be a (non-identically zero) radial kernel of the form $k(x,y) = \phi(x-y)$ for some twice continuously differentiable $\phi : \mathbb{R}^d \rightarrow \mathbb{R}$, for which the preconditions of \Cref{prop: L1} are satisfied.
Then there exist $p \in \mathcal{P}(\mathbb{R}^d)$, $q \in \mathcal{Q}(\mathbb{R}^d)$ and a sequence $(\pi_n)_{n \in \mathbb{N}} \subset \mathcal{P}(\mathbb{R}^d)$, also satisfying the preconditions of \Cref{prop: L1}, such that $\mathrm{D}_{p,q}(\pi_n) \rightarrow 0$ and yet $\pi_n \nweak p$.
\end{proposition}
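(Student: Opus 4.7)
The plan is to construct an explicit atomic counterexample that escapes to infinity, using the fact that the weight $q/p$ appearing in the gradient-free Stein operator can be made to decay arbitrarily fast. Specifically, I would take $q$ to be the standard Gaussian density on $\mathbb{R}^d$, which lies in $\mathcal{Q}(\mathbb{R}^d)$ (since $\nabla \log q(x)=-x$ is globally Lipschitz and $q$ is uniformly, hence distantly, dissipative). I would then take $p$ to be a much heavier-tailed positive density, for instance $p(x)\propto(1+\|x\|^2)^{-\gamma}$ with $\gamma>d/2$, so that $q\ll p$ and $(q/p)(x)$ decays faster than any polynomial as $\|x\|\to\infty$. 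Finally, set $\pi_n=\delta_{x_n}$ with $\|x_n\|\to\infty$ (e.g.\ $x_n=ne_1$). Because $p$ is absolutely continuous with respect to Lebesgue measure, $\pi_n$ loses mass to infinity and thus $\pi_n\nweak p$; for instance, any bounded continuous $f$ with bounded support eventually satisfies $\int f\,\mathrm{d}\pi_n=0\neq\int f\,\mathrm{d}p$ for appropriately chosen $f$. The preconditions of \Cref{prop: L1} hold trivially for atomic $\pi_n$, since for any $\alpha>1$ we have $\int(q/p)^\alpha\,\mathrm{d}\pi_n=(q/p)^\alpha(x_n)<\infty$ and $\int\|\nabla\log q\|^{\alpha/(\alpha-1)}\,\mathrm{d}\pi_n=\|x_n\|^{\alpha/(\alpha-1)}<\infty$.

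Next I would compute $\mathrm{D}_{p,q}(\delta_{x_n})^2$ in closed form. By \Cref{prop: L1}, the functional $h\mapsto\mathcal{S}_{p,q}h(x_n)$ is a bounded linear functional on $\mathcal{H}(k)^d$, so it possesses a Riesz representer $\xi_{x_n}\in\mathcal{H}(k)^d$ and $\mathrm{D}_{p,q}(\delta_{x_n})^2=\|\xi_{x_n}\|_{\mathcal{H}(k)^d}^2$. Applying the reproducing property to $\mathcal{S}_{p,q}h(x)=\tfrac{q(x)}{p(x)}\sum_i[\partial_ih_i(x)+h_i(x)\partial_i\log q(x)]$ and using the radial structure $k(x,y)=\phi(x-y)$, the first-order partial derivatives of $\phi$ vanish at $0$ (radial symmetry), and a direct calculation gives
$$\mathrm{D}_{p,q}(\delta_x)^2 = \frac{q(x)^2}{p(x)^2}\Bigl[-\operatorname{tr}\bigl(\nabla^2\phi(0)\bigr) + \phi(0)\,\|\nabla\log q(x)\|^2\Bigr],$$
which reduces to the IMQ instance of \Cref{prop: expectation} when $\phi(u)=(\sigma^2+\|u\|^2)^{-\beta}$. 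Since $\phi$ is twice continuously differentiable, the bracket is a finite constant plus a constant multiple of $\|\nabla\log q(x)\|^2=\|x\|^2$, hence grows at most quadratically in $\|x\|$. Combining with the super-polynomial decay of $(q/p)^2(x_n)$ yields $\mathrm{D}_{p,q}(\delta_{x_n})\to 0$, completing the counterexample.

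The main obstacle I anticipate is justifying the closed-form expression for $\mathrm{D}_{p,q}(\delta_x)^2$ for a general twice continuously differentiable radial $\phi$, rather than just for the IMQ kernel already treated in \Cref{prop: expectation}: this requires carefully expanding $(\mathcal{S}_{p,q}\otimes\mathcal{S}_{p,q})k$ and specialising to the diagonal, tracking which partial derivatives of $\phi$ survive at $0$. The calculation is routine but somewhat tedious, and the boundedness of $\phi$, $\nabla\phi$ and $\nabla^2\phi$ at $0$ (inherited from the preconditions of \Cref{prop: L1}) is exactly what guarantees that both coefficients in the bracket are finite. Everything else reduces to comparing exponential decay of Gaussian ratios with polynomial growth, which is elementary.
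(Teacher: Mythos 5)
Your proposal is correct and follows essentially the same route as the paper: the paper also sends point masses $\pi_n=\delta(ne_1)$ to infinity and uses the identity $k_q(x,x)=-\Delta\phi(0)+\phi(0)\|\nabla\log q(x)\|^2$ (valid since $\phi$ is even, so $\nabla\phi(0)=0$) to show that the quadratic growth of the diagonal term is killed by the Gaussian-type decay of $(q/p)^2$. The only cosmetic differences are that the paper takes $p=\mathcal{N}(0,\sigma^2)$ with $\sigma>1$ rather than a polynomial-tailed $p$, and bounds $\mathrm{D}_{p,q}(\pi)$ for general $\pi$ via Cauchy--Schwarz before specialising, whereas you evaluate the point-mass case exactly.
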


\noindent
The purpose of presenting \Cref{prop: no convergence control} is to highlight that gradient-free kernel Stein discrepancy is not a trivial extension of canonical kernel Stein discrepancy; it requires a bespoke treatment.
This is provided in \Cref{thm: convergence control}, next.
Indeed, to ensure that gradient-free kernel Stein discrepancy provides convergence control, additional condition on the tails of $q$ are required:

\begin{theorem}[Convergence Control] \label{thm: convergence control}
Let $p \in \mathcal{P}(\mathbb{R}^d)$, $q \in \mathcal{Q}(\mathbb{R}^d)$ be such that $p$ is continuous and $\inf_{x \in \mathbb{R}^d} q(x)/p(x) > 0$.

Assume there is an $\alpha > 1$ such that the sequence $(\pi_n)_{n \in \mathbb{N}} \subset \mathcal{P}(\mathbb{R}^d)$ satisfies $\int (q/p)^\alpha \mathrm{d}\pi_n < \infty$ and $\int \|\nabla \log q\|^{\alpha / (\alpha - 1)} (q/p) \; \mathrm{d}\pi_n < \infty$, for each $n \in \mathbb{N}$.

Let $k$ be the inverse multi-quadric kernel in \Cref{eq: imq}.
Then we have that
\begin{align*}
\mathrm{D}_{p,q}(\pi_n) \rightarrow 0 \quad \Rightarrow \quad \pi_n \weak p .
\end{align*}
\end{theorem}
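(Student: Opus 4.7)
The strategy is to reduce the gradient-free kernel Stein discrepancy to the canonical (Langevin) kernel Stein discrepancy with target $q$, applied to a reweighted probability measure, and then invoke the convergence control theorem of \citet{gorham2017measuring}. The key algebraic identity is $\mathcal{S}_{p,q} h = (q/p)\,\mathcal{A}_q h$ with $\mathcal{A}_q h := \nabla \cdot h + h \cdot \nabla \log q$ the canonical Langevin operator for $q$. Setting $d\mu_n := (q/p)\,d\pi_n$, $Z_n := \mu_n(\mathbb{R}^d) = \int (q/p)\,d\pi_n$, and $\tilde{\mu}_n := \mu_n/Z_n \in \mathcal{P}(\mathbb{R}^d)$, this identity rewrites as $\mathrm{D}_{p,q}(\pi_n) = Z_n \cdot \mathrm{KSD}_q(\tilde{\mu}_n)$, where $\mathrm{KSD}_q$ denotes the canonical kernel Stein discrepancy of \citet{gorham2017measuring} with target $q$ and the same inverse multi-quadric kernel. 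Since $q \in \mathcal{Q}(\mathbb{R}^d)$ and $k$ is IMQ, that theorem's convergence control is applicable to $\tilde{\mu}_n$ with target $q$.

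The heart of the argument is to establish uniform two-sided bounds on $Z_n$. The lower bound $Z_n \geq \inf(q/p) > 0$ is immediate. For a uniform upper bound I argue by contradiction: if $Z_n \to \infty$ along some subsequence, then $\mathrm{KSD}_q(\tilde{\mu}_n) = \mathrm{D}_{p,q}(\pi_n)/Z_n \to 0$ along that subsequence, and \citet{gorham2017measuring} gives $\tilde{\mu}_n \weak q$. Choose any nonnegative bounded continuous $f$ with compact support $K$ and $\int f\,dq > 0$ (possible since the hypotheses force $q > 0$ everywhere and $q$ is continuous). The hypotheses also force $p > 0$ everywhere (from $q > 0$ combined with $\inf(q/p) > 0$), and with $p,q$ both continuous, $q/p$ is bounded on $K$. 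Hence
$$
0 < \int f \, dq = \lim_{n \to \infty} \int f \, d\tilde{\mu}_n = \lim_{n \to \infty} Z_n^{-1} \int f (q/p) \, d\pi_n \leq \lim_{n \to \infty} Z_n^{-1} \|f\|_\infty \sup_K (q/p) = 0,
$$
a contradiction.

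Given the two-sided bounds on $Z_n$, we obtain $\mathrm{KSD}_q(\tilde{\mu}_n) \leq \mathrm{D}_{p,q}(\pi_n)/\inf(q/p) \to 0$ for the full sequence, hence $\tilde{\mu}_n \weak q$. Along any subsequence where $Z_n \to Z_\infty \in (0,\infty)$, it follows that $\int \varphi \, d\mu_n = Z_n \int \varphi\, d\tilde{\mu}_n \to Z_\infty \int \varphi\, dq$ for every bounded continuous $\varphi$. Because $p/q$ is continuous and bounded above by $1/\inf(q/p)$, the function $f \cdot (p/q)$ is bounded continuous for every bounded continuous $f$, so the change of measure $d\pi_n = (p/q)\,d\mu_n$ yields
$$
\int f \, d\pi_n = \int f (p/q) \, d\mu_n \;\longrightarrow\; Z_\infty \int f (p/q) \, dq = Z_\infty \int f \, dp.
$$
Taking $f \equiv 1$ identifies $Z_\infty = 1$, so $\pi_n \weak p$ along the chosen subsubsequence; a standard subsubsequence argument then extends weak convergence to the full sequence.

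The principal technical obstacle is the uniform upper bound on $Z_n$, which does not follow from the per-$n$ moment hypotheses and must be extracted via the contradiction above. That argument crucially exploits both the assumption $\inf(q/p) > 0$ (to convert $\mathrm{D}_{p,q}(\pi_n) \to 0$ into $\mathrm{KSD}_q(\tilde{\mu}_n) \to 0$ on the contradiction branch) and the joint positivity and continuity of $p$ and $q$ (to bound $q/p$ on compact sets). A secondary, routine task is verifying that $\tilde{\mu}_n$ meets the moment preconditions required to apply \citet{gorham2017measuring}'s canonical convergence control; this follows from the given hypothesis $\int \|\nabla \log q\|^{\alpha/(\alpha-1)} (q/p) \, d\pi_n < \infty$ together with the lower bound $Z_n \geq \inf(q/p)$.
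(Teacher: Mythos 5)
Your proposal is correct and follows essentially the same route as the paper: reduce to the canonical kernel Stein discrepancy of the reweighted measure $\bar{\pi}_n = (q/p)\pi_n/Z_n$, use $Z_n \geq \inf_{x} q(x)/p(x) > 0$ together with the Gorham--Mackey convergence control for $q \in \mathcal{Q}(\mathbb{R}^d)$ with the inverse multi-quadric kernel, and then transfer $\bar{\pi}_n \weak q$ back to $\pi_n \weak p$ via test functions of the form $f\,p/q$. The only substantive difference is that your uniform upper bound on $Z_n$ (and the attendant subsequence extraction) is an unnecessary detour: the paper obtains $Z_n \to 1$ directly by noting $Z_n^{-1}\int f\,\mathrm{d}\pi_n = \int f (p/q)\,\mathrm{d}\bar{\pi}_n \to \int f\,\mathrm{d}p$ for all bounded continuous $f$ and taking $f \equiv 1$.
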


\noindent
The proof of \Cref{thm: convergence control} is based on carefully re-casting gradient-free kernel Stein discrepancy between $\pi$ and $p$ as a canonical kernel Stein discrepancy between $q$ and a transformed distribution $\bar{\pi}$ (see \Cref{prop: reparam ksd} in the supplement), then appealing to the analysis of \citet{gorham2017measuring}.
Suitable choices for $q$ are considered in \Cref{sec: empirical assessment}.

\begin{remark}
Compared to the analysis of canonical kernel Stein discrepancy in \citet{gorham2017measuring}, the distant dissipativity condition now appears on $q$ (a degree of freedom), rather than on $p$ (a distribution determined by the task at hand), offering a realistic opportunity for this condition to be verified.
\end{remark}

\begin{remark}[Related Work]
\label{rem: Gorham19}

Convergence control was established for discrepancies based on general classes of Stein operator in earlier work, but the required assumptions are too stringent when applied in our context.
In particular, to use \citet{huggins2018random} it is required that the gradient $\nabla \log (q / p)$ is bounded\footnote{To see this, take $A = q/p$ in Theorem 3.2 of \citet{huggins2018random}.}, while to use \citet{gorham2019measuring} it is required that $q / p$ is Lipschitz\footnote{To see this, take $m = (q/p) I$ in Theorem 7 and Proposition 8 of \citet{gorham2019measuring}.}.
In our context, where $q$ must be specified in ignorance of $p$, such conditions, which require that $q$ is almost as light as $p$ in the tail, cannot be guaranteed to hold.

The present paper therefore instead contributes novel analysis for the regime where $q$ may be appreciably heavier than $p$ in the tail.
\end{remark}

This completes our theoretical assessment, but the practical performance of gradient-free kernel Stein discrepancy remains to be assessed.
Suitable choices for both $q$ and the kernel parameters $\sigma$ and $\beta$ are proposed and investigated in \Cref{sec: empirical assessment}, and practical demonstrations of gradient-free kernel Stein discrepancy are presented in \Cref{sec: applications}.

\section{Implementation Detail}
\label{sec: empirical assessment}

The purpose of this section is to empirically explore the effect of varying 
$q$, $\sigma$ and $\beta$, aiming to arrive at reasonable default settings.
In the absence of an application-specific optimality criterion, we aim to select values that perform well (in a sense to be specified) over a range of scenarios that may be encountered.
Here, to assess performance several sequences $(\pi_n)_{n \in \mathbb{N}}$ are considered, some of which converge to a specified limit $p$ and the rest of which converge to an alternative Gaussian target; see \Cref{fig: sequences}(a).
An effective discrepancy should clearly indicate which of these sequences are convergent and which are not.
On this basis, recommendations for $q$ are considered in \Cref{subsec: choice q}, and recommendations for $\sigma$ and $b$ in \Cref{subsec: choice sb}.
Of course, we cannot expect default settings to perform universally well, so in \Cref{subsec: failure} we highlight scenarios where our recommended defaults may fail.

\texttt{Python} code to reproduce the experiments reported below can be downloaded at \url{https://github.com/MatthewAlexanderFisher/SteinTorch}.

\begin{figure}[t!]
    \centering
    \begin{subfigure}[b]{0.4\linewidth}
    \includegraphics[width=\textwidth]{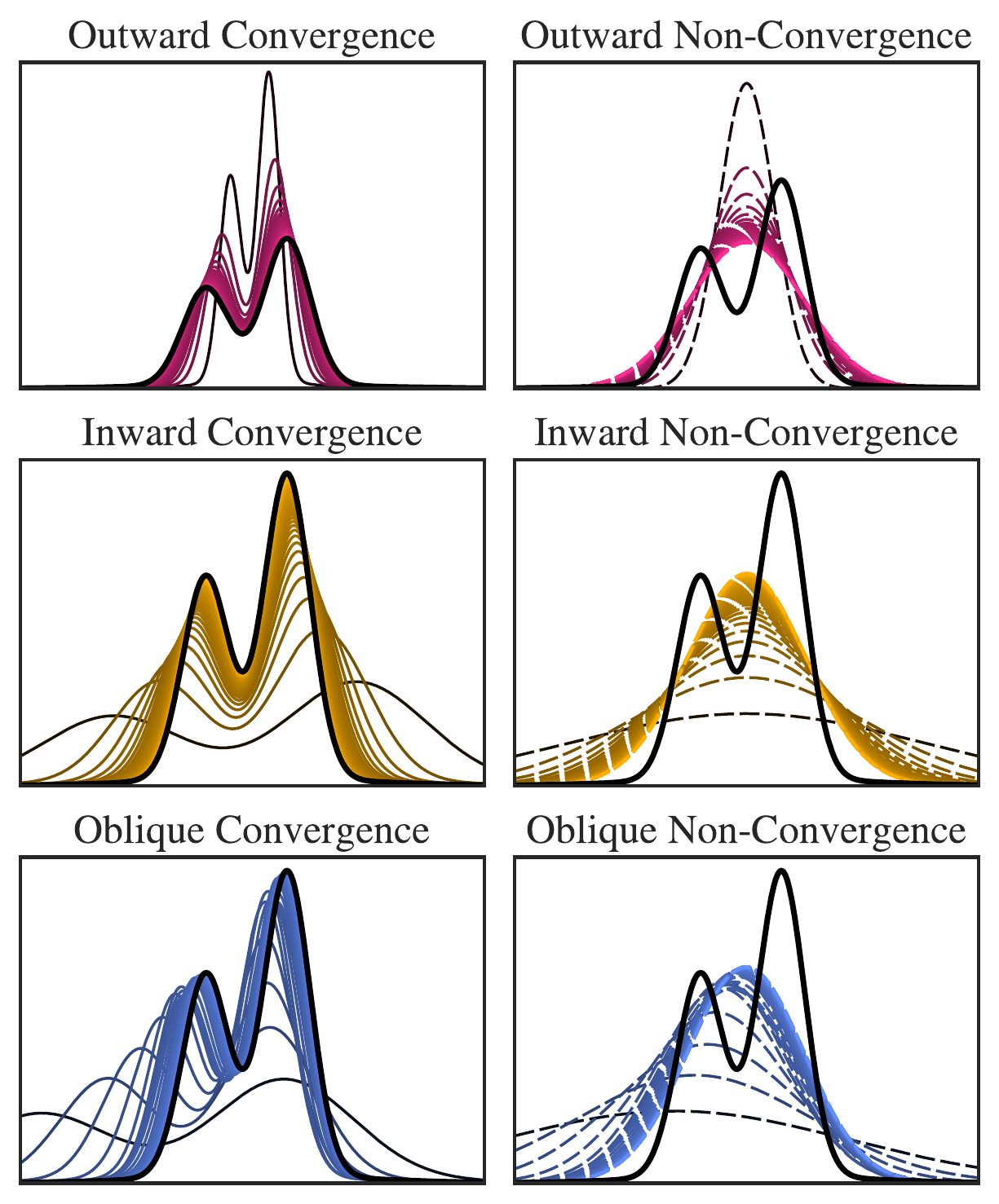} 
    \vspace{2pt}
    \caption{}
    \end{subfigure}
    \begin{subfigure}[b]{0.52\linewidth}
    \includegraphics[width=\textwidth]{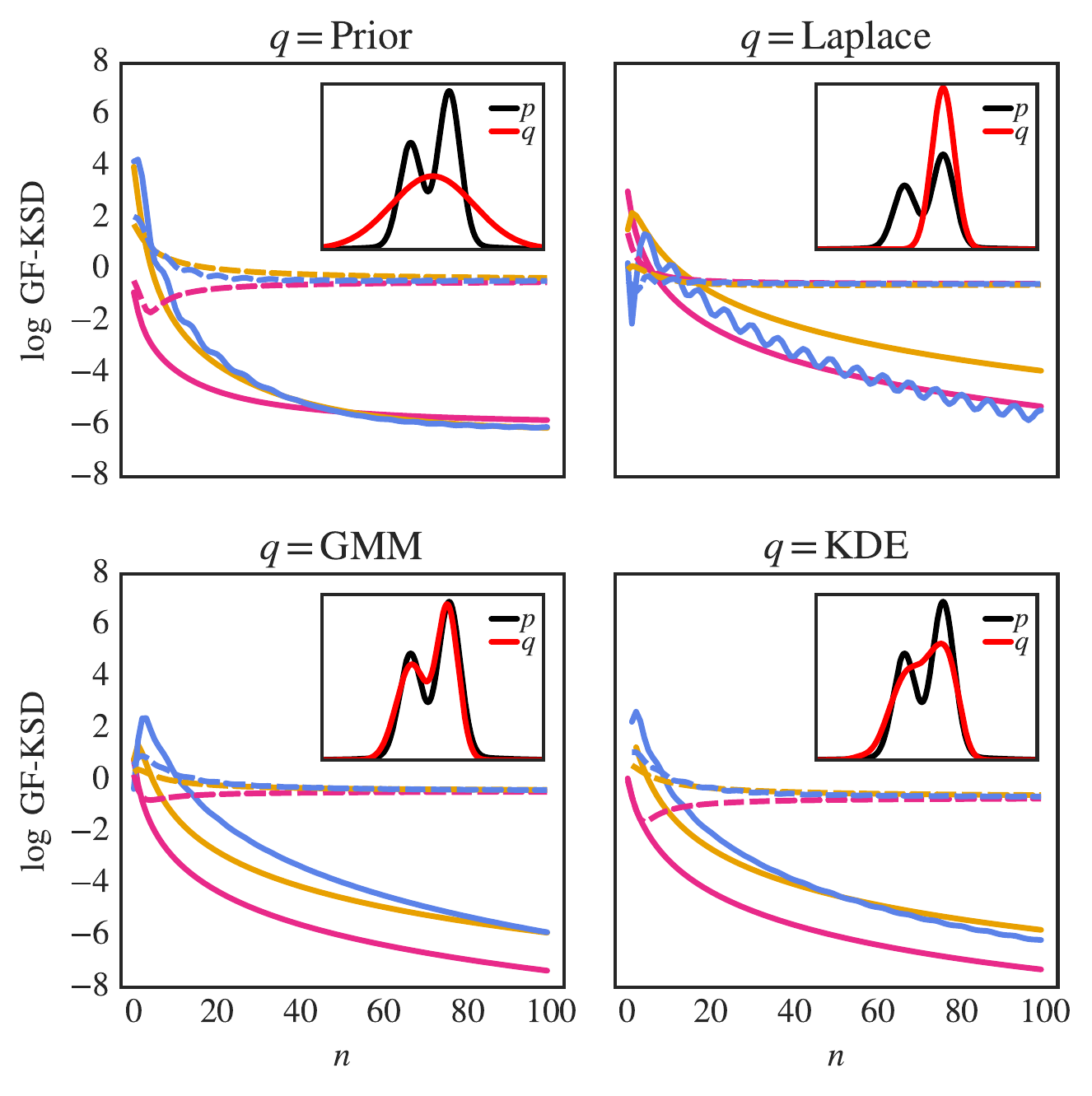}
    \caption{}
    \end{subfigure}
    \caption{Empirical assessment of gradient-free kernel Stein discrepancy.
    (a) Test sequences $(\pi_n)_{n \in \mathbb{N}}$, defined in \Cref{subsec: detection convergence}.
    The first column displays sequences (solid) that converge to the distributional target $p$ (black), while the second column displays sequences (dashed) which converge instead to a fixed Gaussian target. 
    (b) Performance of gradient-free kernel Stein discrepancy, when approaches to selecting $q$ (described in the main text) are employed.
    The colour and style of each curve in (b) indicates which of the sequences in (a) is being considered.
    [Here we fixed the kernel parameters $\sigma = 1$ and $\beta = 1/2$.]
    }
    \label{fig: sequences}
\end{figure}

\subsection{Choice of $q$}
\label{subsec: choice q}

In what follows we cast $q$ as an approximation of $p$, aiming to inherit the desirable performance of canonical kernel Stein discrepancy for which $q$ and $p$ are equal.
The task to which the discrepancy is being applied will, in practice, constrain the nature and form of the distributions $q$ that can be implemented.
For expository purposes (only), the following qualitatively distinct approaches to choosing $q$ are considered:

\begin{itemize}
    \item \texttt{Prior} \; In settings where $p$ is the posterior distribution in a Bayesian analysis, then selecting $q$ to be the prior distribution automatically ensures that the theoretical condition $\inf_{x \in \mathbb{R}^d} q(x) / p(x) > 0$ is satisfied.
    \item \texttt{Laplace} \; If the target $p$ can be differentiated, albeit at a possibly high computational cost, it may nevertheless be practical to construct a Laplace approximation $q$ to the target \citep[for background on Laplace approximation, see Chapter 13 of][]{gelman2013bayesian}.
    \item \texttt{GMM} \; Provided that (approximate) samples can be generated from the target $p$ (e.g. using a gradient-free Markov chain Monte Carlo method), one could take $q$ to be a Gaussian mixture model fitted to these samples, representing a more flexible alternative to \texttt{Laplace}.
    \item \texttt{KDE} \; Additional flexibility can be obtained by employing a kernel density estimator as a non-parametric alternative to \texttt{GMM}.
\end{itemize}

\noindent
Of course, there is a circularity to \texttt{GMM} and \texttt{KDE} which may render these methods impractical in general.
The specific details of how each of the $q$ were constructed are contained in the code that accompanies this paper, but the resulting $q$ are displayed as insets in \Cref{fig: sequences}(b).
The performance of gradient-free kernel Stein discrepancy with these different choices of $q$ is also displayed in \Cref{fig: sequences}(b).
It was observed that all four choices of $q$ produced a discrepancy that could detect convergence of $\pi_n$ to $p$, though the detection of convergence was less clear for \texttt{Prior} due to slower convergence of the discrepancy to 0 as $n$ was increased.
On the other hand, all approaches were able to clearly detect non-convergence to the target.
That \texttt{Laplace} performed comparably with \texttt{GMM} and \texttt{KDE} was surprising, given that the target $p$ is not well-approximated by a single Gaussian component.
These results are for a specific choice of target $p$, but in \Cref{app: different p} a range of $p$ are considered and similar conclusions are obtained.
\Cref{subsec: failure} explores how challenging $p$ must be before the convergence detection and control properties associated to \texttt{Laplace} fail.

\subsection{Choice of $\sigma$ and $\beta$}
\label{subsec: choice sb}

For the investigation in \Cref{subsec: choice q} the parameters of the inverse multi-quadric kernel \eqref{eq: imq} were fixed to $\sigma = 1$ and $\beta = 1/2$, the latter being the midpoint of the permitted range $\beta \in (0,1)$.
In general, care in the selection of these parameters may be required.  
The parameter $\sigma$ captures the scale of the data, and thus standardisation of the data may be employed to arrive at $\sigma = 1$ as a natural default.
In this paper (with the exception of \Cref{subsec: variational}) the standardisation $x \mapsto C^{-1} x$ was performed, where $C$ is the covariance matrix of the approximating distribution $q$ being used.
In \Cref{app: sigma beta} we reproduce the investigation of \Cref{subsec: choice q} using a range of values for $\sigma$ and $\beta$; these results indicate the performance of gradient-free kernel Stein discrepancy is remarkably insensitive to perturbations around $(\sigma,\beta) = (1,1/2)$.

\subsection{Avoidance of Failure Modes}
\label{subsec: failure}

Gradient-free kernel Stein discrepancy is not a silver bullet, and there are a number of specific failure modes that care may be required to avoid.
The four main failure modes are illustrated in \Cref{fig: failure}.
These are as follows: (a) $q$ is substantially heavier than $p$ in a tail;  (b) $q$ is substantially lighter than $p$ in a tail;  (c) the dimension $d$ is too high; (d) $p$ has well-separated high-probability regions.
Under both (a) and (c), convergence detection can fail, either because theoretical conditions are violated or because the terms $\pi_n$ must be extremely close to $p$ before convergence begins to be detected.
Under (b), the values of gradient-free kernel Stein discrepancy at small $n$ can mislead.
Point (d) is a well-known pathology of all score-based methods; see \citet{wenliang2020blindness}.
These four failure modes inform our recommended usage of gradient-free kernel Stein discrepancy, summarised next.

\begin{figure}[t!]
    \centering
    \begin{subfigure}[b]{0.243\linewidth}
    \includegraphics[width=\textwidth,align=c]{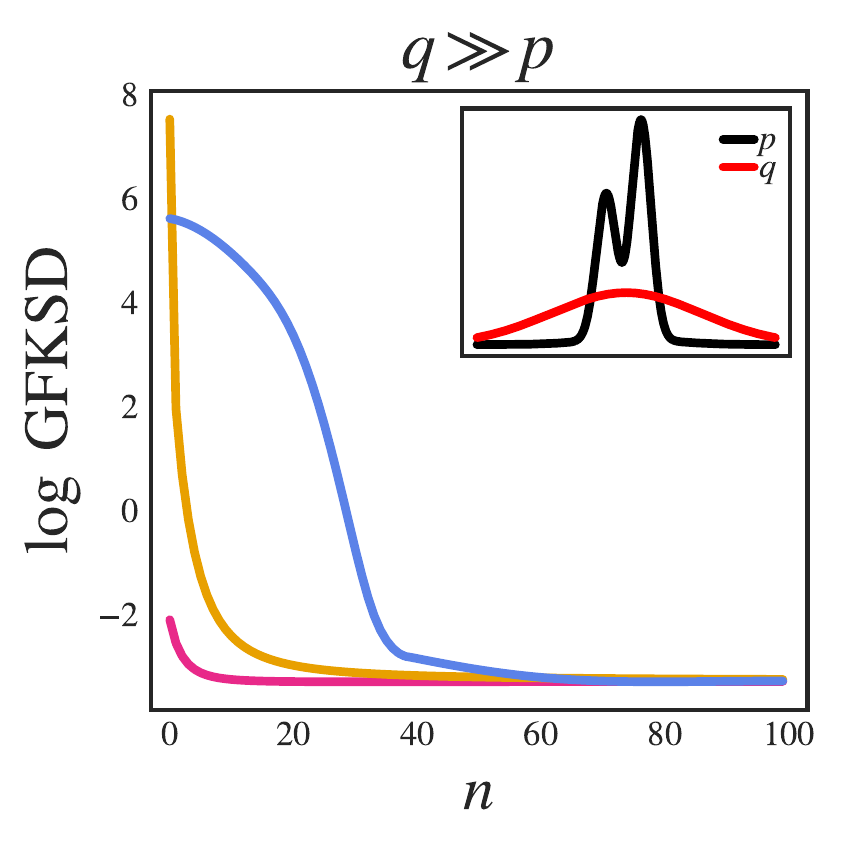} 
    \caption{}
    \label{subfig: fail a}
    \end{subfigure}
        \begin{subfigure}[b]{0.243\linewidth}
    \includegraphics[width=\textwidth,align=c]{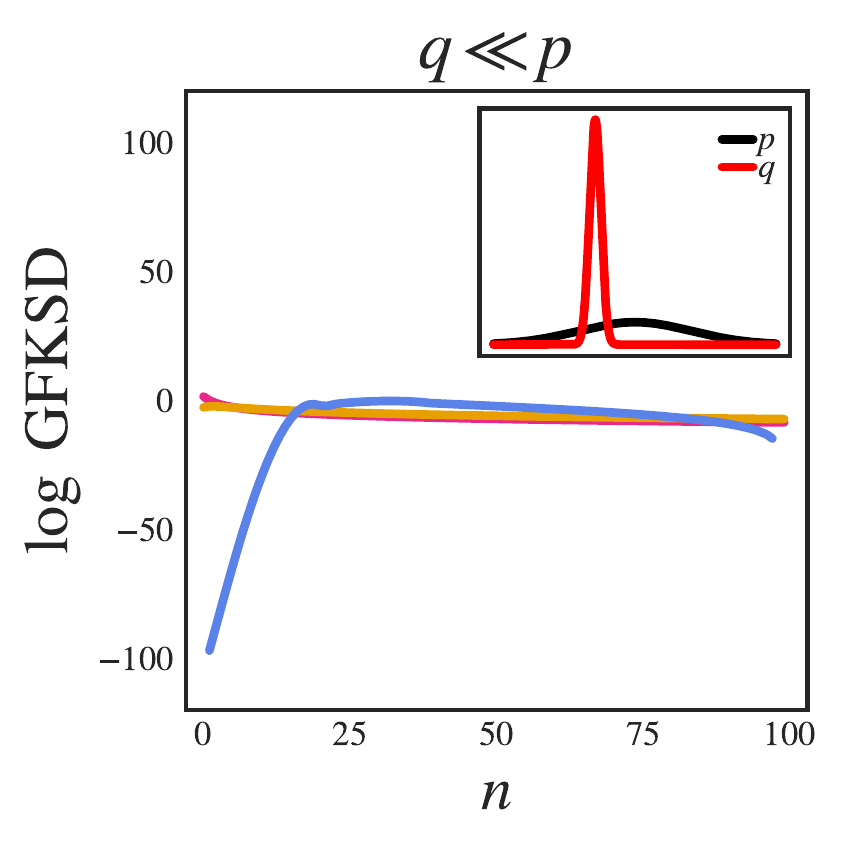} 
    \caption{}
    \label{subfig: fail b}
    \end{subfigure}    
    \begin{subfigure}[b]{0.243\linewidth}
    \includegraphics[width=\textwidth,align=c]{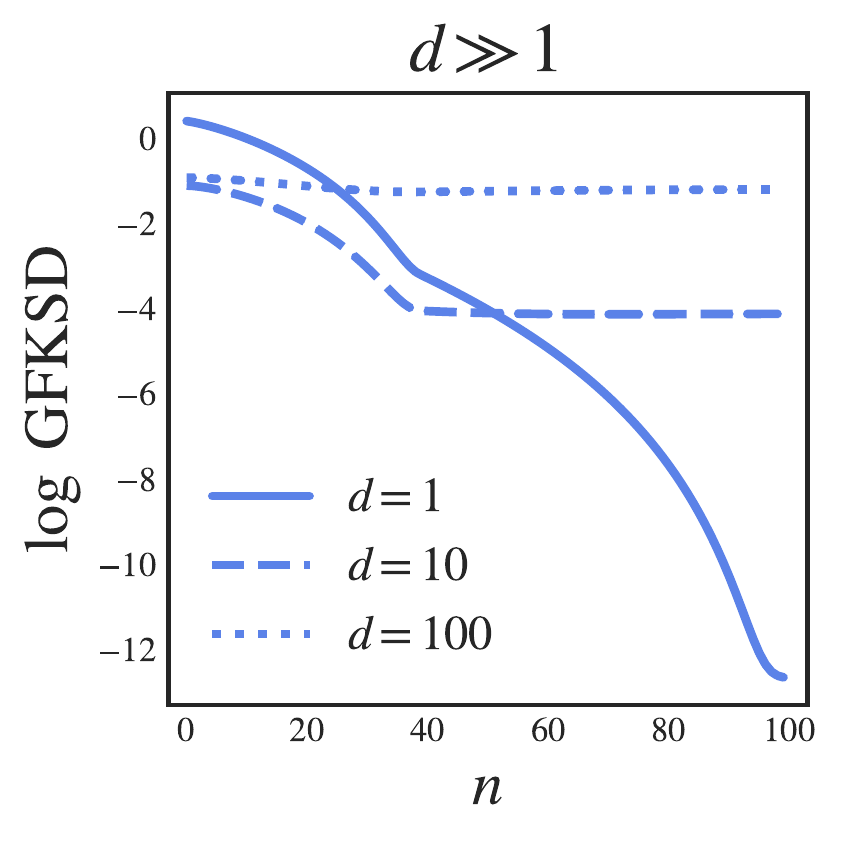} 
    \caption{}
    \label{subfig: fail c}
    \end{subfigure}    
    \begin{subfigure}[b]{0.243\linewidth}
    \includegraphics[width=\textwidth,align=c]{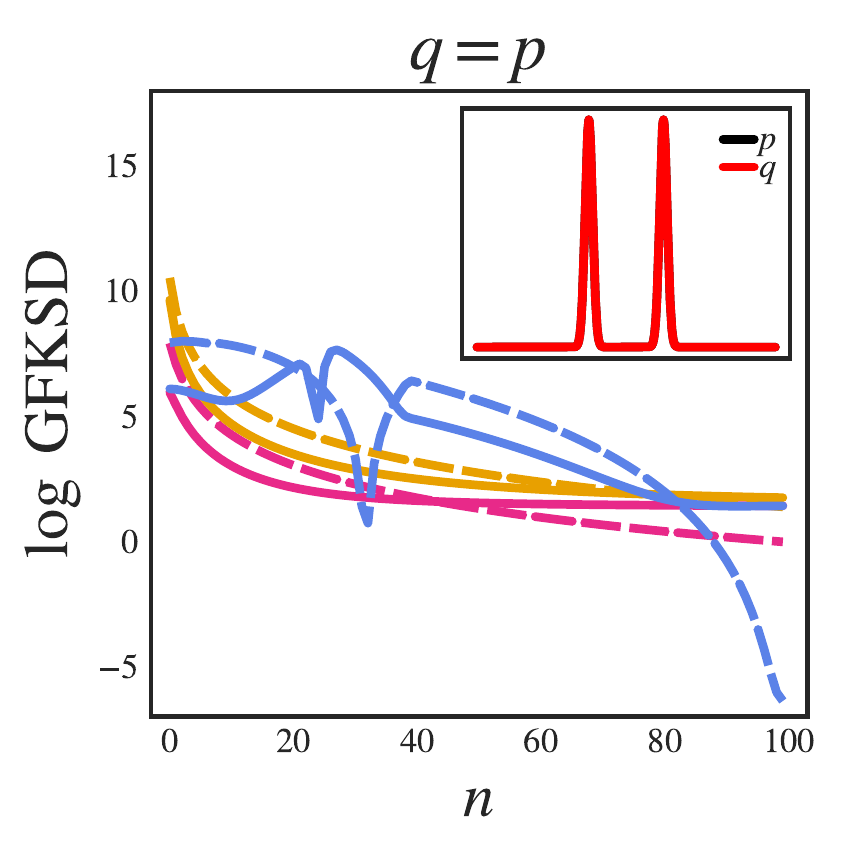} 
    \caption{}
    \label{subfig: fail d}
    \end{subfigure}
    \caption{Failure modes:  (a) $q$ is substantially heavier than $p$ in a tail;  (b) $q$ is substantially lighter than $p$ in a tail;  (c) the dimension $d$ is too high; (d) $p$ has separated high-probability regions. 
    [For each of (a), (b) and (d), the colour and style of the curves refers to the same sense of convergence or non-convergence (outward/inward/oblique) presented in \Cref{fig: sequences}, and we plot the logarithm of the gradient-free kernel Stein discrepancy as a function of the index $n$ of the sequence $(\pi_n)_{n \in \mathbb{N}}$.
    For (c) we consider convergent sequences $(\pi_n)_{n \in \mathbb{N}}$ of distributions on $\mathbb{R}^d$.
    }
    \label{fig: failure}
\end{figure}

\paragraph{Summary of Recommendations}

Based on the investigation just reported, the default settings we recommend are \texttt{Laplace} with $\sigma = 1$ (post-standardisation) and $\beta = 1/2$.
Although not universally applicable, \texttt{Laplace} does not require samples from $p$, and has no settings that must be user-specified.
Thus we recommend the use of \texttt{Laplace} in situations where a Laplace approximation can be justified and computed.
If \texttt{Laplace} not applicable, then one may attempt to construct an approximation $q$ using techniques available for the task at hand (e.g.~in a Bayesian setting, one may obtain $q$ via variational inference, or via inference based on an approximate likelihood).
These recommended settings will be used for the application presented in \Cref{subsec: importance}, next.

\section{Applications}
\label{sec: applications}

To demonstrate potential uses of gradient-free kernel Stein discrepancy, two applications to posterior approximation are now presented; Stein importance sampling (\Cref{subsec: importance}) and Stein variational inference using measure transport (\Cref{subsec: variational}).
In each case, we extend the applicability of existing algorithms to statistical models for which certain derivatives of $p$ are either expensive or non-existent.

\subsection{Gradient-Free Stein Importance Sampling}
\label{subsec: importance}

Stein importance sampling \citep{liu2017black,hodgkinson2020reproducing} operates by first sampling independently from a tractable approximation of the target $p$ and then correcting the bias in the samples so-obtained.
To date, applications of Stein importance sampling have been limited to instances where the statistical model $p$ can be differentiated; our contribution is to remove this requirement.
In what follows we analyse Stein importance sampling in which independent samples $(x_n)_{n \in \mathbb{N}}$ are generated from the same approximating distribution $q$ that is employed within gradient-free kernel Stein discrepancy:

\begin{theorem}[Gradient-Free Stein Importance Sampling]
\label{thm: importance sampling}
Let $p \in \mathcal{P}(\mathbb{R}^d)$, $q \in \mathcal{Q}(\mathbb{R}^d)$ be such that $p$ is continuous and $\inf_{x \in \mathbb{R}^d} q(x) / p(x) > 0$.
Suppose that $\int \exp\{\gamma \|\nabla \log q\|^2 \} \; \mathrm{d}q < \infty$ for some $\gamma > 0$.
Let $k$ be the inverse multi-quadric kernel in \Cref{eq: imq}.
Let $(x_n)_{n \in \mathbb{N}}$ be independent samples from $q$.
To the sample, assign optimal weights
\begin{align*}
    w^* \in \argmin\left\{ \mathrm{D}_{p,q}\left( \sum_{i=1}^n w_i \delta(x_i) \right) : 0 \leq w_1,\dots,w_n, \; w_1 + \dots + w_n = 1 \right\} .
\end{align*}
Then $\pi_n := \sum_{i=1}^n w_i^* \delta(x_i)$ satisfies $\pi_n \weak p$ almost surely as $n \rightarrow \infty$.
\end{theorem}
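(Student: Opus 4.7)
The plan is to apply \Cref{thm: convergence control}, whose moment preconditions on the approximating sequence are automatically satisfied because each $\pi_n$ is supported on finitely many points. Thus it suffices to establish that $\mathrm{D}_{p,q}(\pi_n) \rightarrow 0$ almost surely, whereupon $\pi_n \weak p$ almost surely. By the optimality of $w^*$, this will follow once we exhibit \emph{any} single sequence of feasible weight vectors for which the discrepancy vanishes in the limit.

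For the comparator I would take the self-normalised importance-sampling weights
\[
\tilde{w}_i \;=\; \frac{p(x_i)/q(x_i)}{\sum_{j=1}^n p(x_j)/q(x_j)},
\]
which are well-defined and strictly positive: the hypothesis $\inf_x q(x)/p(x) > 0$ forces $p/q$ to be bounded above, and $q$ is strictly positive on $\mathbb{R}^d$, so every term in the denominator is strictly positive. Setting $\tilde{\pi}_n := \sum_i \tilde{w}_i \delta(x_i)$, the optimality of $w^*$ yields the deterministic bound $\mathrm{D}_{p,q}(\pi_n) \leq \mathrm{D}_{p,q}(\tilde{\pi}_n)$, so the task reduces to showing $\mathrm{D}_{p,q}(\tilde{\pi}_n) \rightarrow 0$ almost surely.

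The key algebraic observation is that $\mathcal{S}_{p,q} = (q/p)\,\mathcal{S}_q$, where $\mathcal{S}_q$ is the canonical (Langevin) Stein operator for $q$; consequently the explicit Stein kernel underlying \Cref{prop: expectation} factorises as
\[
k_{p,q}(x,y) \;=\; \frac{q(x)\,q(y)}{p(x)\,p(y)}\, k_q(x,y),
\]
where $k_q$ is the canonical Stein kernel with target $q$ and base kernel $k$. Plugging the importance-sampling weights into the quadratic form $\mathrm{D}_{p,q}(\tilde{\pi}_n)^2 = \sum_{i,j} \tilde{w}_i \tilde{w}_j k_{p,q}(x_i,x_j)$, the $q/p$ ratios cancel and one obtains
\[
\mathrm{D}_{p,q}(\tilde{\pi}_n)^2 \;=\; \frac{1}{Z_n^2}\, \mathrm{D}_q(\mu_n)^2, \qquad Z_n := \frac{1}{n}\sum_{j=1}^n \frac{p(x_j)}{q(x_j)}, \qquad \mu_n := \frac{1}{n}\sum_{i=1}^n \delta(x_i).
\]
The strong law gives $Z_n \to \int (p/q)\,\mathrm{d}q = 1$ almost surely, so the prefactor is harmless. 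The remaining factor $\mathrm{D}_q(\mu_n)^2 = n^{-2}\sum_{i,j} k_q(x_i,x_j)$ is a V-statistic in i.i.d.\ $q$-samples with mean-zero kernel (by the Stein identity $\int k_q(x,\cdot)\,\mathrm{d}q = 0$), and Hoeffding's strong law for V-statistics delivers $\mathrm{D}_q(\mu_n) \to 0$ almost surely as soon as $\iint |k_q(x,y)|\,\mathrm{d}q(x)\mathrm{d}q(y) < \infty$.

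The main obstacle is this last integrability check, which is where the exponential-moment hypothesis enters. Reading off \Cref{prop: expectation} with the $q/p$ factors stripped away, $k_q$ is a sum of terms that are uniformly bounded (those arising from derivatives of the inverse multi-quadric kernel) plus a cross term dominated in modulus by a constant multiple of $\|\nabla \log q(x)\|\,\|\nabla \log q(y)\|$. The assumption $\int \exp(\gamma \|\nabla \log q\|^2)\,\mathrm{d}q < \infty$ implies that all polynomial moments of $\|\nabla \log q\|$ under $q$ are finite, in particular $\int \|\nabla \log q\|\,\mathrm{d}q < \infty$, and the required double integrability then follows by Fubini. Combining these ingredients yields $\mathrm{D}_{p,q}(\tilde{\pi}_n) \to 0$, hence $\mathrm{D}_{p,q}(\pi_n) \to 0$ almost surely, and \Cref{thm: convergence control} completes the proof.
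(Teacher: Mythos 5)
Your proposal follows the same skeleton as the paper's proof: reduce to \Cref{thm: convergence control} (whose moment conditions are vacuous for finitely supported measures), bound $\mathrm{D}_{p,q}(\pi_n)$ above by the discrepancy of the self-normalised importance-sampling comparator, observe that the $q/p$ factors in the Stein kernel cancel against the importance weights so that $\mathrm{D}_{p,q}(\tilde{\pi}_n) = Z_n^{-1}\,\mathrm{D}_{q,q}\bigl(n^{-1}\sum_i \delta(x_i)\bigr)$, and dispatch $Z_n \rightarrow 1$ by the strong law. The one place you genuinely diverge is the final step: the paper invokes Lemma 4 of \citet{riabiz2020optimal} (restated as \Cref{prop: riabiz lemma 4}), whose hypothesis $\int \exp\{\gamma k_q(x,x)\}\,\mathrm{d}q < \infty$ is exactly where the theorem's exponential-moment assumption is consumed, whereas you appeal to the classical Hoeffding/reverse-martingale strong law for U- and V-statistics, which needs only $\iint |k_q|\,\mathrm{d}q\,\mathrm{d}q < \infty$ plus control of the diagonal sum $n^{-2}\sum_i k_q(x_i,x_i)$, both supplied by $k_q(x,x) \lesssim 1 + \|\nabla \log q(x)\|^2$ and Cauchy--Schwarz. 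Your route is more elementary and in fact reveals that the exponential moment is stronger than needed in the i.i.d.\ setting (it is inherited from the non-i.i.d.\ analysis of \citet{riabiz2020optimal}, where cross-term bounds are required). Two points to tighten: the strong law for V-statistics, as opposed to U-statistics, requires the diagonal term to be handled separately (your moments do cover it, but you only discuss the off-diagonal double integral); and the mean-zero property $\int k_q(x,\cdot)\,\mathrm{d}q = 0$ should be justified via \Cref{prop: int to 0}, using $q \in \mathcal{Q}(\mathbb{R}^d)$ to control the tails of $q$, rather than asserted as a bare ``Stein identity''. Neither is a genuine gap.
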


\noindent
The proof builds on earlier work in \citet{riabiz2020optimal}.
Note that the optimal weights $w^*$ can be computed without the normalisation constant of $p$, by solving a constrained quadratic programme at cost $O(n^3)$.

As an illustration, we implemented gradient-free Stein importance sampling to approximate a posterior arising from a discretely observed Lotka--Volterra model
\begin{align*}
    \dot{u}(t) = \alpha u(t) - \beta u(t)v(t), \qquad 
    \dot{v}(t) = -\gamma v(t) + \delta u(t)v(t), \qquad (u(0), v(0)) = (u_0,v_0) ,
\end{align*}
with independent log-normal observations with covariance matrix $\text{diag}(\sigma_1^2,\sigma_2^2)$.
The parameters to be inferred are $\{\alpha,\beta,\gamma,\delta,u_0,v_0,\sigma_1,\sigma_2\}$ and therefore $d = 8$.
The data analysed are due to \cite{hewitt1922conservation}, and full details are contained in \Cref{ap: GFSIS}. 
The direct application of Stein importance sampling to this task requires the numerical calculation of \textit{sensitivities} of the differential equation at each of the $n$ samples that are to be re-weighted.
Aside from simple cases where automatic differentiation or adjoint methods can be used, the stable computation of sensitivities can form a major computational bottleneck; see \citep{riabiz2020optimal}. 
In contrast, our approach required a fixed number of gradient computations to construct a Laplace approximation\footnote{In this case, 49 first order derivatives were computed, of which 48 were on the optimisation path and 1 was used to construct a finite difference approximation to the Hessian.}, independent of the number $n$ of samples required; see \Cref{ap: GFSIS} for detail.
In the lower triangular portion of \Cref{fig: ksd importance}(a), biased samples from the Laplace approximation $q$ ($\neq p$) are displayed, while in the upper triangular portion the same samples are re-weighted using gradient-free Stein importance sampling to form a consistent approximation of $p$.
A visual reduction in bias and improvement in approximation quality can be observed.
As a baseline against which to assess the quality of our approximation, we consider self-normalised importance sampling; i.e. the approximation with weights $\bar{w}$ such that $\bar{w}_i \propto p(x_i) / q(x_i)$.
\Cref{fig: ksd importance}(b) reports the accuracy of the approximations to $p$ as quantified using \textit{energy distance} \citep{cramer1928energy}. 
These results indicate that the approximations produced using gradient-free Stein importance sampling improve on those constructed using self-normalised importance sampling.
This may be explained by the fact that the optimal weights $w^*$ attempt to mitigate both bias due to $q \neq p$ and Monte Carlo error, while the weights $\bar{w}$ only address the bias due to $q \neq p$, and do not attempt to mitigate error due to the randomness in Monte Carlo.
Additional experiments in \Cref{app: gfksd comparison} confirm that gradient-free Stein importance sampling achieves comparable performance with gradient-based Stein importance sampling in regimes where the Laplace approximation can be justified.

Although our recommended default settings for gradient-free kernel Stein discrepancy were successful in this example, an interesting theoretical question would be to characterise an optimal choice of $q$ in this context.
This appears to be a challenging problem but we hope to address it in future work.
In addition, although we focused on Stein importance sampling, our methodology offers the possibility to construct gradient-free versions of other related algorithms, including the \emph{Stein points} algorithms of \citet{chen2018stein,chen2019stein}, and the \emph{Stein thinning} algorithm of \citet{riabiz2020optimal}.

\begin{figure}[t!]
    \centering
    \begin{subfigure}[b]{0.6\linewidth}
    \includegraphics[width=\textwidth,align=c]{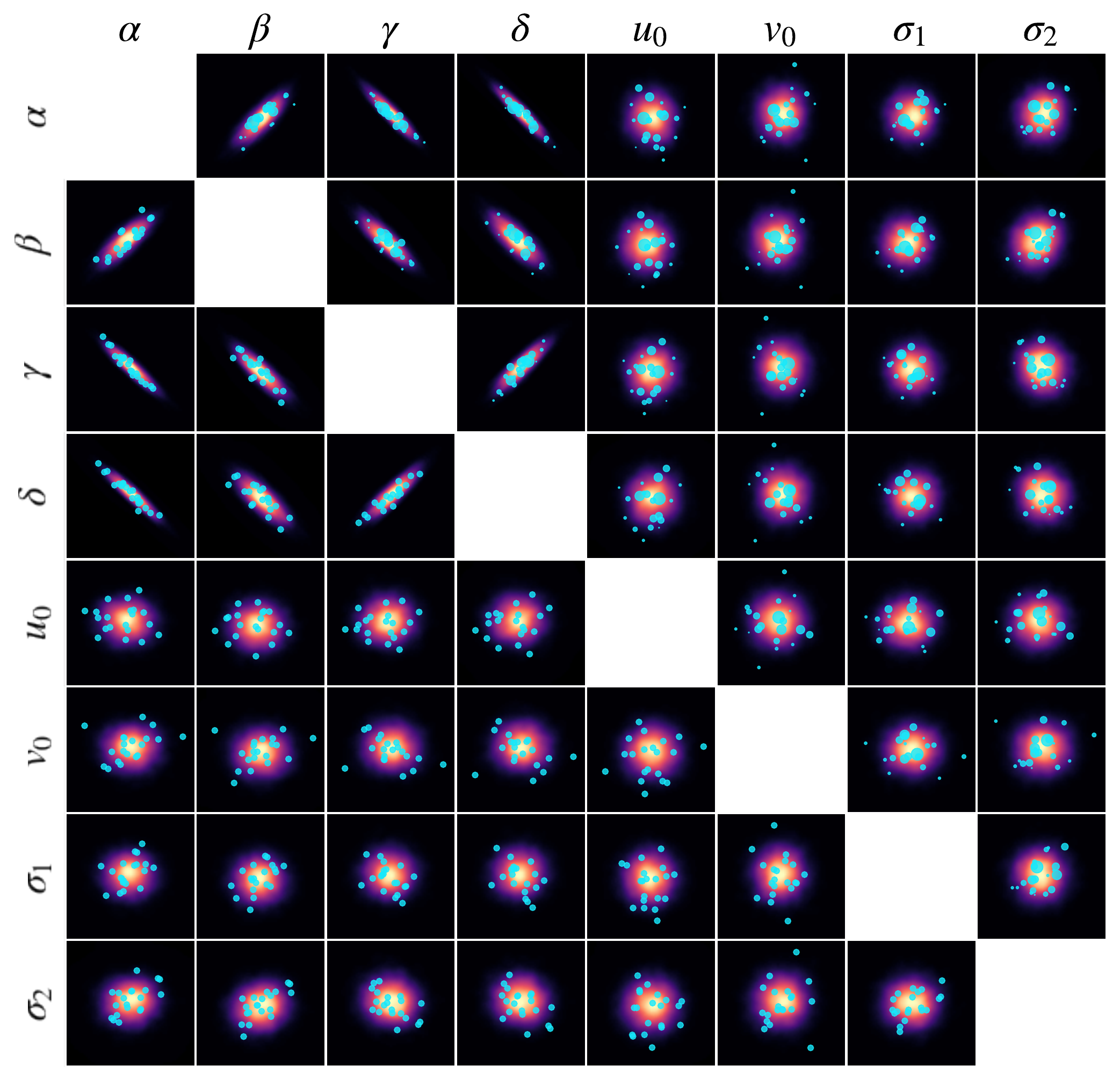} 
    \caption{}
    \label{subfig: reweight}
    \end{subfigure}
    \begin{subfigure}[b]{0.36\linewidth}
    \includegraphics[width=\textwidth,align=c]{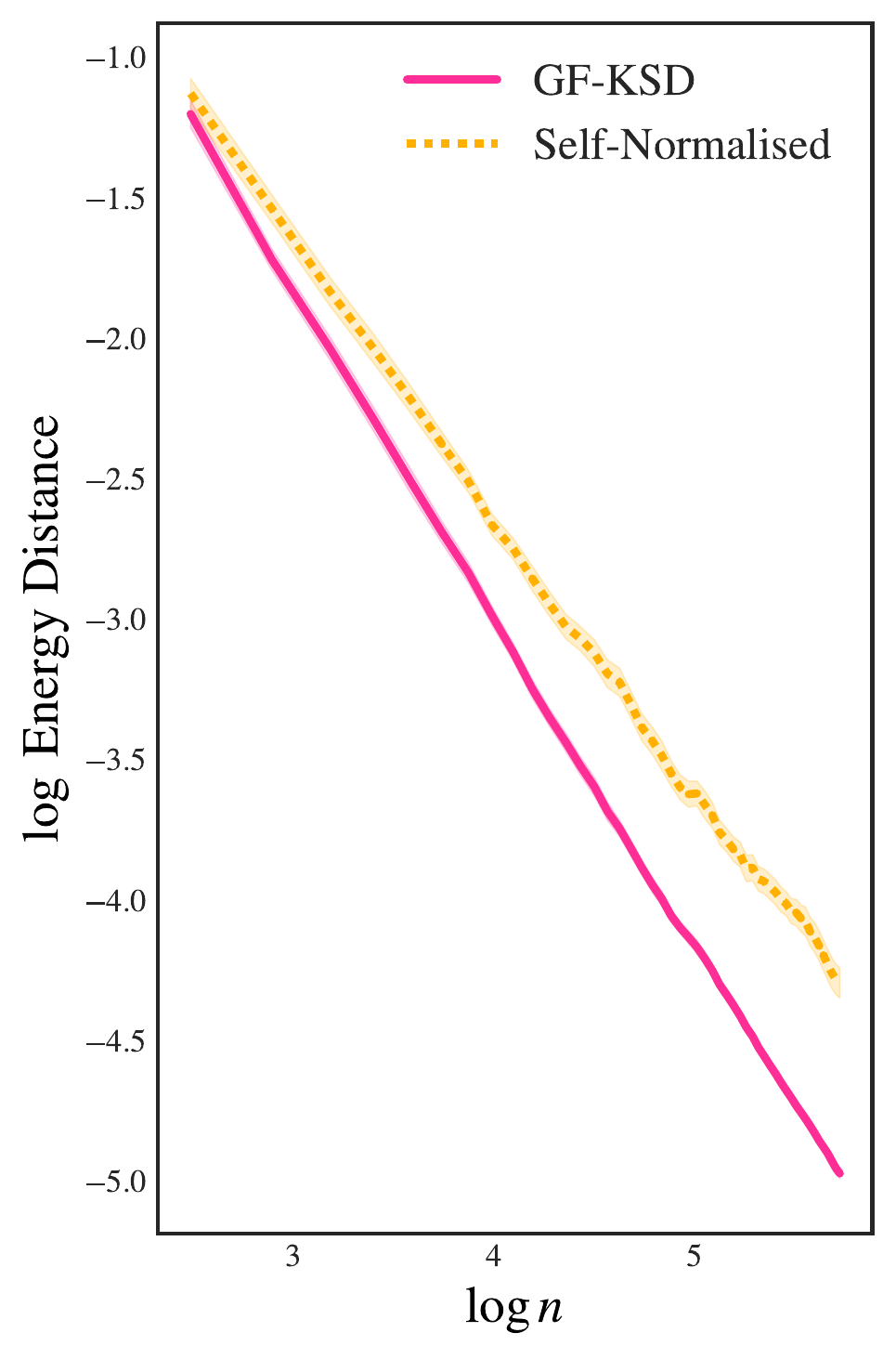}
    \label{subfig: MMD}
    \caption{}
    \end{subfigure}
    \caption{Gradient-Free Stein Importance Sampling:
    (a) The lower triangular panels display $n=20$ independent (biased) samples from the Laplace approximation $q$, while the upper triangular panels display the same number of re-weighted samples obtained using gradient-free Stein importance sampling.
    [Samples are shown in blue, with their size proportional to the square of their weight, to aid visualisation.  
    The shaded background indicates the high probability regions of $p$, the target.]
    (b) The approximation quality, as a function of the number $n$ of samples from $q$, is measured as the energy distance between the approximation and the target.
    [The solid line corresponds to the output of gradient-free Stein importance sampling, while the dashed line corresponds to the output of self-normalised importance sampling. Standard error regions are shaded.]
    }
    \label{fig: ksd importance}
\end{figure}

\subsection{Stein Variational Inference Without Second-Order Gradient}
\label{subsec: variational}

Stein discrepancy was proposed as a variational objective in \citet{ranganath2016operator} and has demonstrated comparable performance with the traditional Kullback--Leibler objective in certain application areas, whilst abolishing the requirement that the variational family is absolutely continuous with respect to the statistical model \citep{fisher2021measure}.
This offers an exciting, as yet largely unexplored opportunity to construct flexible variational families outside the conventional setting of normalising flows (which are constrained to be diffeomorphisms of $\mathbb{R}^d$).
However, gradient-based stochastic optimisation of the canonical Stein discrepancy objective function means that second-order derivatives of the statistical model are required.
Our gradient-free kernel Stein discrepancy reduces the order of derivatives that are required from second-order to first-order, and in many cases (such as when differential equations appear in the statistical model) this will correspond to a considerable reduction in computational cost.

In what follows we fix a reference distribution $R \in \mathcal{P}(\mathbb{R}^d)$ and a parametric class of maps $T^\theta : \mathbb{R}^d \rightarrow \mathbb{R}^d$, $\theta \in \mathbb{R}^p$, and consider the variational family $(\pi_\theta)_{\theta \in \mathbb{R}^p} \subset \mathcal{P}(\mathbb{R}^d)$ whose elements $\pi_\theta := T_\#^\theta R$ are the \emph{pushforwards} of $R$ through the maps $T^\theta$, $\theta \in \mathbb{R}^p$.
The aim is to use stochastic optimisation to minimise $\theta \mapsto \mathrm{D}_{p,q}(\pi_\theta)$ (or, equivalently, any strictly increasing transformation thereof). 
For this purpose a low-cost unbiased estimate of the gradient of the objective function is required.

\begin{proposition}[Stochastic Gradients]
\label{prop: stoch grads}

Let $p,q,R \in \mathcal{P}(\mathbb{R}^d)$ and $T^\theta : \mathbb{R}^d \rightarrow \mathbb{R}^d$ for each $\theta \in \mathbb{R}^p$.
Let $\theta \mapsto \nabla_\theta T^\theta(x)$ be bounded.
Assume that for each $\vartheta \in \mathbb{R}^p$ there is an open neighbourhood $N_\vartheta \subset \mathbb{R}^p$ such that
\begin{align*}
& \int \sup_{\theta \in N_\vartheta} \Big( \frac{q(T^\theta(x))}{p(T^\theta(x))} \Big)^2 \; \mathrm{d}R(x) < \infty , \\
& \int \sup_{\theta \in N_\vartheta} \frac{q(T^\theta(x))}{p(T^\theta(x))} \|\nabla \log r (T^\theta(x))\| \; \mathrm{d}R(x) < \infty, \\
& \int \sup_{\theta \in N_\vartheta} \frac{q(T^\theta(x))}{p(T^\theta(x))} \|\nabla^2 \log r(T^\theta(x))\| \; \mathrm{d}R(x) < \infty ,
\end{align*}
for each of $r \in \{p,q\}$.
Let $k$ be the inverse multi-quadric kernel in \Cref{eq: imq} and let $u(x,y)$ denote the integrand in \Cref{eq: gfksd explicit}.
Then
\begin{align*}
\nabla_\theta \mathrm{D}_{p,q}(\pi_\theta)^2 & = \mathbb{E}\left[ \frac{1}{n(n-1)} \sum_{i \neq j} \nabla_\theta u(T^\theta(x_i), T^\theta(x_j)) \right]
\end{align*}
where the expectation is taken with respect to independent samples $x_1,\dots,x_n \sim R$.
\end{proposition}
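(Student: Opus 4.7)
The plan is to first express $\mathrm{D}_{p,q}(\pi_\theta)^2$ as an integral against the reference measure $R \otimes R$, then justify the interchange of gradient and integral via dominated convergence, and finally recognise the resulting double integral as the expectation of the stated $U$-statistic.

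The starting point is the explicit form from \Cref{prop: expectation}, which, upon applying the pushforward change of variables $\pi_\theta = T^\theta_\# R$, yields
\begin{align*}
\mathrm{D}_{p,q}(\pi_\theta)^2 \;=\; \iint u\bigl(T^\theta(x),T^\theta(y)\bigr) \; \mathrm{d}R(x)\,\mathrm{d}R(y),
\end{align*}
where $u(x,y)$ is the integrand of \Cref{eq: gfksd explicit}. The bulk of the work is to justify differentiating under the integral sign. By the chain rule,
\begin{align*}
\nabla_\theta u(T^\theta(x),T^\theta(y)) \;=\; \bigl(\nabla_\theta T^\theta(x)\bigr)^\top (\nabla_1 u)(T^\theta(x),T^\theta(y)) + \bigl(\nabla_\theta T^\theta(y)\bigr)^\top (\nabla_2 u)(T^\theta(x),T^\theta(y)),
\end{align*}
where $\nabla_i u$ denotes the gradient of $u$ in its $i$th spatial argument. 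Since $\nabla_\theta T^\theta$ is uniformly bounded by assumption, it suffices to construct a $\theta$-independent, $R\otimes R$-integrable majorant of $\|\nabla_i u(T^\theta(x),T^\theta(y))\|$ on the neighbourhood $N_\vartheta$, for $i=1,2$.

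Inspection of the three summands of $u$ shows that every spatial derivative of $u(x,y)$ splits as a product of (i) a bounded kernel factor, depending only on $\|x-y\|$ through the inverse multi-quadric and its derivatives, and (ii) an analytic factor involving $q(x)q(y)/[p(x)p(y)]$ together with at most one of $\nabla\log q(x)$, $\nabla\log q(y)$, $\nabla\log p(x)$, $\nabla\log p(y)$, $\nabla^2\log q(x)$, or $\nabla^2\log q(y)$ (the score terms for $p$ arise from differentiating the ratio $q/p$, while the Hessian of $\log q$ arises because $u$ itself contains $\nabla\log q$). Each such factor is a product of a quantity depending only on $x$ with one depending only on $y$, each of the form $(q/p)$ multiplied by either $1$, $\|\nabla\log r\|$, or $\|\nabla^2 \log r\|$ for $r \in \{p,q\}$. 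The Cauchy--Schwarz inequality applied to $R \otimes R$ then bounds each contribution by a product of marginal integrals of the type $\int \sup_{\theta \in N_\vartheta} (q/p)^2(T^\theta) \,\mathrm{d}R$ and $\int \sup_{\theta \in N_\vartheta} (q/p)(T^\theta)\|\nabla^j \log r(T^\theta)\|\,\mathrm{d}R$ for $j \in \{1,2\}$, all of which are finite by hypothesis. This yields an $R\otimes R$-integrable dominating function, so the Leibniz rule applies and
\begin{align*}
\nabla_\theta \mathrm{D}_{p,q}(\pi_\theta)^2 \;=\; \iint \nabla_\theta u\bigl(T^\theta(x),T^\theta(y)\bigr)\;\mathrm{d}R(x)\,\mathrm{d}R(y).
\end{align*}

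The final step is routine: for independent samples $x_1,\dots,x_n \sim R$, Fubini gives $\mathbb{E}[\nabla_\theta u(T^\theta(x_i),T^\theta(x_j))] = \iint \nabla_\theta u(T^\theta(x),T^\theta(y))\,\mathrm{d}R(x)\,\mathrm{d}R(y)$ for any $i\neq j$, and averaging over the $n(n-1)$ ordered pairs yields the advertised $U$-statistic representation.

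I expect the principal obstacle to be the bookkeeping in the dominating-function step: verifying that every term arising from the chain rule on $u$ is controlled by one of the three integrability hypotheses on $N_\vartheta$, and in particular correctly identifying that second-order derivatives of $\log q$ (but not of $\log p$) appear in $\nabla_i u$ because the factor $\nabla\log q$ already sits inside $u$, while the factor $q/p$ only contributes first-order scores of $p$ and $q$.
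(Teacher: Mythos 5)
Your proposal is correct and follows essentially the same route as the paper: rewrite $\mathrm{D}_{p,q}(\pi_\theta)^2$ as a double integral over $R\otimes R$, apply the chain rule using the boundedness of $\nabla_\theta T^\theta$, and dominate the resulting terms by products of the assumed marginal integrals (noting, as you do, that only first-order scores of $p$ arise from differentiating $q/p$ while the Hessian of $\log q$ comes from the score factor already inside $u$); the paper merely packages the interchange-of-derivative-and-integral step as a citation to an abstract lemma (Proposition 1 of Fisher et al., 2021) whose hypotheses it then verifies in exactly this way.
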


\noindent
The role of \Cref{prop: stoch grads} is to demonstrate how an unbiased gradient estimator may be constructed, whose computation requires first-order derivatives of $p$ only, and whose cost is $O(n^2)$.
Although the assumption that $\theta \mapsto \nabla_\theta T^\theta(x)$ is bounded seems strong, it can typically be satisfied by re-parametrisation of $\theta \in \mathbb{R}^p$.

As an interesting methodological extension, that departs from the usual setting of stochastic optimisation, here we consider interlacing stochastic optimisation over $\theta$ and the selection of $q$, leveraging the current value $\theta_m$ on the optimisation path to provide a natural candidate $\pi_{\theta_m}$ for $q$ in this context\footnote{If the density of $\pi_{\theta_m}$ is not available, for example because $T$ is a complicated mapping, it can be consistently estimated using independent samples from $T_\#^{\theta_m}$.}.
Thus, for example, a vanilla stochastic gradient descent routine becomes $\theta_{m+1} = \theta_m - \epsilon \left. \nabla_\theta \mathrm{D}_{p,\pi_{\theta_m}}(\pi_\theta)^2 \right|_{\theta = \theta_m}$ for some learning rate $\epsilon > 0$. 
(In this iterative setting, to ensure the variational objective remains fixed, we do not perform the standardisation of the data described in \Cref{subsec: choice sb}.)

\begin{figure}[t!]
    \centering
    \includegraphics[width=\textwidth,align=c]{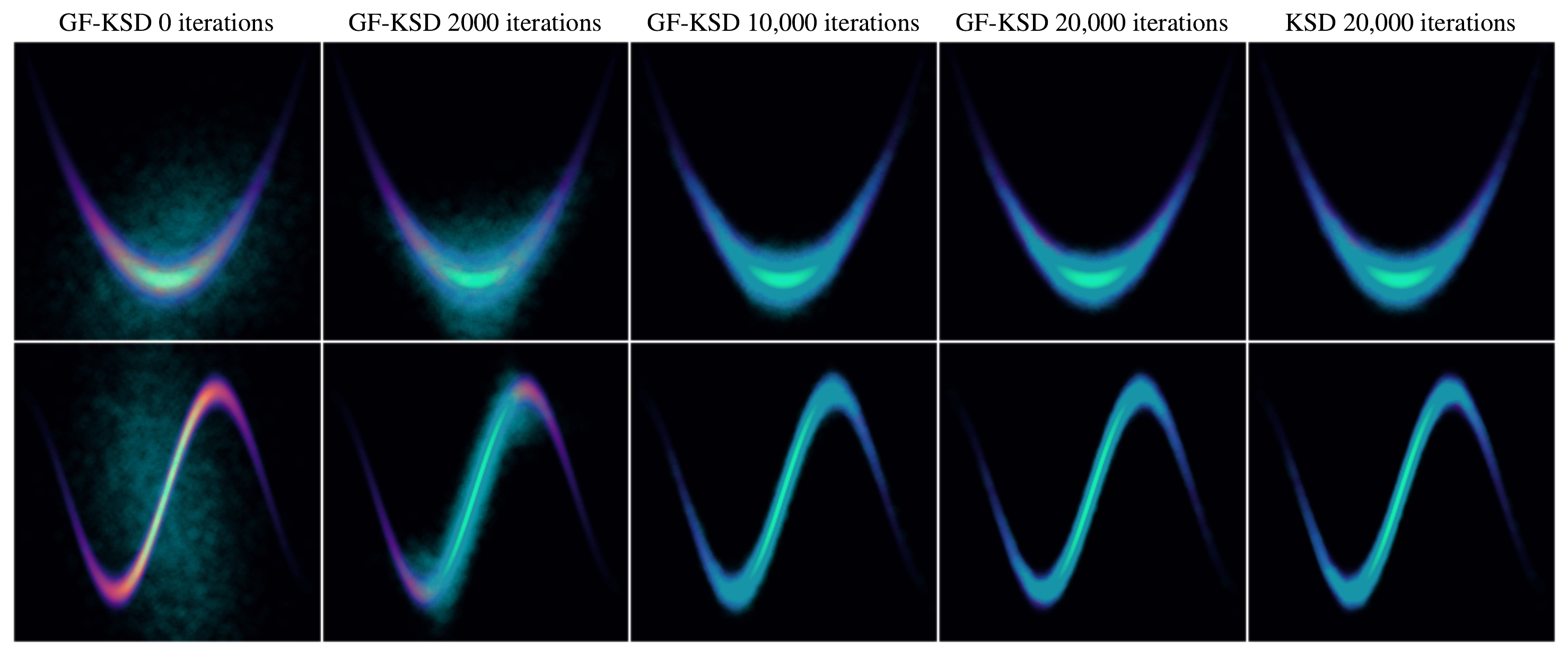}
    \caption{Stein Variational Inference Without Second Order Gradient:
    The top row concerns approximation of a distributional target $p$ that is ``banana'' shaped, while the bottom row concerns a ``sinusoidal'' target.
    The first four columns depict the variational approximation $\pi_{\theta_m}$ to $p$ constructed using gradient descent applied to gradient-free kernel Stein discrepancy (i.e. first order derivatives of $p$ required) along the stochastic optimisation sample path ($m \in \{0,2 \times 10^3, 10^4, 2 \times 10^4\}$), while the final column reports the corresponding approximation ($m = 2 \times 10^4$) constructed using standard kernel Stein discrepancy (i.e. second order derivatives of $p$ required).
    }
    \label{fig: ksd variational}
\end{figure}

To assess the performance of gradient-free kernel Stein discrepancy in this context, we re-instantiated an experiment from \citet{fisher2021measure}.
The results, in \Cref{fig: ksd variational}, concern the approximation of ``banana'' and ``sinusoidal'' distributions in dimension $d=2$, and were obtained using the reference distribution $R = \mathcal{N}(0,2I)$ and taking $T^\theta$ to be the \textit{inverse autoregressive flow} of \cite{kingma2016iaf}.
These are both toy problems, which do not themselves motivate our methodological development, but do enable us to have an explicit ground truth to benchmark performance against.
Full experimental detail is contained in \Cref{ap: SVI}; we highlight that \emph{gradient clipping} was used, both to avoid extreme values of $q/p$ encountered on the optimisation path, and to accelerate the optimisation itself \citep{zhang2019gradient}.

The rightmost panel depicts the result of performing variational inference with the standard kernel Stein discrepancy objective functional.
It is interesting to observe that gradient-free Stein discrepancy leads to a similar performance in both examples, with the caveat that stochastic optimisation was more prone to occasional failure when gradient-free Stein discrepancy was used.
The development of a robust optimisation technique in this context requires care and a detailed empirical assessment, and is left as a promising avenue for further research.

\section{Conclusion}
\label{sec: conclusion}

In this paper a gradient-free kernel Stein discrepancy was proposed and studied.
Theoretical and empirical results support the use of gradient-free kernel Stein discrepancy in settings where an initial approximation to the distributional target can readily be constructed, and where the distributional target itself does not contain distant high probability regions, but poor performance can occur outside this context.
Nevertheless, for many statistical analyses the principal challenge is the cost of evaluating the statistical model and its derivatives, rather than the complexity of the target itself, and in these settings the proposed discrepancy has the potential to be usefully employed.
The focus of this work was on posterior approximation, with illustrative applications to sampling and variational inference being presented.
However, we note that gradient-free kernel Stein discrepancy is not applicable to problems such as estimation and composite hypothesis testing, where the target $p_\theta$ ranges over a parametric model class, and alternative strategies will be required to circumvent gradient computation in that context.

\paragraph{Acknowledgements}

MAF was supported by EP/W522387/1.
CJO was supported by EP/W019590/1.
The authors are grateful to Fran\c{c}ois-Xavier Briol, Jon Cockayne, Jeremias Knoblauch, Lester Mackey, Marina Riabiz, Rob Salomone, Leah South, and George Wynne for insightful comments on an early draft of the manuscript. 

\appendix

\section{Proofs} \label{sec: proofs}

This appendix contains proofs for all novel theoretical results reported in the main text.

\begin{proof}[Proof of \Cref{prop: int to 0}]
First we show that the integral in \Cref{eq: int to 0} is well-defined.
Since $h$ and its first derivatives are bounded, we can set $C_0 := \sup\{\|h(x)\| : x \in \mathbb{R}^d\}$ and $C_1 := \sup\{|\nabla \cdot h(x)| : x \in \mathbb{R}^d \}$.
Then
\begin{align*}
    \int \left| \mathcal{S}_{p,q} h \right| \mathrm{d}p 
    = \int \left| \frac{q}{p} [\nabla \cdot h + h \cdot \nabla \log q] \right| \; \mathrm{d}p 
    & = \int \left| \nabla \cdot h + h \cdot \nabla \log q \right| \; \mathrm{d}q \\
    & \leq C_1 + C_0 \int \|\nabla \log q\| \; \mathrm{d}q < \infty .
\end{align*}
Now, let $B_r = \{x \in \mathbb{R}^d : \|x\| \leq r\}$, $S_r = \{x \in \mathbb{R}^d : \|x\| = r\}$, and $t(r) := \sup\{q(x) : x \in S_r\}$, so that by assumption $r^{d-1} t(r) \rightarrow 0$ as $r \rightarrow \infty$.
Let $\mathbbm{1}_B(x) = 1$ if $x \in B$ and $0$ if $x \notin B$.
Then $\int \mathcal{S}_{p,q} h \; \mathrm{d}p = \lim_{r \rightarrow \infty} \int \mathbbm{1}_{B_r} \mathcal{S}_{p,q} h \; \mathrm{d}p$ and, from the divergence theorem,
\begin{align*}
    \int \mathbbm{1}_{B_r} \mathcal{S}_{p,q} h \; \mathrm{d}p & = \int \mathbbm{1}_{B_r} \frac{q}{p} [\nabla \cdot h + h \cdot \nabla \log q] \; \mathrm{d}p \\
    & = \int_{B_r} [q \nabla \cdot h + h \cdot \nabla q ] \; \mathrm{d}x \\
    & = \int_{B_r} \nabla \cdot (qh) \; \mathrm{d}x \\
    & = \oint_{S_r} q h \cdot n \; \mathrm{d}x 
    \leq t(r) \times \frac{2 \pi^{d/2}}{\Gamma(d/2)} r^{d-1} \stackrel{r \rightarrow \infty}{\rightarrow} 0 ,
\end{align*}
as required.
\end{proof}

\begin{proof}[Proof of \Cref{prop: L1}]
Since $k$ and its first derivatives are bounded, we can set
\begin{align*}
    C_0^k := \sup_{x \in \mathbb{R}^d} \sqrt{k(x,x)}, \hspace{30pt}
    C_1^k := \sup_{x \in \mathbb{R}^d} \sqrt{ \sum_{i=1}^d (\partial_{i} \otimes \partial_{i}) k(x,x) } .
\end{align*}
From Cauchy--Schwarz and the reproducing property, for any $f \in \mathcal{H}(k)$ it holds that $|f(x)| = |\langle f , k(\cdot,x) \rangle_{\mathcal{H}(k)}| \leq \|f\|_{\mathcal{H}(k)} \|k(\cdot,x)\|_{\mathcal{H}(k)} = \|f\|_{\mathcal{H}(k)} \sqrt{\langle k(\cdot,x) , k(\cdot,x) \rangle_{\mathcal{H}(k)}} = \|f\|_{\mathcal{H}(k)} \sqrt{k(x,x)}$.
Furthermore, using the fact that $k$ is continuously differentiable, it can be shown that $|\partial_{i} f(x)| \leq \|f\|_{\mathcal{H}(k)} \sqrt{(\partial_{i} \otimes \partial_{i}) k(x,x)}$; see Corollary 4.36 of \citet{steinwart2008support}.
As a consequence, for all $h \in \mathcal{H}(k)^d$ we have that, for all $x \in \mathbb{R}^d$,
\begin{align*}
    \|h(x)\| & = \sqrt{ \sum_{i=1}^d h_i(x)^2 } \leq \sqrt{ \sum_{i=1}^d k(x,x) \|h_i\|_{\mathcal{H}(k)}^2 } = C_0^k \|h\|_{\mathcal{H}(k)^d} \\
    |\nabla \cdot h(x)| & = \left| \sum_{i=1}^d \partial_{x_i} h_i(x) \right| \leq \left| \sum_{i=1}^d \sqrt{ (\partial_{i} \otimes \partial_{i}) k(x,x) } \|h_i\|_{\mathcal{H}(k)} \right| \leq C_1^k \|h\|_{\mathcal{H}(k)^d} .
\end{align*}
To use analogous notation as in the proof of \Cref{prop: int to 0}, set $C_0 := C_0^k \|h\|_{\mathcal{H}(k)^d}$ and $C_1 := C_1^k \|h\|_{\mathcal{H}(k)^d}$.
Then, using H\"{o}lder's inequality and the fact that $(a+b)^\beta \leq 2^{\beta-1}(a^\beta + b^\beta)$ with $\beta = \alpha / (\alpha-1)$, we have
\begin{align*}
    \|\mathcal{S}_{p,q}h\|_{L^1(\pi)} & = \int \left| \mathcal{S}_{p,q} h \right| \mathrm{d}\pi = \int \left| \frac{q}{p} [\nabla \cdot h + h \cdot \nabla \log q] \right| \; \mathrm{d}\pi \\
    & \leq \left( \int \left( \frac{q}{p} \right)^\alpha \mathrm{d}\pi \right)^{\frac{1}{\alpha}} \left( \int \left( \nabla \cdot h + h \cdot \nabla \log q \right)^{\frac{\alpha}{\alpha - 1}} \; \mathrm{d}\pi \right)^{\frac{\alpha-1}{\alpha}} \\
    & \leq 2^{\frac{1}{\alpha}} \left( \int \left( \frac{q}{p} \right)^\alpha \mathrm{d}\pi \right)^{\frac{1}{\alpha}} \left( C_1^{\frac{\alpha}{\alpha-1}} + C_0^{\frac{\alpha}{\alpha-1}} \int \|\nabla \log q \|^{\frac{\alpha}{\alpha-1}} \; \mathrm{d}\pi \right)^{\frac{\alpha-1}{\alpha}} \\
    & \leq 2^{\frac{1}{\alpha}} \|h\|_{\mathcal{H}(k)^d} \left( \int \left( \frac{q}{p} \right)^\alpha \mathrm{d}\pi \right)^{\frac{\alpha-1}{\alpha}} \left( (C_1^k)^{\frac{\alpha}{\alpha-1}} + (C_0^k)^{\frac{\alpha}{\alpha-1}} \int \|\nabla \log q \|^{\frac{\alpha}{\alpha-1}} \; \mathrm{d}\pi \right)^{\frac{\alpha-1}{\alpha}}
\end{align*}  
as required.
\end{proof}

%\noindent Note that, in the proof of \Cref{prop: L1}, we could alternatively have assumed that $q/p$ is upper-bounded and avoided Cauchy--Schwarz, requiring instead only that the first moment of $\|\nabla \log q\|$ exists under $\pi$.
%However, we seek to avoid unnecessary constraints on $q$, since in practice we will need to search for a suitable $q$ and strong constraints such as an upper bound on $q/p$ will make this search more difficult.

To prove \Cref{prop: expectation}, two intermediate results are required:

\begin{proposition} \label{prop: representer}
Let $k$ and $\pi$ satisfy the preconditions of \Cref{prop: L1}.
Then the function 
\begin{align}
    \mathbb{R}^d \ni x \mapsto f(x) := \frac{q(x)}{p(x)} [\nabla_x k(x,\cdot) + k(x,\cdot) \nabla \log q(x) ] \label{eq: f Bochner}
\end{align}
takes values in $\mathcal{H}(k)^d$, is Bochner $\pi$-integrable and, thus, $\xi := \int f \; \mathrm{d}\pi \in \mathcal{H}(k)^d$.
\end{proposition}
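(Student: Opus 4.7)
The plan is to (i) verify that $f(x)$ lies in $\mathcal{H}(k)^d$ pointwise, (ii) establish strong measurability of $x \mapsto f(x)$, and (iii) control the Bochner norm $\int \|f\|_{\mathcal{H}(k)^d}\,\mathrm{d}\pi$ using the RKHS bounds already extracted in the proof of \Cref{prop: L1} together with the moment conditions assumed on $\pi$; once Bochner integrability is in hand, the conclusion $\xi := \int f\,\mathrm{d}\pi \in \mathcal{H}(k)^d$ is automatic from the general theory of Bochner integration in a Banach space.

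For (i), the second summand $k(x,\cdot)\,\nabla \log q(x)$ is just the constant vector $\nabla \log q(x)$ multiplying the canonical feature map $k(x,\cdot) \in \mathcal{H}(k)$, so it lies in $\mathcal{H}(k)^d$ with $\|k(x,\cdot)\|_{\mathcal{H}(k)} = \sqrt{k(x,x)} \leq C_0^k$. For the first summand, since $k$ is continuously differentiable and $(\partial_i \otimes \partial_i)k(x,x)$ is bounded, the standard RKHS fact (Corollary 4.36 of \citet{steinwart2008support}, already invoked in the proof of \Cref{prop: L1}) gives $\partial_i^{(1)} k(x,\cdot) \in \mathcal{H}(k)$ with $\|\partial_i^{(1)} k(x,\cdot)\|_{\mathcal{H}(k)} = \sqrt{(\partial_i \otimes \partial_i)k(x,x)} \leq C_1^k$. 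Hence $\nabla_x k(x,\cdot) \in \mathcal{H}(k)^d$ and therefore $f(x) \in \mathcal{H}(k)^d$ for every $x$.

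For (iii), the triangle inequality combined with the pointwise bounds above yields
\begin{align*}
\|f(x)\|_{\mathcal{H}(k)^d} \leq \frac{q(x)}{p(x)} \Bigl( C_1^k + C_0^k\,\|\nabla \log q(x)\| \Bigr).
\end{align*}
Integrating against $\pi$ and applying H\"older's inequality with exponents $\alpha$ and $\alpha/(\alpha-1)$, exactly as in the proof of \Cref{prop: L1}, gives
\begin{align*}
\int \|f\|_{\mathcal{H}(k)^d}\,\mathrm{d}\pi \leq \Bigl(\textstyle\int (q/p)^\alpha\,\mathrm{d}\pi\Bigr)^{\!1/\alpha}\Bigl[ C_1^k + C_0^k \bigl(\textstyle\int \|\nabla \log q\|^{\alpha/(\alpha-1)}\,\mathrm{d}\pi\bigr)^{\!(\alpha-1)/\alpha}\Bigr],
\end{align*}
which is finite by the moment hypotheses inherited from \Cref{prop: L1}.

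For (ii), it suffices to note that the map $x \mapsto f(x)$ is continuous from $\mathbb{R}^d$ into $\mathcal{H}(k)^d$: the scalar weights $q(x)/p(x)$ and $\nabla \log q(x)$ are measurable (and can be assumed continuous in the settings of interest), while $x \mapsto k(x,\cdot)$ and $x \mapsto \partial_i^{(1)} k(x,\cdot)$ are continuous into $\mathcal{H}(k)$ by the reproducing property applied to the (bounded) kernel derivatives; since $\mathcal{H}(k)^d$ is separable, Pettis's theorem then gives strong measurability. Combined with (iii), Bochner integrability holds, so $\xi = \int f\,\mathrm{d}\pi$ is a well-defined element of $\mathcal{H}(k)^d$.

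I expect the only genuinely delicate step to be the verification that $x \mapsto \partial_i^{(1)} k(x,\cdot)$ is a bona fide $\mathcal{H}(k)$-valued map with the advertised norm bound; this rests on the standard but non-trivial derivative-reproducing machinery, and once invoked the rest of the argument is a bookkeeping exercise combining triangle inequality, H\"older, and the assumed moment bounds.
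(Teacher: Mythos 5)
Your proposal is correct and follows essentially the same route as the paper: pointwise membership of $f(x)$ in $\mathcal{H}(k)^d$ via the derivative-reproducing property (Lemma 4.34 / Corollary 4.36 of \citet{steinwart2008support}), followed by a H\"older bound on $\int \|f\|_{\mathcal{H}(k)^d}\,\mathrm{d}\pi$ using the moment conditions from \Cref{prop: L1}. The only differences are cosmetic — you apply the triangle inequality before H\"older rather than after (avoiding the paper's $2^{1/\alpha}$ factor), and you make explicit the strong-measurability step via continuity and Pettis's theorem, which the paper leaves implicit.
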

\begin{proof}
Since $k$ has continuous first derivatives $x \mapsto (\partial_{i} \otimes \partial_{i}) k(x,x)$, Lemma 4.34 of \citet{steinwart2008support} gives that $(\partial_{i} \otimes 1) k(x,\cdot) \in \mathcal{H}(k)$, and, thus $f \in \mathcal{H}(k)^d$.
Furthermore, $f : \mathbb{R}^d \rightarrow \mathcal{H}(k)^d$ is Bochner $\pi$-integrable since 
\begin{align*}
    \int \|f(x)\|_{\mathcal{H}(k)^d} \; \mathrm{d}\pi(x) & = \int \frac{q(x)}{p(x)} \| \nabla_x k(x,\cdot) + k(x,\cdot) \nabla \log q(x) \|_{\mathcal{H}(k)^d} \mathrm{d}\pi(x) \\
    & \hspace{-10pt} \leq \left( \int \left( \frac{q}{p} \right)^\alpha \mathrm{d}\pi \right)^{\frac{1}{\alpha}} \left( \int \| \nabla_x k(x,\cdot) + k(x,\cdot) \nabla \log q(x) \|_{\mathcal{H}(k)^d}^{\frac{\alpha}{\alpha-1}} \; \mathrm{d}\pi(x) \right)^{\frac{\alpha-1}{\alpha}} \\
    & \hspace{-10pt} \leq 2^{\frac{1}{\alpha}} \left( \int \left( \frac{q}{p} \right)^\alpha \mathrm{d}\pi \right)^{\frac{1}{\alpha}} \left( (C_1^k)^{\frac{\alpha}{\alpha-1}} + (C_0^k)^{\frac{\alpha}{\alpha-1}} \int \|\nabla \log q\|^{\frac{\alpha}{\alpha-1}} \; \mathrm{d}\pi \right)^{\frac{\alpha-1}{\alpha}} < \infty
\end{align*} 
where we have employed the same $C_0^k$ and $C_1^k$ notation as used in the proof of \Cref{prop: L1}.
Thus, from the definition of the Bochner integral, $\xi = \int f \; \mathrm{d}\pi$ exists and is an element of $\mathcal{H}(k)^d$.
\end{proof}

\begin{proposition} \label{prop: expectation form}
Let $k$ and $\pi$ satisfy the preconditions of \Cref{prop: L1}.
Then
\begin{align}
    \mathrm{D}_{p,q}(\pi)^2 & = \iint \frac{q(x)}{p(x)} \frac{q(y)}{p(y)} k_q(x,y) \; \mathrm{d}\pi(x) \mathrm{d}\pi(y) \label{eq: expect form of ksd}
\end{align}
where
\begin{align}
    k_q(x,y) & = \nabla_x \cdot \nabla_y k(x,y) + \langle \nabla_x k(x,y) , \nabla_y \log q(y) \rangle + \langle \nabla_y k(x,y) , \nabla_x \log q(x) \rangle \nonumber \\
    & \hspace{30pt} + k(x,y) \langle \nabla_x \log q(x) , \nabla_y \log q(y) \rangle . \label{eq: kq def}
\end{align}
\end{proposition}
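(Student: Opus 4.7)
The plan is to identify $\mathrm{D}_{p,q}(\pi)$ with the Hilbert-space norm of the representer $\xi \in \mathcal{H}(k)^d$ already constructed in \Cref{prop: representer}, and then expand $\|\xi\|_{\mathcal{H}(k)^d}^2$ as a double integral. Writing $f(x) := (q(x)/p(x))\bigl[\nabla_x k(x,\cdot) + k(x,\cdot) \nabla \log q(x)\bigr]$ and $\xi := \int f \, \mathrm{d}\pi$, the first step is to show that $\int \mathcal{S}_{p,q} h \, \mathrm{d}\pi = \langle h, \xi \rangle_{\mathcal{H}(k)^d}$ for every $h \in \mathcal{H}(k)^d$. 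Taking the supremum over the unit ball, Cauchy--Schwarz (saturated by $h = \xi/\|\xi\|$) then yields $\mathrm{D}_{p,q}(\pi) = \|\xi\|_{\mathcal{H}(k)^d}$, which is the crux of the representation.

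To prove the inner-product identity, I would invoke the reproducing property of $\mathcal{H}(k)$ in both its standard and differentiated forms. Writing $h = (h_1,\dots,h_d)$, one has $h_i(x) = \langle h_i, k(x,\cdot)\rangle_{\mathcal{H}(k)}$; continuity and boundedness of the diagonal derivatives $(\partial_i \otimes \partial_i)k(x,x)$ upgrade this (Corollary 4.36 of \citet{steinwart2008support}) to $\partial_i h_i(x) = \langle h_i, (\partial_i \otimes 1)k(x,\cdot)\rangle_{\mathcal{H}(k)}$. Assembling componentwise gives
\[
\nabla \cdot h(x) + h(x) \cdot \nabla \log q(x) = \bigl\langle h,\, \nabla_x k(x,\cdot) + k(x,\cdot) \nabla \log q(x) \bigr\rangle_{\mathcal{H}(k)^d},
\]
so that $\int \mathcal{S}_{p,q} h \, \mathrm{d}\pi = \int \langle h, f(x)\rangle_{\mathcal{H}(k)^d} \, \mathrm{d}\pi(x) = \langle h, \xi\rangle_{\mathcal{H}(k)^d}$, where the final exchange is legitimate because $f$ is Bochner $\pi$-integrable by \Cref{prop: representer}.

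For the second half, I would compute
\[
\|\xi\|_{\mathcal{H}(k)^d}^2 = \iint \langle f(x), f(y)\rangle_{\mathcal{H}(k)^d} \, \mathrm{d}\pi(x) \mathrm{d}\pi(y),
\]
again justified by Bochner integrability together with Fubini. Expanding the inner product produces four terms, each collapsed by the reproducing property: $\langle \nabla_x k(x,\cdot), \nabla_y k(y,\cdot)\rangle$ becomes $\sum_i (\partial_i \otimes \partial_i) k(x,y) = \nabla_x \cdot \nabla_y k(x,y)$; the two cross terms yield $\langle \nabla_x k(x,y), \nabla \log q(y)\rangle$ and $\langle \nabla_y k(x,y), \nabla \log q(x)\rangle$; and the final term yields $k(x,y)\langle \nabla \log q(x), \nabla \log q(y)\rangle$. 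Pulling out the factor $q(x)q(y)/[p(x)p(y)]$ and summing the four pieces recovers $k_q(x,y)$ exactly as in \eqref{eq: kq def}.

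The main obstacle is not algebraic but analytic: justifying the differentiated reproducing identity (which needs $(\partial_i \otimes \partial_i)k$ bounded and continuous, already assumed in \Cref{prop: L1}) and commuting the inner product with the Bochner integral both in $\int \langle h, f\rangle \, \mathrm{d}\pi = \langle h, \int f \, \mathrm{d}\pi\rangle$ and in $\langle \xi, \xi\rangle = \iint \langle f(x), f(y)\rangle \, \mathrm{d}\pi(x)\mathrm{d}\pi(y)$. Both follow directly from the integrability bound established within the proof of \Cref{prop: representer}; once these technicalities are in hand, the remainder is bookkeeping.
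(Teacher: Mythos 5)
Your proposal is correct and follows essentially the same route as the paper: identify $\xi = \int f\,\mathrm{d}\pi$ from \Cref{prop: representer} as the Riesz representer of $h \mapsto \int \mathcal{S}_{p,q}h\,\mathrm{d}\pi$ via the (differentiated) reproducing property and the commutation of continuous linear functionals with the Bochner integral, then expand $\|\xi\|_{\mathcal{H}(k)^d}^2$ as a double integral whose four terms assemble into $k_q$. The only cosmetic difference is that you spell out the differentiated reproducing identity explicitly, whereas the paper leans on it implicitly through \Cref{prop: representer} and the proof of \Cref{prop: L1}.
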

\begin{proof}
Let $f$ be as in \Cref{eq: f Bochner}.
From \Cref{prop: representer}, $\xi = \int f \; \mathrm{d}\pi \in \mathcal{H}(k)^d$.
Moreover, since $f$ is Bochner $\pi$-integrable and $Tf = \langle h , f \rangle_{\mathcal{H}(k)^d}$ is a continuous linear functional on $\mathcal{H}(k)^d$, from basic properties of Bochner integrals we have $T \xi = T \int f \; \mathrm{d}\pi = \int Tf \; \mathrm{d}\pi$.
In particular,
\begin{align*}
    \langle h , \xi \rangle_{\mathcal{H}(k)^d} & = \left\langle h , \int \frac{q(x)}{p(x)} \left[ \nabla_{x} k(x , \cdot) + k(x,\cdot) \nabla \log q(x) \right] \; \mathrm{d}\pi(x) \right\rangle_{\mathcal{H}(k)^d} \\
    & = \int \frac{q(x)}{p(x)} \left[ \nabla_{x} \langle h , k(x , \cdot) \rangle_{\mathcal{H}(k)^d} + \langle h , k(x,\cdot) \rangle_{\mathcal{H}(k)^d} \nabla \log q(x) \right] \; \mathrm{d}\pi(x) \\
    & = \int \frac{q(x)}{p(x)} \left[ \nabla \cdot h(x) + h(x) \cdot \nabla \log q(x) \right] \; \mathrm{d}\pi(x) = \int \mathcal{S}_{p,q} h \; \mathrm{d}\pi(x)
\end{align*}
which shows that $\xi$ is the Riesz representer of the bounded linear functional $h \mapsto \int \mathcal{S}_{p,q} h \; \mathrm{d}\pi$ on $\mathcal{H}(k)^d$.
It follows from Cauchy--Schwarz that the (squared) operator norm of this functional is
\begin{align*}
    \mathrm{D}_{p,q}(\pi)^2 = \|\xi\|_{\mathcal{H}(k)^d}^2
    = \langle \xi , \xi \rangle_{\mathcal{H}(k)^d} 
    & = \left\langle \int \frac{q(x)}{p(x)} \left[ \nabla_{x} k(x , \cdot) + k(x,\cdot) \nabla \log q(x) \right] \; \mathrm{d}\pi(x) , \right. \\
    & \left. \hspace{30pt} \int \frac{q(y)}{p(y)} \left[ \nabla_{y} k(y , \cdot) + k(y,\cdot) \nabla \log q(y) \right] \; \mathrm{d}\pi(y) \right\rangle_{\mathcal{H}(k)^d} \\
    & = \iint \frac{q(x)}{p(x)} \frac{q(y)}{p(y)} k_q(x,y) \; \mathrm{d}\pi(x) \mathrm{d}\pi(y)
\end{align*}
as claimed.
\end{proof}

\begin{proof}[Proof of \Cref{prop: expectation}]
First we compute derivatives of the kernel $k$ in \Cref{eq: imq}:
\begin{align*}
	\nabla_x k(x,y) & = - \frac{2 \beta}{ \left(\sigma^2 + \|x-y\|^2 \right)^{\beta + 1} } (x-y) \\
	\nabla_y k(x,y) & = \frac{2 \beta}{ \left(\sigma^2 + \|x-y\|^2 \right)^{\beta + 1} } (x-y) \\
	\nabla_x \cdot \nabla_y k(x,y) & = - \frac{4 \beta (\beta + 1) \|x-y\|^2}{ \left(\sigma^2 + \|x-y\|^2 \right)^{\beta + 2} } + \frac{2 \beta d}{ \left(\sigma^2 + \|x-y\|^2 \right)^{\beta + 1} } 
\end{align*}	
Letting $u(x) := \nabla \log q(x)$ form a convenient shorthand, we have that
\begin{align*}
	k_q(x,y) & := \nabla_x \cdot \nabla_y k(x,y) + \langle \nabla_x k(x,y) , u(y) \rangle + \langle \nabla_y k(x,y) , u(x) \rangle + k(x,y) \langle u(x) , u(y) \rangle \\
	& = - \frac{4 \beta (\beta + 1) \|x-y\|^2}{ \left(\sigma^2 + \|x-y\|^2 \right)^{\beta + 2} } + 2 \beta \left[ \frac{ d + \langle u(x) - u(y) , x-y \rangle }{ \left(\sigma^2 + \|x-y\|^2 \right)^{1+\beta} } \right] + \frac{ \langle u(x) , u(y) \rangle }{ \left(\sigma^2 + \|x-y\|^2 \right)^{\beta} }
\end{align*}
which, combined with \Cref{prop: expectation form}, gives the result.
\end{proof}

In addition to the results in the main text, here we present a spectral characterisation of gradient-free kernel Stein discrepancy.
The following result was inspired by an impressive recent contribution to the literature on kernel Stein discrepancy due to  \citet{wynne2022spectral}, and our (informal) proof is based on an essentially identical argument:

\begin{proposition}[Spectral Characterisation]
\label{prop: spectral char}
Consider a positive definite isotropic kernel $k$, and recall that Bochner's theorem guarantees $k(x,y) = \int e^{- i \langle s , x - y \rangle} \mathrm{d} \mu(s)$ for some $\mu \in \mathcal{P}(\mathbb{R}^d)$. 
Then, under regularity conditions that we leave implicit,
\begin{align}
    \mathrm{D}_{p,q}(\pi)^2 & = \int \left\| \int \frac{1}{p(x)} \left\{ e^{- i \langle s , x \rangle} \nabla q(x) - i s e^{- i \langle s , x \rangle} q(x) \right\} \mathrm{d}\pi(x)  \right\|_{\mathbb{C}}^2 \mathrm{d}\mu(s) . \label{eq: spectral char}
\end{align}
\end{proposition}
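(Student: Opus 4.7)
The plan is to take the explicit bilinear representation for $\mathrm{D}_{p,q}(\pi)^2$ established in \Cref{prop: expectation form}, substitute the Bochner representations of $k$ and its derivatives, and then interchange the orders of integration to expose the squared-norm structure on the right-hand side of \Cref{eq: spectral char}.

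First I would differentiate Bochner's formula under the integral sign to obtain $\nabla_y k(x,y) = \int is \, e^{-i\langle s, x-y\rangle} \mathrm{d}\mu(s)$, $\nabla_x k(x,y) = -\nabla_y k(x,y)$, and $\nabla_x \cdot \nabla_y k(x,y) = \int \|s\|^2 e^{-i\langle s, x-y\rangle} \mathrm{d}\mu(s)$. Substituting these into \Cref{eq: kq def} and collecting the four resulting terms under a single $\mu$-integral, an elementary algebraic manipulation identifies
\begin{align*}
k_q(x,y) = \int e^{-i\langle s, x-y\rangle} \sum_{j=1}^d v_j(s,x) \, \overline{v_j(s,y)} \, \mathrm{d}\mu(s),
\end{align*}
where $v(s,x) := \nabla \log q(x) - is$ is viewed as a vector in $\mathbb{C}^d$. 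The key step in the verification is to match $\sum_j v_j(s,x) \overline{v_j(s,y)} = \|s\|^2 + i\langle s, \nabla \log q(x)\rangle - i\langle s, \nabla \log q(y)\rangle + \langle \nabla \log q(x), \nabla \log q(y)\rangle$ with the bracketed expression obtained from substituting the Bochner representations into $k_q$.

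Next I would absorb the phase factor by writing $e^{-i\langle s, x-y\rangle} v_j(s,x) \overline{v_j(s,y)} = w_j(s,x) \overline{w_j(s,y)}$ with $w(s,x) := e^{-i\langle s, x\rangle} v(s,x)$. Inserting this into the formula for $\mathrm{D}_{p,q}(\pi)^2$ from \Cref{prop: expectation form} and applying Fubini to commute $\mathrm{d}\mu(s)$ past $\mathrm{d}\pi(x)\,\mathrm{d}\pi(y)$, the resulting double integral factorises coordinate-wise into $\bigl|\int (q(x)/p(x)) w_j(s,x) \, \mathrm{d}\pi(x)\bigr|^2$ summed over $j$. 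Since
\begin{align*}
\frac{q(x)}{p(x)} w(s,x) = \frac{1}{p(x)} \left\{ e^{-i\langle s, x\rangle} \nabla q(x) - is \, e^{-i\langle s, x\rangle} q(x) \right\}
\end{align*}
by $\nabla q = q \, \nabla \log q$, this is exactly the integrand in \Cref{eq: spectral char}.

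The main obstacle is technical rather than conceptual: justifying differentiation under the Bochner integral and the subsequent application of Fubini requires integrability of $(1 + \|s\|^2) \mathrm{d}\mu(s)$ together with suitable moment and tail conditions on $\pi$, $q/p$ and $\nabla \log q$ of the flavour already appearing in \Cref{prop: L1}. These are precisely the ``regularity conditions'' left implicit in the statement, and once they are in force the algebraic identifications above hold $\mu$-a.e.\ in $s$ and $\pi \otimes \pi$-a.e.\ in $(x,y)$, which suffices for the conclusion.
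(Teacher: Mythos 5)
Your proposal is correct and follows essentially the same route as the paper: differentiate Bochner's representation, factor $k_q(x,y)$ as a $\mu$-integral of $\eta(x,s)\cdot\overline{\eta(y,s)}$ (your $w(s,x)\,q(x)/p(x)$ is exactly the paper's $\eta(x,s)$), and apply Fubini to the bilinear form from \Cref{prop: expectation form}. The only cosmetic difference is direction — you factorise $k_q$ into a sum of products, whereas the paper expands $\int \eta\cdot\overline{\eta}\,\mathrm{d}\mu$ and recognises $k_q$ — and your remarks on the implicit regularity conditions match the paper's acknowledged informality.
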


\noindent The Fourier transform $\widehat{\nabla q}$ of $\nabla q$ is defined as $\int e^{- i \langle s , x \rangle} \nabla q(x) \; \mathrm{d}x$, and a basic property of the Fourier transform is that the transform of a derivative can be computed using the expression $i s \int e^{- i \langle s , x \rangle} q(x) \; \mathrm{d}x$.
This implies that the inner integral in \eqref{eq: spectral char} vanishes when $\pi$ and $p$ are equal.
Thus we can interpret gradient-free kernel Stein discrepancy as a quantification of the uniformity of $\mathrm{d} \pi / \mathrm{d} p$, with a weighting function based on the Fourier derivative identity with regard to $\nabla q$.

\begin{proof}[Proof of \Cref{prop: spectral char}]
From direct calculation, and assuming derivatives and integrals can be interchanged, we have that
\begin{align}
    \nabla_x k(x,y) & = - \int i s e^{-i \langle s , x - y \rangle} \; \mathrm{d}\mu(s), \label{eq: k diff spec 1} \\
    \nabla_y k(x,y) & = \int i s e^{-i \langle s , x - y \rangle} \; \mathrm{d}\mu(s), \label{eq: k diff spec 2} \\
    \nabla_x \cdot \nabla_y k(x,y) & = \int \|s\|^2 e^{-i \langle s , x - y \rangle} \; \mathrm{d}\mu(s). \label{eq: k diff spec 3}
\end{align}
Now, let 
\begin{align*}
    \eta(x,s) = \frac{1}{p(x)} \left\{ e^{ - i \langle s , x \rangle} \nabla q(x) - i s e^{-i \langle s , x \rangle} q(x) \right\} = \frac{q(x)}{p(x)} \left\{ e^{- i \langle s , x \rangle} \nabla \log q(x) - i s e^{- i \langle s , x \rangle} \right\} 
\end{align*}
and note through direct calculation and \Cref{eq: k diff spec 1,eq: k diff spec 2,eq: k diff spec 3} that
\begin{align*}
    & \int \eta(x,s) \cdot \overline{\eta(y,s)} \; \mathrm{d}\mu(s) & \\
    & = \frac{q(x)}{p(x)} \frac{q(y)}{p(y)} \int \left\{ \begin{array}{l} \|s\|^2  + i s \cdot \nabla \log q(x) \\ \qquad - i s \cdot \nabla \log q(y) + \nabla \log q(x) \cdot \nabla \log q(y) \end{array} \right\} e^{- i \langle s , x - y \rangle} \; \mathrm{d}\mu(s) \\
    & = \frac{q(x)}{p(x)} \frac{q(y)}{p(y)} \left\{ \begin{array}{l} \nabla_x \cdot \nabla_y k(x,y)  + \nabla_y k(x,y) \cdot \nabla \log q(x) \\ \qquad + \nabla_x k(x,y) \cdot \nabla \log q(y) + k(x,y) \nabla \log q(x) \cdot \nabla \log q(y) \end{array} \right\} \\
    & = \frac{q(x)}{p(x)} \frac{q(y)}{p(y)} k_{q}(x,y).
\end{align*}
Thus, integrating with respect to $\pi$, and assuming that we may interchange the order of integrals, we have that
\begin{align*}
    \int \left\| \int \eta(x,s) \; \mathrm{d}\pi(x) \right\|_{\mathbb{C}}^2 \; \mathrm{d}\mu(s) & = \int \; \iint \eta(x,s) \cdot \overline{\eta(y,s)} \; \mathrm{d}\pi(x) \mathrm{d}\pi(y) \; \mathrm{d}\mu(s) \\
    & = \iint \; \int \eta(x,s) \cdot \overline{\eta(y,s)} \; \mathrm{d}\mu(s) \; \mathrm{d}\pi(x) \mathrm{d}\pi(y) \\
    & = \iint \frac{q(x)}{p(x)} \frac{q(y)}{p(y)} k_q(x,y) \; \mathrm{d}\pi(x) \mathrm{d}\pi(y)
    = \mathrm{D}_{p,q}(\pi)^2 ,
\end{align*}
where the final equality is \Cref{prop: expectation form}.
This establishes the result.
\end{proof}

To prove \Cref{thm: convergence detection}, two intermediate results are required:

\begin{proposition} \label{prop: reparam ksd}
For an element $\pi \in \mathcal{P}(\mathbb{R}^d)$, assume $Z := \int (q/p) \; \mathrm{d}\pi \in (0,\infty)$.
Assume that $k$ and $\pi$ satisfy the preconditions of \Cref{prop: L1}, and that $\int \|\nabla \log q\|^{\alpha/(\alpha-1)} (q/p) \; \mathrm{d}\pi < \infty$. 
Let $\bar{\pi} := (q \pi) / (p Z)$.
Then $\bar{\pi} \in \mathcal{P}(\mathbb{R}^d)$ and
$$
\mathrm{D}_{p,q}(\pi) = Z \mathrm{D}_{q,q}(\bar{\pi}).
$$
\end{proposition}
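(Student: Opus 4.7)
The plan is essentially a change-of-measure calculation that turns the gradient-free operator $\mathcal{S}_{p,q}$ into the canonical Langevin operator for the target $q$, with the Radon--Nikodym derivative $(q/p)/Z$ absorbing the ratio $q/p$ that otherwise distinguishes $\mathcal{S}_{p,q}$ from $\mathcal{S}_{q,q}$.

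First I would verify that $\bar{\pi}$ is a bona fide probability measure: it is non-negative since $q,p > 0$ a.e.~on the support of interest and $Z > 0$, and integrates to one by the very definition $Z = \int (q/p) \, \mathrm{d}\pi$. Next I would check that the canonical kernel Stein discrepancy $\mathrm{D}_{q,q}(\bar{\pi})$ is itself well-defined by the route of \Cref{prop: L1} applied with the target $q$ in place of $p$ and with $\bar{\pi}$ in place of $\pi$. The moment condition $\int (q/q)^{\alpha} \, \mathrm{d}\bar{\pi} = 1 < \infty$ holds trivially for any $\alpha > 1$, while the gradient moment condition becomes
\begin{align*}
\int \|\nabla \log q\|^{\alpha/(\alpha-1)} \, \mathrm{d}\bar{\pi} \;=\; \frac{1}{Z} \int \|\nabla \log q\|^{\alpha/(\alpha-1)} \frac{q}{p} \, \mathrm{d}\pi \;<\; \infty
\end{align*}
by hypothesis. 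The kernel conditions on $k$ are inherited from the assumption that the preconditions of \Cref{prop: L1} hold for $(\pi, k)$.

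The core of the proof is then a single line: for any $h \in \mathcal{H}(k)^d$,
\begin{align*}
\int \mathcal{S}_{p,q} h \, \mathrm{d}\pi \;=\; \int \frac{q}{p}\bigl(\nabla \cdot h + h \cdot \nabla \log q\bigr) \, \mathrm{d}\pi \;=\; Z \int \bigl(\nabla \cdot h + h \cdot \nabla \log q\bigr) \, \mathrm{d}\bar{\pi} \;=\; Z \int \mathcal{S}_{q,q} h \, \mathrm{d}\bar{\pi},
\end{align*}
where the middle equality uses $\mathrm{d}\bar{\pi} = (q/p) \, \mathrm{d}\pi / Z$ and the final equality is \Cref{def: gfso} specialised to $p=q$. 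Taking the supremum over the unit ball of $\mathcal{H}(k)^d$ on both sides, and pulling the factor $Z \geq 0$ through the supremum, yields $\mathrm{D}_{p,q}(\pi) = Z \, \mathrm{D}_{q,q}(\bar{\pi})$ as claimed.

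There is no real obstacle of substance here; the only point that requires a little care is the bookkeeping around \Cref{prop: L1}, since we need to invoke it twice (once to know that both sides of the displayed identity are finite linear functionals of $h$, justifying the exchange of supremum with the constant $Z$, and once to know that $\mathrm{D}_{q,q}(\bar{\pi})$ in the conclusion is itself well-defined). Everything else reduces to the pointwise algebraic identity $(q/p)(\nabla \cdot h + h \cdot \nabla \log q) = Z \cdot (\mathrm{d}\bar{\pi}/\mathrm{d}\pi)(\nabla \cdot h + h \cdot \nabla \log q)$ combined with the definition of the Langevin Stein operator.
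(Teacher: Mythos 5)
Your proof is correct and takes essentially the same route as the paper: both hinge on the change of measure $\mathrm{d}\bar{\pi} = (q/p)\,\mathrm{d}\pi / Z$, which absorbs the ratio $q/p$ and converts $\mathcal{S}_{p,q}$ into $\mathcal{S}_{q,q}$, with well-definedness of both discrepancies checked via \Cref{prop: L1} exactly as you describe. The only cosmetic difference is that the paper performs the substitution inside the Bochner integral defining the Riesz representer $\xi$ and uses $\mathrm{D}_{p,q}(\pi) = \|\xi\|_{\mathcal{H}(k)^d}$, whereas you work directly with the supremum over the unit ball; these are equivalent because the discrepancy is precisely the operator norm of the functional $h \mapsto \int \mathcal{S}_{p,q} h \, \mathrm{d}\pi$.
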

\begin{proof}
The assumption $Z \in (0,\infty)$ implies that $\bar{\pi} \in \mathcal{P}(\mathbb{R}^d)$.
Furthermore, the assumption $\int \|\nabla \log q\|^{\alpha / (\alpha - 1)} (q/p) \; \mathrm{d}\pi < \infty$ implies that $\int \|\nabla \log q\|^{\alpha / (\alpha - 1)} \; \mathrm{d} \bar{\pi} < \infty$. 
Thus the assumptions of \Cref{prop: L1} are satisfied for both $\pi$ and $\bar{\pi}$, and thus both $\mathrm{D}_{p,q}(\pi)$ and $\mathrm{D}_{q,q}(\bar{\pi})$ are well-defined.
Now, with $\xi$ as in \Cref{prop: representer}, notice that
\begin{align*}
    \mathrm{D}_{p,q}(\pi) = \|\xi\|_{\mathcal{H}(k)^s} & = \left\| \int \frac{q(x)}{p(x)} \left[ \nabla k(x,\cdot) + k(x,\cdot) \nabla \log q(x) \right] \mathrm{d}\pi(x) \right\|_{\mathcal{H}(k)^d}  \\
    & = \left\| \int \left[ \nabla k(x,\cdot) + k(x,\cdot) \nabla \log q(x) \right] Z \mathrm{d} \bar{\pi}(x) \right\|_{\mathcal{H}(k)^d} = Z \mathrm{D}_{q,q}(\bar{\pi}) ,
\end{align*}
as claimed.
\end{proof}

\begin{proposition} \label{prop: lebesgue lemma}
Let $f : \mathbb{R}^d \rightarrow [0,\infty)$ and $\pi \in \mathcal{P}(\mathbb{R}^d)$.
Then $\int f^\alpha \; \mathrm{d}\pi > 0 \Rightarrow \int f \; \mathrm{d}\pi > 0$, for all $\alpha \in (0,\infty)$.
\end{proposition}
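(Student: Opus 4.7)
The plan is to argue by contrapositive on the level sets of $f$. Writing the hypothesis as positivity of an integral of a nonnegative function is the cleanest route: first I would observe that if $\int f \; \mathrm{d}\pi = 0$ for a nonnegative measurable $f$, then $f = 0$ holds $\pi$-almost everywhere, hence $f^\alpha = 0$ also holds $\pi$-almost everywhere (with the convention $0^\alpha = 0$ for $\alpha \in (0,\infty)$), forcing $\int f^\alpha \; \mathrm{d}\pi = 0$ and contradicting the hypothesis.

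More directly, I would use a single decomposition of the positive part of $f$ to give a quantitative lower bound. Suppose $\int f^\alpha \; \mathrm{d}\pi > 0$; then $\pi(\{f > 0\}) > 0$, because otherwise $f^\alpha = 0$ $\pi$-almost everywhere. Write
\begin{align*}
\{f > 0\} = \bigcup_{n \in \mathbb{N}} A_n, \qquad A_n := \{x \in \mathbb{R}^d : f(x) \geq 1/n\},
\end{align*}
as a countable union of an increasing sequence of sets. By countable subadditivity and monotonicity of $\pi$, there exists some $n \in \mathbb{N}$ with $\pi(A_n) > 0$. Then
\begin{align*}
\int f \; \mathrm{d}\pi \geq \int_{A_n} f \; \mathrm{d}\pi \geq \frac{1}{n} \pi(A_n) > 0,
\end{align*}
which is the desired conclusion.

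There is no real obstacle here: the statement is a standard consequence of the fact that, for nonnegative measurable functions, the vanishing of the integral is equivalent to the function being null $\pi$-almost everywhere. The only minor point to check is that the argument is uniform in $\alpha \in (0,\infty)$, which it is, since both the reduction $\{f^\alpha > 0\} = \{f > 0\}$ and the convention $0^\alpha = 0$ apply for all positive $\alpha$, with no dependence on whether $\alpha$ is below or above $1$.
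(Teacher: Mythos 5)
Your proof is correct, but it takes a different route from the paper. The paper works directly from the definition of the Lebesgue integral as a supremum over simple functions: it extracts a simple function $s = \sum_i s_i \mathbbm{1}_{S_i}$ with $0 \leq s \leq f^\alpha$ and positive integral, then observes that $\tilde{s} := s_1^{1/\alpha}\mathbbm{1}_{S_1}$ (for a suitable index) is a simple function below $f$ with positive integral, so the supremum defining $\int f\,\mathrm{d}\pi$ is positive. You instead reduce everything to the standard fact that a nonnegative measurable function has zero integral if and only if it vanishes $\pi$-almost everywhere, combined with the identity $\{f^\alpha > 0\} = \{f > 0\}$ and the level-set decomposition $\{f>0\} = \bigcup_n \{f \geq 1/n\}$ to produce a quantitative lower bound $\int f\,\mathrm{d}\pi \geq \pi(A_n)/n$. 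Both arguments are elementary and complete; yours has the minor advantage of sidestepping a small imprecision in the paper's version, which asserts that some coefficient $s_i$ is positive when what is actually needed is an index with $s_i > 0$ \emph{and} $\pi(S_i) > 0$ simultaneously (immediate, but unstated there). Your contrapositive paragraph alone would already suffice as a proof; the level-set argument is a harmless elaboration of the same idea.
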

\begin{proof}
From the definition of the Lebesgue integral, we have that $\int f^\alpha \; \mathrm{d}\pi = \sup \{ \int s \; \mathrm{d}\pi : s \text{ a simple function with } 0 \leq s \leq f^\alpha \} > 0$.
Thus there exists a simple function $s = \sum_{i=1}^m s_i \mathrm{1}_{S_i}$ with $0 \leq s \leq f^\alpha$ and $\int s \; \mathrm{d}\pi > 0$.
Here the $s_i \in \mathbb{R}$ and the measurable sets $S_i \subset \mathbb{R}^d$ are disjoint.
In particular, it must be the case that at least one of the coefficients $s_i$ is positive; without loss of generality suppose $s_1 > 0$.
Then $\tilde{s} := s_1^{1/\alpha} \mathrm{1}_{S_1}$ is a simple function with $0 \leq \tilde{s} \leq f$ and $\int \tilde{s} \; \mathrm{d}\pi > 0$.
It follows that $\int f \; \mathrm{d}\pi = \sup \{ \int s \; \mathrm{d}\pi : s \text{ a simple function with } 0 \leq s \leq f \} > 0$.
\end{proof}

\begin{proof}[Proof of \Cref{thm: convergence detection}]
Since $\int (q/p)^\alpha \; \mathrm{d}\pi_n \in (0, \infty)$ and $q/p \geq 0$, from \Cref{prop: lebesgue lemma} we have, for each $n$, that $Z_n := \int q/p \; \mathrm{d}\pi_n > 0$.
Thus the assumptions of \Cref{prop: reparam ksd} are satisfied by $k$ and each $\pi_n$, which guarantees that $\mathrm{D}_{p,q}(\pi_n) = Z_n \mathrm{D}_{q,q}(\bar{\pi}_n)$ where $\bar{\pi}_n := (q \pi_n)/(p Z_n) \in \mathcal{P}(\mathbb{R}^d)$.

Now, since $\mathrm{W}_1(\pi_n,p; q/p) \rightarrow 0$, taking $f = 1$ we obtain $Z_n = \int f q / p \; \mathrm{d}\pi_n \rightarrow \int f q / p \; \mathrm{d}p = 1$.
In addition, note that 
\begin{align*}
    \mathrm{W}_1(\bar{\pi}_n,q) & = \sup_{L(f) \leq 1} \left| \int f \; \mathrm{d}\bar{\pi}_n - \int f \; \mathrm{d}q \right| \\
    & = \sup_{L(f) \leq 1} \left| \int f(0) + [f(x) - f(0)] \; \mathrm{d}\bar{\pi}_n(x) - \int f(0) + [f(x) - f(0)] \; \mathrm{d}q(x) \right| \\
    & = \sup_{L(f) \leq 1} \left| \int [f(x) - f(0)] \; \mathrm{d}\bar{\pi}_n(x) - \int [f(x) - f(0)] \; \mathrm{d}q(x) \right| \\
    & = \sup_{\substack{L(f) \leq 1 \\ f(0) = 0}} \left| \int f \; \mathrm{d}\bar{\pi}_n - \int f \; \mathrm{d}q \right|
\end{align*}
Thus, from the triangle inequality, we obtain the bound
\begin{align*}
\mathrm{W}_1(\bar{\pi}_n,q) & = \sup_{\substack{L(f) \leq 1 \\ f(0) = 0}} \left| \int f \; \mathrm{d}\bar{\pi}_n - \int f \; \mathrm{d}q \right| \\
& = \sup_{\substack{L(f) \leq 1 \\ f(0) = 0}} \left| \int \frac{fq}{pZ_n} \; \mathrm{d}\pi_n - \int \frac{fq}{p} \; \mathrm{d}p \right| \\
& \leq \sup_{\substack{L(f) \leq 1 \\ f(0) = 0}} \left| \int \frac{fq}{pZ_n} \; \mathrm{d}\pi_n - \int \frac{fq}{p} \; \mathrm{d}\pi_n \right| + \sup_{\substack{L(f) \leq 1 \\ f(0) = 0}} \left| \int \frac{fq}{p} \; \mathrm{d}\pi_n - \int \frac{fq}{p} \; \mathrm{d}p \right| \\
& = \underbrace{ \left( \frac{1-Z_n}{Z_n} \right) }_{\rightarrow 0} \underbrace{ \sup_{\substack{L(f) \leq 1 \\ f(0) = 0}} \left| \int \frac{fq}{p} \; \mathrm{d}\pi_n \right| }_{ (*) } + \underbrace{ \mathrm{W}_1\left(\pi_n,p ; \frac{q}{p}\right) }_{\rightarrow 0}
\end{align*}
as $n \rightarrow \infty$.
For $(*)$, since $q/p \geq 0$, the supremum is realised by $f(x) = \|x\|$ and
\begin{align*}
    (*) & = \int \|x\| \frac{q(x)}{p(x)} \; \mathrm{d}\pi_n(x) < \infty .
\end{align*}
Thus we have established that $\mathrm{W}_1(\bar{\pi}_n,q) \rightarrow 0$.
Since $\nabla \log q$ is Lipschitz with $\int \|\nabla \log q\|^2 \; \mathrm{d}q < \infty$ and $k$ has continuous and bounded second derivatives, the standard kernel Stein discrepancy has 1-Wasserstein convergence detection \citep[Proposition 9 of][]{gorham2017measuring}, meaning that $\mathrm{W}_1(\bar{\pi}_n,q) \rightarrow 0$ implies that $\mathrm{D}_{q,q}(\bar{\pi}_n) \rightarrow 0$ and thus, since $Z_n \rightarrow 1$, $\mathrm{D}_{p,q}(\pi_n) \rightarrow 0$.
This completes the proof.
\end{proof}

To prove \Cref{prop: no convergence control}, an intermediate result is required:

\begin{proposition}
\label{prop: radial kernel reduce}
Let $k(x,y) = \phi(x-y)$ be a kernel with $\phi$ twice differentiable and let $q \in \mathcal{P}(\mathbb{R}^d)$ with $\nabla \log q$ well-defined.
Then $k_q(x,x) = - \Delta \phi(0) + \phi(0)\|\nabla \log q(x)\|^2$, where $\Delta = \nabla \cdot \nabla$ and $k_q$ was defined in \Cref{prop: expectation form}.
\end{proposition}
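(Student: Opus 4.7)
The plan is to substitute $k(x,y) = \phi(x-y)$ directly into the expression for $k_q$ from \Cref{prop: expectation form}, evaluate at $y = x$, and exploit the symmetry of $\phi$ to eliminate the cross terms.

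First I would record the elementary derivatives: by the chain rule, $\nabla_x k(x,y) = \nabla \phi(x-y)$, $\nabla_y k(x,y) = -\nabla \phi(x-y)$, and $\nabla_x \cdot \nabla_y k(x,y) = -\Delta \phi(x-y)$, where $\Delta = \nabla \cdot \nabla$. Setting $y = x$ collapses the argument of $\phi$ and all its derivatives to $0$, so that $k(x,x) = \phi(0)$, $\nabla_x k(x,x) = \nabla \phi(0)$, $\nabla_y k(x,x) = -\nabla \phi(0)$, and $\nabla_x \cdot \nabla_y k(x,x) = -\Delta \phi(0)$.

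Substituting into the definition of $k_q$ in \Cref{eq: kq def} gives
\begin{align*}
k_q(x,x) = -\Delta \phi(0) + \langle \nabla \phi(0), \nabla \log q(x) \rangle - \langle \nabla \phi(0), \nabla \log q(x) \rangle + \phi(0) \|\nabla \log q(x)\|^2 .
\end{align*}
The two middle terms are manifestly equal in magnitude and opposite in sign, and so cancel regardless of the value of $\nabla \phi(0)$. (Alternatively, symmetry of the kernel forces $\phi(-z) = \phi(z)$, hence $\nabla \phi(0) = 0$, which kills each cross term individually.) This leaves $k_q(x,x) = -\Delta \phi(0) + \phi(0) \|\nabla \log q(x)\|^2$, as claimed.

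There is no genuine obstacle here: the result is a direct calculation once one records the four required derivatives of $\phi$ and notes that the two cross terms cancel. The only point requiring a modicum of care is checking that the chain-rule signs line up correctly, which is immediate since $\partial_{x_i} \phi(x-y) = (\partial_i \phi)(x-y) = -\partial_{y_i} \phi(x-y)$.
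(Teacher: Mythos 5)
Your proof is correct and follows essentially the same route as the paper: compute the derivatives of $\phi(x-y)$, evaluate at $y=x$, and substitute into \Cref{eq: kq def}. The only cosmetic difference is that the paper first invokes symmetry of $k$ to conclude $\nabla\phi(0)=0$ and kills each cross term individually, whereas you observe (correctly) that the two cross terms cancel against each other regardless; both observations are valid and lead to the same conclusion.
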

\begin{proof}
First, note that we must have $\nabla \phi(0) = 0$, else the symmetry property of $k$ would be violated.
Now, $\nabla_x k(x,y) = (\nabla \phi)(x-y)$, $\nabla_y k(x,y) = - (\nabla \phi)(x-y)$ and $\nabla_x \cdot \nabla_y k(x,y) = - \Delta \phi(x-y)$.
Thus $\left. \nabla_x k(x,y) \right|_{y = x} = \left. \nabla_y k(x,y) \right|_{x = y} = 0$ and $\left. \nabla_x \cdot \nabla_y k(x,y) \right|_{x = y} = - \Delta\phi(0)$.
Plugging these expressions into \Cref{eq: kq def} yields the result.
\end{proof}

\begin{proof}[Proof of \Cref{prop: no convergence control}]
Let $k_q$ be defined as in \Cref{prop: expectation form}.
From Cauchy--Schwarz, we have that $k_q(x,y) \leq \sqrt{k_q(x,x)} \sqrt{k_q(y,y)}$, and plugging this into \Cref{prop: expectation form} we obtain the bound
\begin{align}
    \mathrm{D}_{p,q}(\pi) & \leq \int \frac{q(x)}{p(x)} \sqrt{k_q(x,x)} \; \mathrm{d}\pi(x) \label{eq: D bound}
\end{align}
For a radial kernel $k(x,y) = \phi(x-y)$ with $\phi$ twice differentiable, we have $\phi(0) > 0$ (else $k$ must be the zero kernel, since by Cauchy--Schwarz $|k(x,y)| \leq \sqrt{k(x,x)} \sqrt{k(y,y)} = \phi(0)$ for all $x,y \in \mathbb{R}^d$), and $k_q(x,x) = - \Delta \phi(0) + \phi(0)\|\nabla \log q(x)\|^2$ (from \Cref{prop: radial kernel reduce}).
Plugging this expression into \Cref{eq: D bound} and applying Jensen's inequality gives that
\begin{align*}
    \mathrm{D}_{p,q}(\pi)^2 & \leq \int \frac{q(x)^2}{p(x)^2} \left[ - \Delta \phi(0) + \phi(0)\|\nabla \log q(x)\|^2 \right] \; \mathrm{d}\pi(x) .
\end{align*}
Now we may pick a choice of $p$, $q$ and $(\pi_n)_{n \in \mathbb{N}}$ ($\pi_n \nweak p$) for which this bound can be made arbitrarily small.
One example is $q = \mathcal{N}(0,1)$, $p = \mathcal{N}(0,\sigma^2)$ (any fixed $\sigma > 1$), for which we have 
\begin{align*}
    \mathrm{D}_{p,q}(\pi)^2 & \leq \int \sigma^2 \exp(-\gamma \|x\|^2) \left[ - \Delta \phi(0) + \phi(0)\|x\|^2 \right] \; \mathrm{d}\pi(x) 
\end{align*}
where $\gamma = 1 - \sigma^{-2} > 0$.
Then it is clear that, for example, the sequence $\pi_n = \delta(n e_1)$ (where $e_1 = [1,0,\dots,0]^\top$) satisfies the assumptions of \Cref{prop: L1} and, for this choice,
\begin{align*}
    \mathrm{D}_{p,q}(\pi_n)^2 & \leq \sigma^2 \exp(-\gamma n^2) \left[ - \Delta \phi(0) + \phi(0)n^2 \right] \rightarrow 0
\end{align*}
and yet $\pi_n \nweak p$, as claimed.
\end{proof}

\begin{proof}[Proof of \Cref{thm: convergence control}]
Since $\inf_{x \in \mathbb{R}^d} q(x)/p(x) > 0$, for each $n$ we have $Z_n := \int q/p \; \mathrm{d}\pi_n > 0$ and, furthermore, the assumption $\int \|\nabla \log q\|^{\alpha / (\alpha - 1)} (q/p) \; \mathrm{d}\pi_n < \infty$ implies that $\int \|\nabla \log q\|^{\alpha / (\alpha - 1)} \; \mathrm{d}\pi_n < \infty$.
Thus the assumptions of \Cref{prop: reparam ksd} are satisfied by $k$ and each $\pi_n$, and thus we have $\mathrm{D}_{p,q}(\pi_n) = Z_n \mathrm{D}_{q,q}(\bar{\pi}_n)$ where $\bar{\pi}_n := (q \pi_n)/(p Z_n) \in \mathcal{P}(\mathbb{R}^d)$.

From assumption, $Z_n \geq \inf_{x \in \mathbb{R}^d} q(x) / p(x)$ is bounded away from 0.
Thus if $\mathrm{D}_{p,q}(\pi_n) \rightarrow 0$ then $\mathrm{D}_{q,q}(\bar{\pi}_n) \rightarrow 0$.
Furthermore, since $q \in \mathcal{Q}(\mathbb{R}^d)$ and the inverse multi-quadric kernel $k$ is used, the standard kernel Stein discrepancy has convergence control, meaning that $\mathrm{D}_{q,q}(\bar{\pi}_n) \rightarrow 0$ implies $\bar{\pi}_n \weak q$ \citep[Theorem 8 of][]{gorham2017measuring}. It therefore suffices to show that $\bar{\pi}_n \weak q$ implies $\pi_n \weak p$.

From the Portmanteau theorem, $\pi_n \weak p$ is equivalent to $\int g \; \mathrm{d}\pi_n \rightarrow \int g \; \mathrm{d}p$ for all functions $g$ which are continuous and bounded.
Thus, for an arbitrary continuous and bounded function $g$, consider $f = gp/q$, which is also continuous and bounded. 
Then, since $\bar{\pi}_n \weak q$, we have (again from the Portmanteau theorem) that $Z_n^{-1} \int g \; \mathrm{d} \pi_n = \int f \; \mathrm{d}\bar{\pi}_n \rightarrow \int f \; \mathrm{d}q = \int g \; \mathrm{d}p$.
Furthermore, the specific choice $g = 1$ shows that $Z_n^{-1} \rightarrow 1$, and thus $\int g \; \mathrm{d}\pi_n \rightarrow \int g \; \mathrm{d}p$ in general.
Since $g$ was arbitrary, we have established that $\pi_n \weak p$, completing the proof.
\end{proof}

To prove \Cref{thm: importance sampling}, an intermediate result is required:

\begin{proposition}
\label{prop: riabiz lemma 4}
Let $Q \in \mathcal{P}(\mathbb{R}^d)$ and let $k_q : \mathbb{R}^d \times \mathbb{R}^d \rightarrow \mathbb{R}$ be a reproducing kernel with $\int k_q(x,\cdot) \; \mathrm{d}q = 0$ for all $x \in \mathbb{R}^d$.
Let $(x_n)_{n \in \mathbb{N}}$ be a sequence of random variables independently sampled from $q$ and assume that $\int \exp\{\gamma k_q(x,x)\} \; \mathrm{d}q(x) < \infty$ for some $\gamma > 0$.
Then 
\begin{align*}
    \mathrm{D}_{q,q}\left( \frac{1}{n} \sum_{i=1}^n \delta(x_i) \right) \rightarrow 0
\end{align*}
almost surely as $n \rightarrow \infty$.
\end{proposition}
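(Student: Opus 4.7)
The plan is to recast $\mathrm{D}_{q,q}(\pi_n)^2$ as the squared $\mathcal{H}(k_q)$-norm of an empirical mean of i.i.d.\ Hilbert space–valued random elements, and then invoke a strong law of large numbers. Applying \Cref{prop: expectation form} with $p = q$, together with the reproducing property of $k_q$, yields
\begin{align*}
    \mathrm{D}_{q,q}(\pi_n)^2 \;=\; \frac{1}{n^2} \sum_{i,j=1}^n k_q(x_i,x_j) \;=\; \left\| \frac{1}{n}\sum_{i=1}^n k_q(\cdot,x_i) \right\|_{\mathcal{H}(k_q)}^{2} .
\end{align*}
Set $Y_i := k_q(\cdot, x_i) \in \mathcal{H}(k_q)$; these are i.i.d.\ random elements of $\mathcal{H}(k_q)$ with $\|Y_i\|_{\mathcal{H}(k_q)}^2 = k_q(x_i,x_i)$ by the reproducing property.

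By the Stein-type vanishing hypothesis $\int k_q(x,\cdot)\,\mathrm{d}q(x) = 0$ in $\mathcal{H}(k_q)$, and the exponential moment condition implies $\mathbb{E}\|Y_1\|_{\mathcal{H}(k_q)} = \mathbb{E}[\sqrt{k_q(x_1,x_1)}] < \infty$ — in fact, all polynomial moments of $k_q(x_1,x_1)$ are finite. Hence the Bochner integral $\mathbb{E}[Y_1]$ exists and equals the zero element of $\mathcal{H}(k_q)$. I would then apply Mourier's strong law of large numbers for i.i.d.\ random elements in a separable Banach space to conclude $\frac{1}{n}\sum_{i=1}^n Y_i \to 0$ almost surely in $\mathcal{H}(k_q)$, after which continuity of the norm yields $\mathrm{D}_{q,q}(\pi_n) \to 0$ almost surely.

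The principal obstacle is discharging the measurability and separability hypotheses of Mourier's theorem: one needs $\mathcal{H}(k_q)$ (or at least the closed span of $\{k_q(\cdot,x) : x \in \mathbb{R}^d\}$) to be separable, and $x \mapsto k_q(\cdot,x)$ to be Borel (equivalently, Pettis) measurable as a map into that space. In the setting of \Cref{thm: importance sampling} this is straightforward: $k_q$ is built from the inverse multi-quadric kernel and a Lipschitz score $\nabla \log q$, so $(x,y) \mapsto k_q(x,y)$ is jointly continuous and the closed span of the kernel sections is separable via a countable dense subset of $\mathbb{R}^d$; strong measurability then follows from Pettis's theorem. The exponential moment condition is far stronger than what Mourier demands here, but it is the form required downstream in the proof of \Cref{thm: importance sampling}, where it will support a concentration-style argument for the minimised discrepancy; one could alternatively derive a rate by combining a Hilbert-space Bernstein inequality with Borel--Cantelli, though that is not needed for the stated almost sure limit.
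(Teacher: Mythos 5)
Your proof is correct, but it takes a genuinely different route from the paper. The paper disposes of this proposition by citation: it is Lemma 4 of \citet{riabiz2020optimal} specialised to i.i.d.\ samples, the only remark being that $\mathbb{E}[k_q(x_i,x_j)] = 0$ exactly for $i \neq j$ by independence, so the mixing-type bounds on off-diagonal terms needed for Markov chains become vacuous. The underlying argument there is a hands-on moment/Borel--Cantelli analysis of $\tfrac{1}{n^2}\sum_{i,j} k_q(x_i,x_j)$, built to accommodate $V$-uniformly ergodic chains, and it is precisely in controlling the diagonal contribution $\max_{i \le n} k_q(x_i,x_i)$ that the exponential moment condition earns its keep. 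Your argument instead identifies $\mathrm{D}_{q,q}(\pi_n)$ with the $\mathcal{H}(k_q)$-norm of the empirical mean of the i.i.d.\ elements $Y_i = k_q(\cdot,x_i)$, checks $\mathbb{E}[Y_1]=0$ as a Bochner integral, and invokes Mourier's strong law --- a cleaner and more economical proof for the i.i.d.\ case, which correctly exposes that only $\mathbb{E}\sqrt{k_q(x_1,x_1)} < \infty$ is really needed rather than exponential integrability. The trade-offs are as you anticipate: you must discharge separability and strong measurability of $x \mapsto k_q(\cdot,x)$, which the statement as written does not guarantee (no continuity of $k_q$ is assumed) but which you correctly verify in the only setting where the proposition is used, namely \Cref{thm: importance sampling}; and your route does not extend to dependent samples, whereas the cited argument does. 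One small caveat: your closing suggestion that the exponential moment is retained to support a downstream concentration argument is not borne out --- in the proof of \Cref{thm: importance sampling} that condition is used only to verify the hypotheses of this proposition; it is an artefact of inheriting the Markov-chain lemma rather than a downstream necessity.
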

\begin{proof}
This is Lemma 4 in \citet{riabiz2020optimal}, specialised to the case where samples are independent and identically distributed.
Although not identical to the statement in \citet{riabiz2020optimal}, one obtains this result by following an identical argument and noting that the expectation of $k_q(x_i,x_j)$ is identically 0 when $i \neq j$ (due to independence of $x_i$ and $x_j$), so that bounds on these terms are not required.
\end{proof}

\begin{proof}[Proof of \Cref{thm: importance sampling}]
Since $\pi_n$ has finite support, all conditions of \Cref{thm: convergence control} are satisfied.
Thus it is sufficient to show that almost surely $\mathrm{D}_{p,q}(\pi_n) \rightarrow 0$.
To this end, we follow Theorem 3 of \citet{riabiz2020optimal} and introduce the classical importance weights $w_i = p(x_i) / q(x_i)$, which are well-defined since $q > 0$.
The normalised weights $\bar{w}_i = w_i / W_n$, $W_n := \sum_{j=1}^n w_j$ satisfy $0 \leq \bar{w}_1,\dots,\bar{w}_n$ and $\bar{w}_1 + \dots + \bar{w}_n = 1$, and thus the optimality of $w^*$, together with the integral form of the gradient-free kernel Stein discrepancy in \Cref{eq: expect form of ksd}, gives that
\begin{align}
    \mathrm{D}_{p,q}\left( \sum_{i=1}^n w_i^* \delta(x_i) \right) \leq \mathrm{D}_{p,q}\left( \sum_{i=1}^n \bar{w}_i \delta(x_i) \right) & = \frac{1}{W_n} \sqrt{ \sum_{i=1}^n \sum_{j=1}^n k_q(x_i,x_j) } \nonumber \\
    & = \left( \frac{1}{n} W_n \right)^{-1} \mathrm{D}_{q,q}\left( \frac{1}{n} \sum_{i=1}^n \delta(x_i) \right) \label{eq: as bound}
\end{align}
From the strong law of large numbers, almost surely $n^{-1} W_n \rightarrow \int \frac{p}{q} \; \mathrm{d}q = 1$.
Thus it suffices to show that the final term in \Cref{eq: as bound} converges almost surely to 0.
To achieve this, we can check the conditions of \Cref{prop: riabiz lemma 4} are satisfied.

Since $q$ and $h(\cdot) = \mathcal{S}_{p,q} k(x,\cdot)$ satisfy the conditions of \Cref{prop: int to 0}, the condition $\int k_q(x,\cdot) \; \mathrm{d}q = 0$ is satisfied.
Let $\phi(z) = (1 + \|z\|^2)^{-\beta}$ so that $k(x,y) = \phi(x-y)$ is the inverse multi-quadric kernel.
Note that $\phi(0) = 1$ and $\Delta \phi(0) = -d$.
Then, from \Cref{prop: radial kernel reduce}, we have that $k_q(x,x) = - \Delta \phi(0) + \phi(0) \|\nabla \log q\|^2 = d + \|\nabla \log q\|^2$.
Then 
\begin{align*}
\int \exp\{\gamma k_q(x,x)\} \; \mathrm{d}q(x) 
& = \exp\{\gamma d\} \int \exp\{\gamma \|\nabla \log q\|^2\} \; \mathrm{d}q < \infty ,
\end{align*}
which establishes that the conditions of \Cref{prop: riabiz lemma 4} are indeed satisfied and completes the proof.
\end{proof}

To prove \Cref{prop: stoch grads}, we exploit the following general result due to \citet{fisher2021measure}:

\begin{proposition} \label{prop: gradients}
Let $R \in \mathcal{P}(\mathbb{R}^p)$.
Let $\Theta \subseteq \mathbb{R}^p$ be an open set and let $u : \mathbb{R}^d \times \mathbb{R}^d \rightarrow \mathbb{R}$, $T^\theta : \mathbb{R}^d \rightarrow \mathbb{R}^d$, $\theta \in \Theta$, be functions such that, for all $\vartheta \in \Theta$,
\begin{enumerate}[leftmargin=35pt]
    \item[\normalfont (A1)] $\iint |u(T^\vartheta(x),T^\vartheta(y))| \; \mathrm{d}R(x) \mathrm{d}R(y) < \infty$;
    \item[\normalfont (A2)] there exists an open neighbourhood $N_\vartheta \subset \Theta$ of $\vartheta$ such that 
    $$
    \iint \sup_{\theta \in N_\vartheta} \|\nabla_{\theta} u(T^\theta(x),T^\theta(y))\| \; \mathrm{d}R(x) \mathrm{d}R(y) < \infty .
    $$
\end{enumerate}
Then $F(\theta) := \iint u(T^\theta(x),T^\theta(y)) \; \mathrm{d}R(x) \mathrm{d}R(y)$ is well-defined for all $\theta \in \Theta$ and 
\begin{align*} 
    \nabla_\theta F(\theta)
    = \mathbb{E} \left[ \frac{1}{n(n-1)} \sum\limits_{i \neq j} \nabla_\theta u(T^\theta(x_i),T^\theta(x_j)) \right] ,
\end{align*}
where the expectation is taken with respect to independent samples $x_1,\dots,x_n \sim R$.
\end{proposition}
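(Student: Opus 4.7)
The plan is to establish the proposition in three steps: (i) verify that $F(\theta)$ is well-defined on $\Theta$; (ii) justify exchanging the gradient and the double integral using the classical Leibniz rule (i.e.\ dominated convergence applied to finite-difference quotients); and (iii) rewrite the resulting double integral as the expectation of a second-order U-statistic. Step (i) is immediate: for each $\vartheta \in \Theta$, assumption (A1) states precisely that the integrand is absolutely integrable against $R \otimes R$, so $F(\vartheta)$ exists by Fubini's theorem.

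Step (ii) is the substantive part. Fix $\vartheta \in \Theta$ and let $N_\vartheta$ be the open neighbourhood supplied by (A2). For a component $\ell \in \{1,\dots,p\}$ and any sequence $h_m \to 0$ with $\vartheta + h_m e_\ell \in N_\vartheta$, the mean value theorem (applied to the map $t \mapsto u(T^{\vartheta + t e_\ell}(x),T^{\vartheta + t e_\ell}(y))$ for $t$ in a segment contained in $N_\vartheta$) gives the pointwise bound
\begin{align*}
\left| \frac{u(T^{\vartheta + h_m e_\ell}(x),T^{\vartheta + h_m e_\ell}(y)) - u(T^{\vartheta}(x),T^{\vartheta}(y))}{h_m} \right| \leq \sup_{\theta \in N_\vartheta} \|\nabla_\theta u(T^\theta(x),T^\theta(y))\|.
\end{align*}
By (A2) the right-hand side is $R \otimes R$-integrable, so dominated convergence permits passage of the limit $h_m \to 0$ through the double integral, yielding $\partial_{\theta_\ell} F(\vartheta) = \iint \partial_{\theta_\ell} u(T^\vartheta(x),T^\vartheta(y)) \; \mathrm{d}R(x) \mathrm{d}R(y)$. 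Reassembling components gives $\nabla_\theta F(\vartheta) = \iint \nabla_\theta u(T^\vartheta(x),T^\vartheta(y)) \; \mathrm{d}R(x) \mathrm{d}R(y)$, and since $\vartheta \in \Theta$ was arbitrary this identity holds throughout $\Theta$.

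Step (iii) is bookkeeping. Because $x_1,\dots,x_n$ are i.i.d.\ with law $R$, for any ordered pair $(i,j)$ with $i \neq j$ the joint law of $(x_i,x_j)$ is $R \otimes R$, so by Fubini (and the integrability established in step (ii))
\begin{align*}
\mathbb{E}\!\left[ \nabla_\theta u(T^\theta(x_i),T^\theta(x_j)) \right] = \iint \nabla_\theta u(T^\theta(x),T^\theta(y)) \; \mathrm{d}R(x) \mathrm{d}R(y) = \nabla_\theta F(\theta).
\end{align*}
There are exactly $n(n-1)$ such ordered pairs, so averaging over them and applying linearity of expectation produces the stated representation.

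The only genuine obstacle is the Leibniz step: one must confirm that (A2) supplies precisely the local uniform dominator required for the finite-difference quotients, and that $N_\vartheta$ can be chosen small enough that the segments $\{\vartheta + t e_\ell : |t| \leq |h_m|\}$ lie inside $N_\vartheta$ for all sufficiently large $m$ (which follows from openness of $N_\vartheta$). Everything else reduces to Fubini and standard properties of i.i.d.\ samples.
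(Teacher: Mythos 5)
Your argument is correct, but it is worth noting that the paper does not actually prove this statement at all: its ``proof'' is a one-line citation to Proposition~1 of \citet{fisher2021measure}, where the result is established in the source reference. What you have written is a complete, self-contained derivation of that imported result, and it follows the standard route one would expect the cited proof to take: (A1) plus Fubini for well-definedness, the mean value theorem to bound difference quotients by the local uniform dominator supplied by (A2), dominated convergence to pass $\partial_{\theta_\ell}$ through the double integral, and the elementary observation that each of the $n(n-1)$ ordered pairs $(x_i,x_j)$ with $i \neq j$ has law $R \otimes R$, so the U-statistic is unbiased. The only points you leave implicit --- that $\theta \mapsto u(T^\theta(x),T^\theta(y))$ is assumed differentiable on $N_\vartheta$ (which is already implicit in the statement of (A2)), and that the supremum in (A2) is measurable --- are standard and do not constitute gaps. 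The benefit of your version is that the reader need not consult the external reference; the cost is only length. No corrections are needed.
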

\begin{proof}
This is Proposition 1 in \citet{fisher2021measure}.
\end{proof}

\begin{proof}[Proof of \Cref{prop: stoch grads}]
In what follows we aim to verify the conditions of \Cref{prop: gradients} hold for the choice $u(x,y) = k_q(x,y)$, where $k_q$ was defined in \Cref{eq: kq def}.

\vspace{5pt}
\noindent \textit{(A1):}
From the first line in the proof of \Cref{prop: representer}, the functions $\mathcal{S}_{p,q} k(x,\cdot)$, $x \in \mathbb{R}^d$, are in $\mathcal{H}(k)^d$.
Since $(x,y) \mapsto u(x,y) = \langle \mathcal{S}_{p,q}k(x,\cdot) , \mathcal{S}_{p,q} k(y,\cdot) \rangle_{{\mathcal{H}(k)}^d}$ is positive semi-definite, from Cauchy--Schwarz, $|u(x,y)| \leq \sqrt{u(x,x)} \sqrt{u(y,y)}$.
Thus 
\begin{align*}
\int |u(T^\theta(x),T^\theta(y))| \; \mathrm{d}R(x) \mathrm{d}R(y) & = \int |u(x,y)| \; \mathrm{d}T^\theta_\# R(x) \mathrm{d}T^\theta_\# R(y) \\
& \leq \left( \int \sqrt{u(x,x)} \; \mathrm{d}T^\theta_\# R(x) \right)^2.
\end{align*}
Since $k(x,y) = \phi(x-y)$, we have from \Cref{prop: radial kernel reduce} that 
\begin{align*}
u(x,x) & = \left( \frac{q(x)}{p(x)} \right)^2 \left[ - \Delta\phi(0) + \phi(0) \|\nabla \log q(x)\|^2 \right]
\end{align*}
and
\begin{align*}
\int \sqrt{u(x,x)} \; \mathrm{d}T^\theta_\# R(x) & \leq \sqrt{ \int \left( \frac{q}{p} \right)^2 \; \mathrm{d}T^\theta_\# R } \sqrt{ \int | - \Delta\phi(0) + \phi(0) \|\nabla \log q\|^2 | \; \mathrm{d}T^\theta_\# R }
\end{align*}
which is finite by assumption.

\vspace{5pt}
\noindent \textit{(A2):}
Fix $x,y \in \mathbb{R}^d$ and let $R_x(\theta) := q(T^\theta(x)) / p(T^\theta(y))$.
From repeated application of the product rule of differentiation, we have that
\begin{align*}
\nabla_\theta u(T^\theta(x) , T^\theta(y)) & = \underbrace{ k_q(T^\theta(x) , T^\theta(y)) \nabla_\theta \left[ R_x(\theta) R_y(\theta)  \right] }_{(*)} + \underbrace{R_x(\theta) R_y(\theta) \nabla_\theta k_q(T^\theta(x) , T^\theta(y)) }_{(**)}  .  
\end{align*}
Let $b_p(x) := \nabla \log p(x)$, $b_q(x) := \nabla \log q(x)$, $b(x) := b_q(x) - b_p(x)$, and $[\nabla_\theta T^\theta(x)]_{i,j} = (\partial / \partial \theta_i) T_j^\theta(x)$.
In what follows, we employ a matrix norm on $\mathbb{R}^{d \times d}$ which is consistent with the Euclidean norm on $\mathbb{R}^d$, meaning that $\|\nabla_\theta T^\theta(x) b(T^\theta(x)) \| \leq \| \nabla_\theta T^\theta(x) \| \| b(T^\theta(x)) \|$ for each $\theta \in \Theta$ and $x \in \mathbb{R}^d$. 
Considering the first term $(*)$, further applications of the chain rule yield that
\begin{align*}
\nabla_\theta \left[ R_x(\theta) R_y(\theta)  \right] & = R_x(\theta) R_y(\theta) [\nabla_\theta T^\theta(x) b(T^\theta(x)) + \nabla_\theta T^\theta(y) b(T^\theta(y)) ]
\end{align*}
and from the triangle inequality we obtain a bound
\begin{align*}
\| \nabla_\theta \left[ R_x(\theta) R_y(\theta)  \right] \| & \leq R_x(\theta) R_y(\theta) \left[ \| \nabla_\theta T^\theta(x) \|  \| b(T^\theta(x)) \| + \| \nabla_\theta T^\theta(y) \|  \| b(T^\theta(y)) \| \right] .
\end{align*}
Let $\lesssim$ denote inequality up to an implicit multiplicative constant.
Since we assumed that $\|\nabla_\theta T^\theta(x)\|$ is bounded, and the inverse multi-quadric kernel $k$ is bounded, we obtain that
\begin{align*}
    |(*)| & \lesssim R_x(\theta) R_y(\theta) \left[ \| b(T^\theta(x)) \| + \| b(T^\theta(y)) \| \right] .
\end{align*}
Similarly, from \Cref{eq: kq def}, and using also the fact that the inverse multi-quadric kernel $k$ has derivatives or all orders \citep[Lemma 4 of][]{fisher2021measure}, we obtain a bound 
\begin{align*}
    \|\nabla_\theta k_q(T^\theta(x),T^\theta(y))\| & \lesssim \left[ 1 + \|b_q(T^\theta(x))\| + \|\nabla b_q(T^\theta(x))\| \right] \left[ 1 + \|b_q(T^\theta(y))\| \right] \\
    & \qquad + \left[ 1 + \|b_q(T^\theta(y))\| + \|\nabla b_q(T^\theta(y))\| \right] \left[ 1 + \|b_q(T^\theta(x))\| \right]
\end{align*}
which we multiply by $R_x(\theta) R_y(\theta)$ to obtain a bound on $(**)$.
Thus we have an overall bound
\begin{align*}
\|\nabla_\theta u(T^\theta(x), T^\theta(y))\| & \lesssim R_x(\theta) R_y(\theta) \left\{ \left[ 1 + \|b_q(T^\theta(x))\| + \|\nabla b_q(T^\theta(x))\| \right] \left[ 1 + \|b_q(T^\theta(y))\| \right] \right. \\
    & \hspace{80pt} \left. + \left[ 1 + \|b_q(T^\theta(y))\| + \|\nabla b_q(T^\theta(y))\| \right] \left[ 1 + \|b_q(T^\theta(x))\| \right] \right\} .
\end{align*}
Substituting this bound into $\iint \sup_{\theta \in N_\vartheta} \|\nabla_{\theta} u(T^\theta(x),T^\theta(y))\| \; \mathrm{d}R(x) \mathrm{d}R(y)$, and factoring terms into products of single integrals, we obtain an explicit bound on this double integral in terms of the following quantities (where $r \in \{p,q\}$):
\begin{align*}
& \int \sup_{\theta \in N_\vartheta} R_x(\theta) \; \mathrm{d}R(x)  \\
& \int \sup_{\theta \in N_\vartheta} R_x(\theta) \|b_r(T^\theta(x))\| \; \mathrm{d}R(x)   \\
& \int \sup_{\theta \in N_\vartheta} R_x(\theta) \|\nabla b_r(T^\theta(x))\| \; \mathrm{d}R(x) 
\end{align*} 
which we have assumed exist.

\vspace{5pt}
\noindent Thus the conditions of \Cref{prop: gradients} hold, and the result immediately follows.
\end{proof}

\section{Experimental Details}

These appendices contain the additional empirical results referred to in \Cref{sec: empirical assessment}, together with full details required to reproduce the experiments described in \Cref{sec: applications} of the main text.

\subsection{Detection of Convergence and Non-Convergence} \label{subsec: detection convergence}

This appendix contains full details for the convergence plots of \Cref{fig: sequences}. In \Cref{fig: sequences}, we considered the target distribution
$$ 
p(x) = \sum_{i=1}^3 w_i \mathcal{N}(x; \mu_i ,\sigma^2_i), $$
where $\mathcal{N}(x;\mu,\sigma^2)$ is the univariate Gaussian density with mean $\mu$ and variance $\sigma^2$. The parameter choices used were $(w_1,w_2,w_3) = (0.375, 0.5625, 0.0625)$, $(\mu_1,\mu_2,\mu_3) = (-0.4, 0.3, 0.06)$ and $(\sigma^2_1,\sigma^2_2,\sigma^2_3) = (0.2,0.2,0.9)$.

The approximating sequences considered were location-scale sequences of the form $L^n_{ \#} u$, where $L^{n}(x) = a_n + b_n x$ for some $(a_n)_{n\in\mathbb{N}}$ and $(b_n)_{n\in\mathbb{N}}$ and $u \in \mathcal{P}(\mathbb{R})$. 
For the converging sequences, we set $u = p$ and for the non-converging sequences, we set $u = \mathcal{N}(0,0.5)$. We considered three different choices of $(a_n)_{n\in\mathbb{N}}$ and $(b_n)_{n\in\mathbb{N}}$, one for each colour. The sequences $(a_n)_{n\in\mathbb{N}}$ and $(b_n)_{n\in\mathbb{N}}$ used are shown in \Cref{fig: location scale fig1}. The specification of our choices of $q$ is the following:
\begin{itemize}
    \item \texttt{Prior}: We took $q \sim \mathcal{N}(0,0.75^2)$.
    \item \texttt{Laplace}: The Laplace approximation computed was $q\sim \mathcal{N}(0.3, 0.2041^2)$.
    \item \texttt{GMM}: The Gaussian mixture model was computed using $100$ samples from the target $p$. The number of components used was $2$, since this value minimised the Bayes information criterion \citep{schwarz1978bic}.
    \item \texttt{KDE}: The kernel density estimate was computed using $100$ samples from the target $p$. We utilised a Gaussian kernel $k(x,y) = \exp(-(x-y)^2/ \ell^2)$ with the lengthscale or bandwidth parameter $\ell$ determined by Silverman's rule of thumb \citep{silverman1986kernel}.
\end{itemize}
The values of GF-KSD reported in \Cref{fig: sequences} were computed using a quasi Monte Carlo approximation to the integral \eqref{eq: gfksd explicit}, utilising a length $300$ low-discrepancy sequence. The low discrepancy sequences were obtained by first specifying a uniform grid over $[0,1]$ and then performing the inverse CDF transform for each member of the sequence $\pi_n$.

\begin{figure}[t!]
    \centering
    \begin{subfigure}[b]{0.35\linewidth}
    \includegraphics[width=\textwidth,align=c]{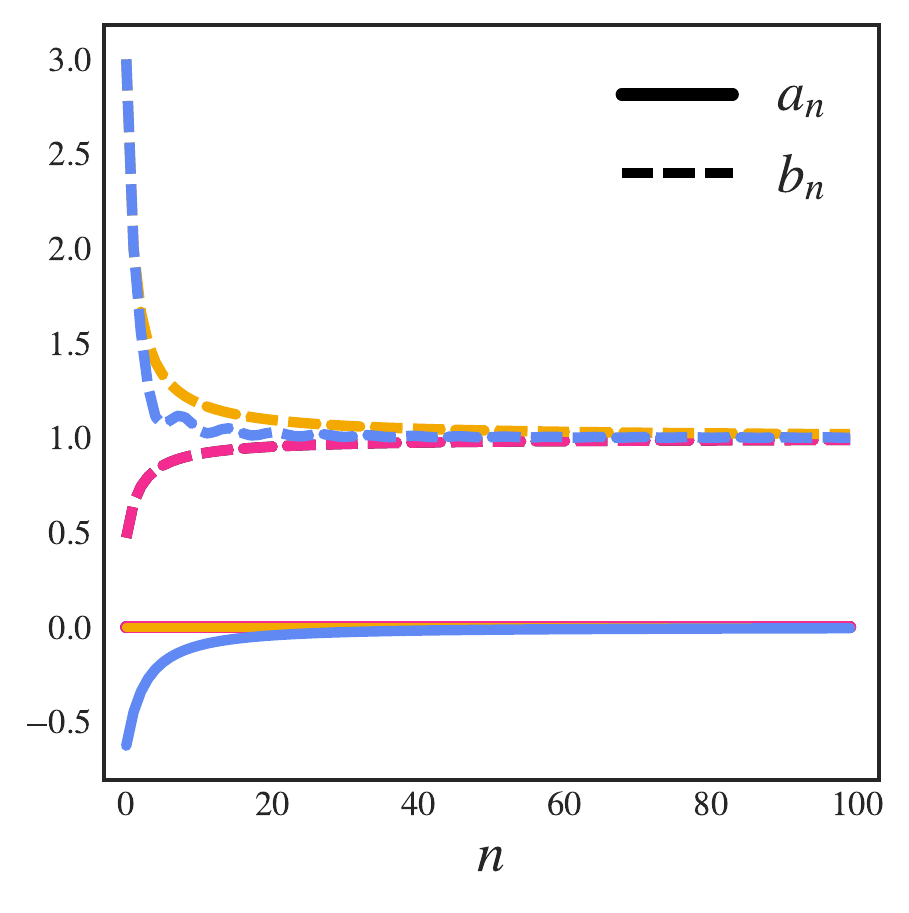} 
    \caption{}
    \label{subfig: location scale a}
    \end{subfigure}
        \begin{subfigure}[b]{0.34\linewidth}
    \includegraphics[width=\textwidth,align=c]{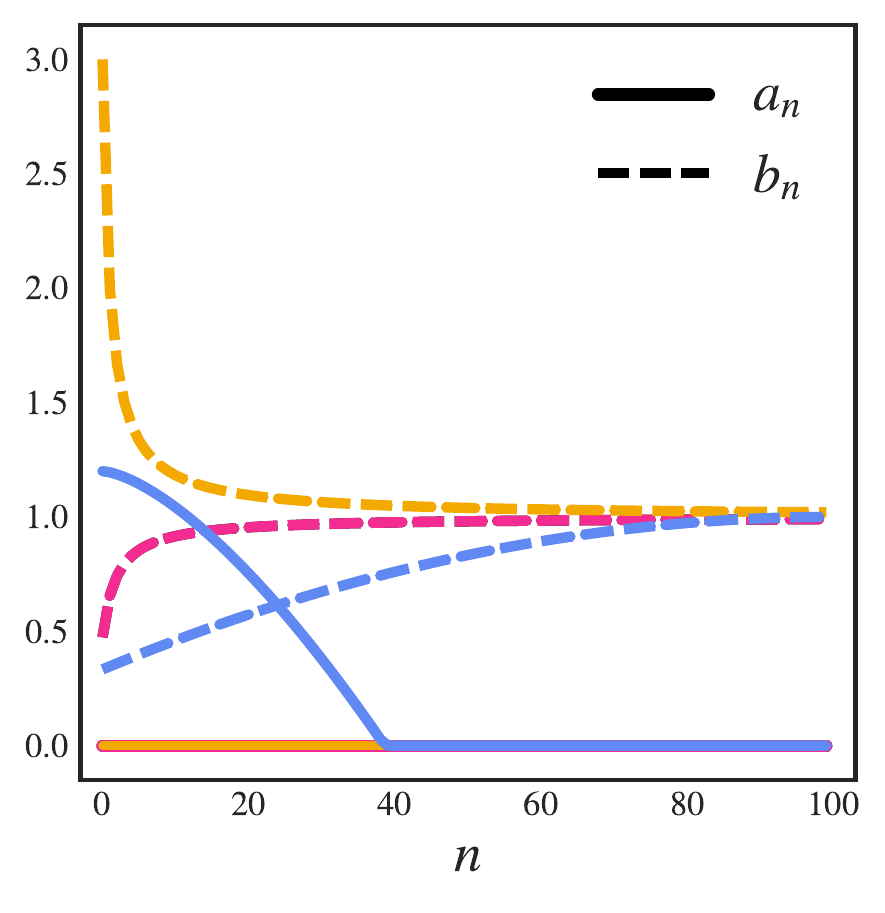} 
    \caption{}
    \label{subfig: location scale b}
    \end{subfigure}    
    \caption{The sequences $a_n$ and $b_n$ used in the location-scale sequences. (a) The sequences $a_n$ and $b_n$ used in the location-scale sequences in \Cref{fig: sequences}. (b) The sequences $a_n$ and $b_n$ used in the location-scale sequences in \Cref{fig: failure}. In each case, the colour used of each curve indicates which of the sequences $(\pi)_{n\in\mathbb{N}}$ they correspond to. 
    }
    \label{fig: location scale fig1}
\end{figure}

\subsection{Avoidance of Failure Modes}

This appendix contains full details of the experiment reported in \Cref{subsec: failure} and an explanation of the failure mode reported in \Cref{subfig: fail a}. The sequences considered are displayed in \Cref{fig: failure mode sequences}. Each sequence was a location-scale sequences of the form $L^n_{ \#} u$, where $L^{n}(x) = a_n + b_n x$ for some $(a_n)_{n\in\mathbb{N}}$ and $(b_n)_{n\in\mathbb{N}}$ and $u \in \mathcal{P}(\mathbb{R})$. 
For the converging sequences, we set $u = p$. The specification of the settings of each failure mode are as follows:

\begin{itemize}
    \item Failure mode (a) [\Cref{subfig: fail a}]: We took $p$ as the target used in \Cref{fig: sequences} and detailed in \Cref{subsec: detection convergence} and took $q \sim \mathcal{N}(0,1.5^2)$. The $a_n$ and $b_n$ sequences used are displayed in \Cref{subfig: location scale a}. The values of GF-KSD reported were computed using a quasi Monte Carlo approximation, using a length $300$ low discrepancy sequence. The low discrepancy sequences were obtained by first specifying a uniform grid over $[0,1]$ and then performing the inverse CDF transform for each member of the sequence $\pi_n$.
    \item Failure mode (b) [\Cref{subfig: fail b}]: We took $p\sim\mathcal{N}(0,1)$ and $q \sim \mathcal{N}(-0.7,0.1^2)$. The $a_n$ and $b_n$ sequences used are displayed in \Cref{subfig: location scale b}. The values of GF-KSD reported were computed using a quasi Monte Carlo approximation, using a length $300$ low discrepancy sequence. The low discrepancy sequences were obtained by first specifying a uniform grid over $[0,1]$ and then performing the inverse CDF transform for each member of the sequence $\pi_n$.
    \item Failure Mode (c) [\Cref{subfig: fail c}]: In each dimension $d$ considered, we took $p\sim\mathcal{N}(0,I)$ and $q\sim\mathcal{N}(0,1.1I)$. The $a_n$ and $b_n$ sequences\footnote{Note that for $d > 1$, we still considered location-scale sequences of the form $L^n(x) = a_n + b_n x$.} used are displayed in \Cref{subfig: location scale b}. The values of GF-KSD reported were computed using a quasi Monte Carlo approximation, using a length $1,024$ Sobol sequence in each dimension $d$.
    \item Failure mode (d) [\Cref{subfig: fail d}]: We took $p = q$, with $p(x) = 0.5\,\mathcal{N}(x; -1, 0.1^2) + 0.5 \, \mathcal{N}(x; 1,0.1^2)$, where $\mathcal{N}(x; \mu, \sigma^2)$ is the univariate Gaussian density with mean $\mu$ and variance $\sigma^2$. The $a_n$ and $b_n$ sequences used are displayed in \Cref{subfig: location scale b}. For the non-converging sequences we took $u=\mathcal{N}(1,0.1^2)$ and used the $a_n$ and $b_n$ sequences specified in \Cref{subfig: location scale b}. The values of GF-KSD reported were computed using a quasi Monte Carlo approximation, using a length $300$ low discrepancy sequence. The low discrepancy sequences were obtained by first specifying a uniform grid over $[0,1]$ and then performing the inverse CDF transform for each member of the sequence $\pi_n$.
\end{itemize}
In \Cref{fig: fail a explain}, we provide an account of the degradation of convergence detection between $q = \texttt{Prior}$ considered in \Cref{fig: sequences} and $q = \mathcal{N}(0,1.5^2)$ of Failure mode (a). In \Cref{subfig: explain 1}, it can be seen that the value of the integrals $\int (q/p)^2 \; \mathrm{d}\pi_n$ are finite for each element of the pink sequence $\pi_n$. However, in \Cref{subfig: explain 2}, it can be seen that the values of the integrals $\int (q/p)^2 \; \mathrm{d}\pi_n$ are infinite for the last members of the sequence $\pi_n$, thus violating a condition of \Cref{thm: convergence detection}.

\begin{figure}[t!]
    \centering
    \includegraphics[width=0.9\textwidth,align=c]{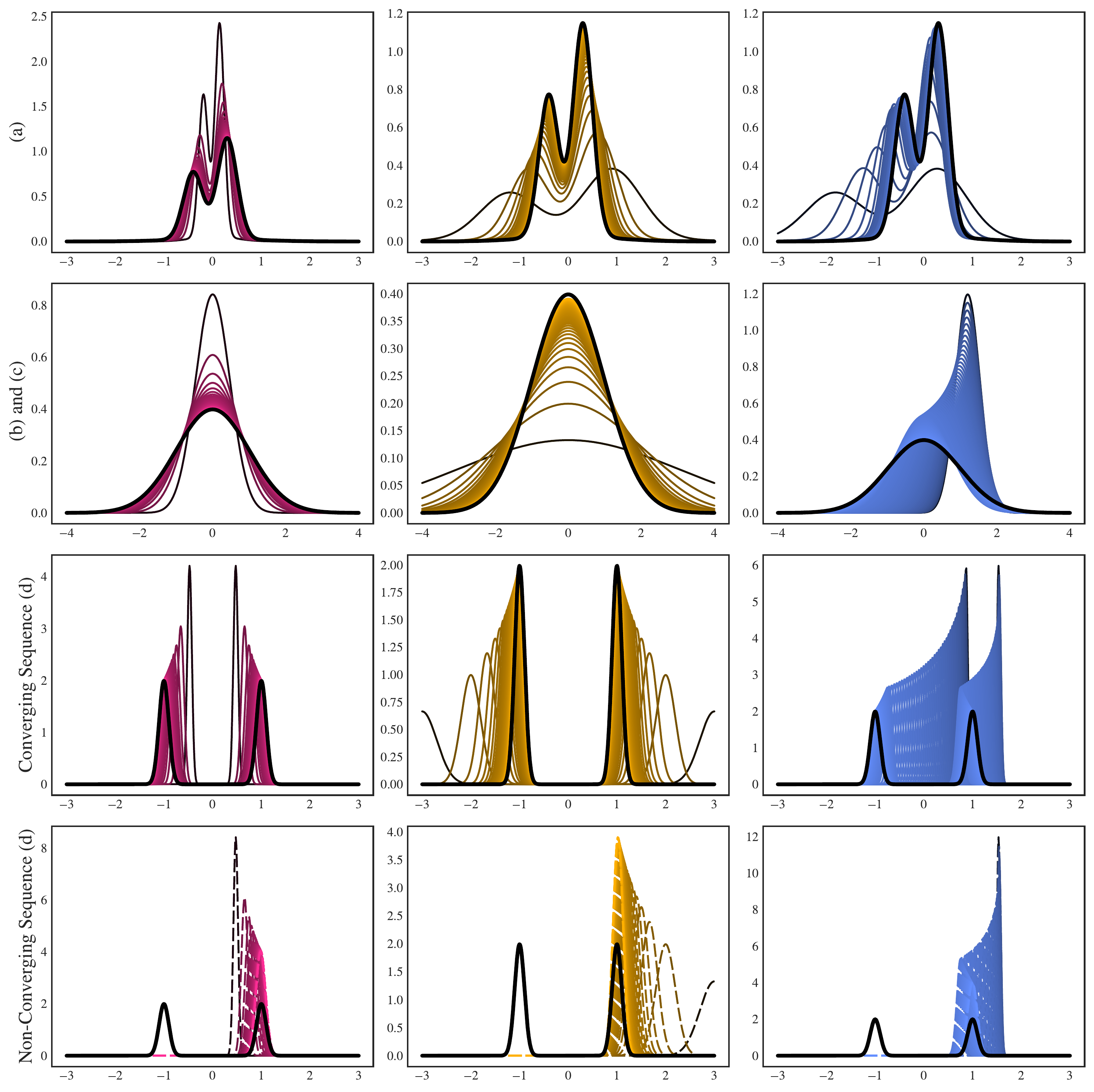}
    \caption{Test sequences $(\pi_n)_{n\in\mathbb{N}}$ used in \Cref{fig: failure}. The colour and style of each sequence indicates which of the curves in \Cref{fig: failure} is being considered. In the second row from the top, the sequence used when $d=1$ in  \Cref{subfig: fail c} is shown in the final column. 
    }
    \label{fig: failure mode sequences}
\end{figure}

\begin{figure}[t!]
    \centering
    \begin{subfigure}[b]{0.325\linewidth}
    \includegraphics[width=\textwidth,align=c]{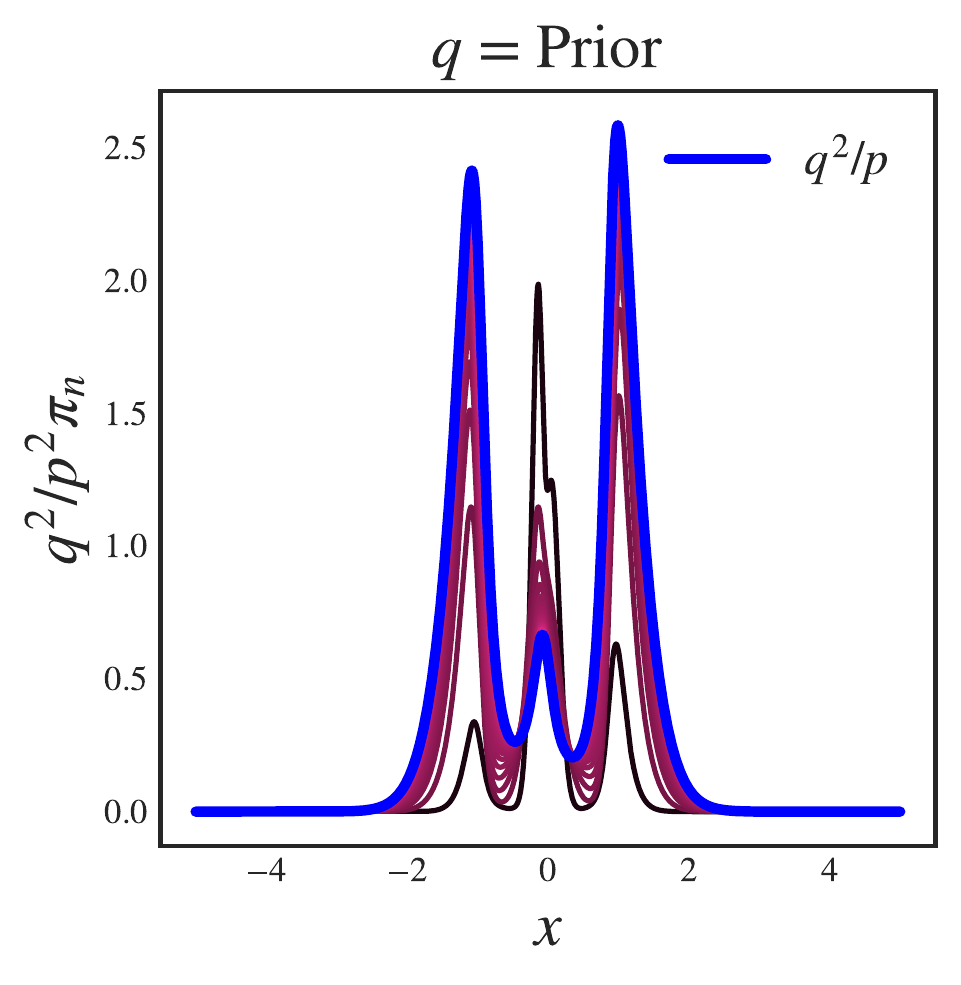} 
    \caption{}
    \label{subfig: explain 1}
    \end{subfigure}
        \begin{subfigure}[b]{0.325\linewidth}
    \includegraphics[width=\textwidth,align=c]{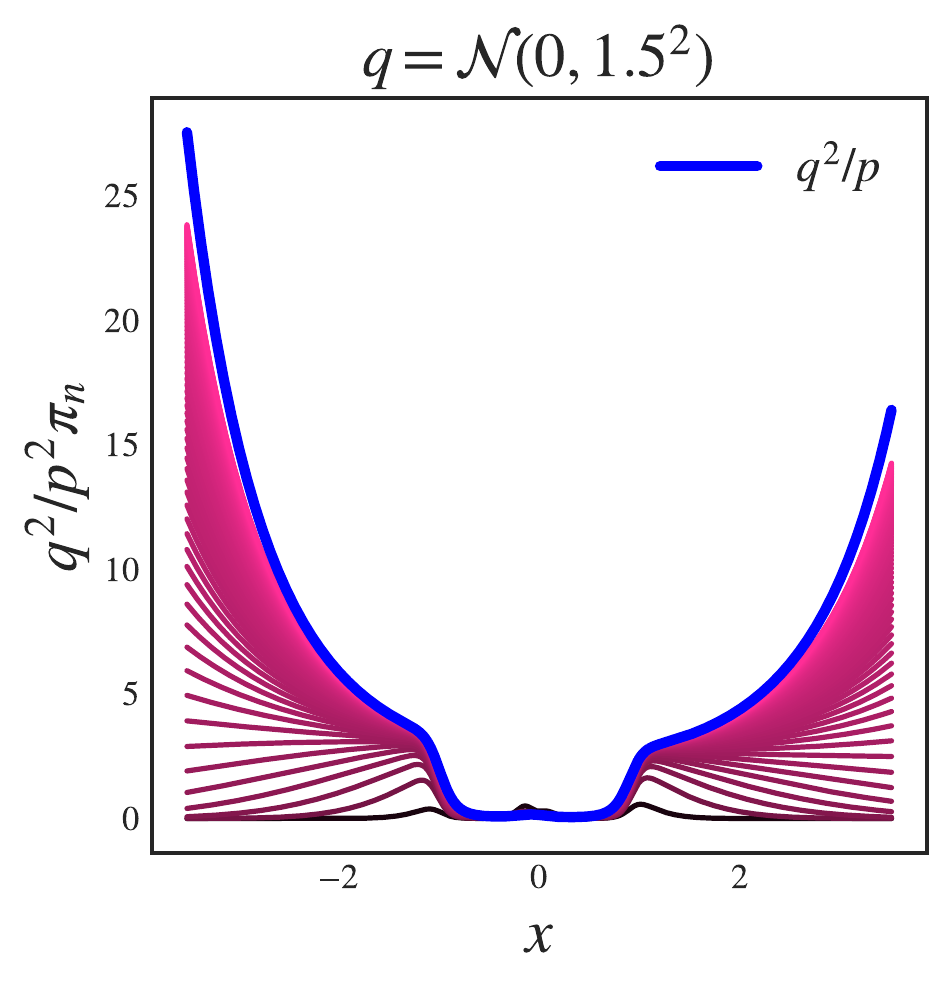} 
    \caption{}
    \label{subfig: explain 2}
    \end{subfigure}    
    \begin{subfigure}[b]{0.325\linewidth}
    \includegraphics[width=\textwidth,align=c]{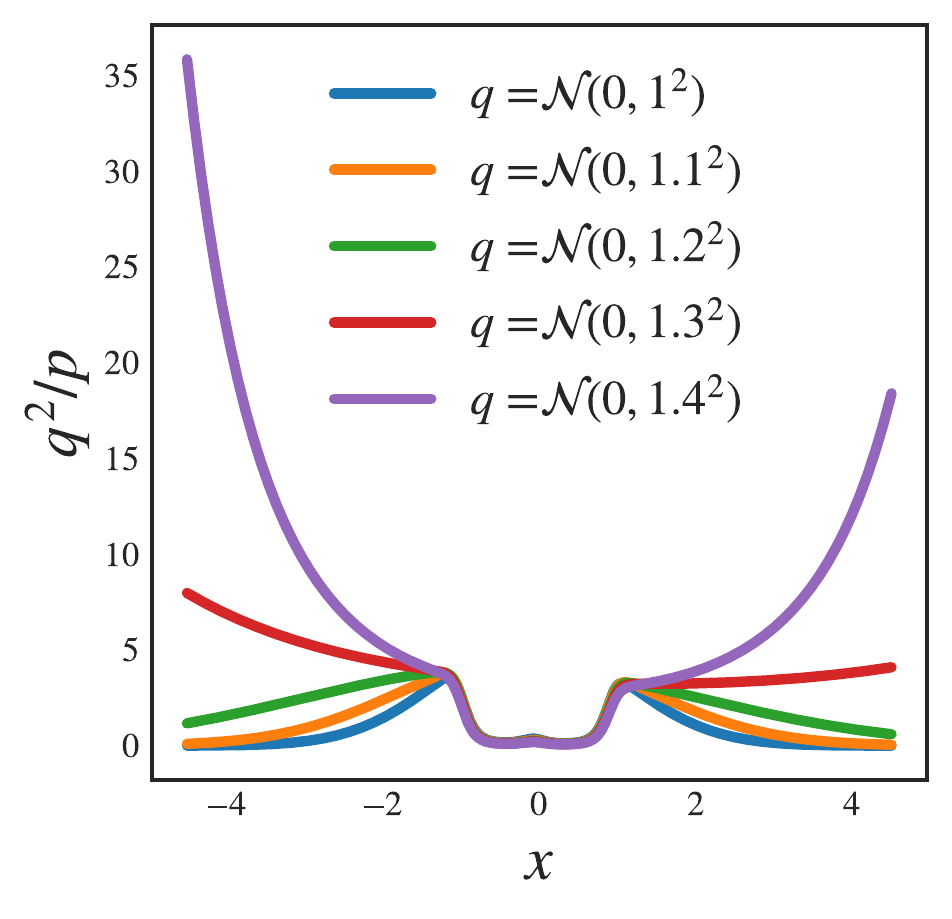} 
    \caption{}
    \label{subfig: explain 3}
    \end{subfigure}    
    \caption{Explanation of Failure Mode (a). (a) Values of $(q / p)^2 \pi_n$ for $q = \texttt{Prior}$ and for the converging pink sequence displayed in \Cref{fig: sequences}.  (b) Values of $(q / p)^2 \pi_n$ for $q = \mathcal{N}(0,1.5^2)$ and for the converging pink sequence displayed in the first column and first row of \Cref{fig: failure mode sequences}. (c) Values of $q^2 / p$ for different choices of $q$.
    }
    \label{fig: fail a explain}
\end{figure}

\subsection{Gradient-Free Stein Importance Sampling}
\label{ap: GFSIS}

This appendix contains full details for the experiment reported in \Cref{subsec: importance}. We considered the following Lotka--Volterra dynamical system:
\begin{align*}
    \dot{u}(t) = \alpha' u(t) - \beta' u(t)v(t), \qquad 
    \dot{v}(t) = -\gamma' v(t) + \delta' u(t)v(t), \qquad (u(0), v(0)) = (u_0',v_0').
\end{align*}
Using $21$ observations $u_{1},\ldots, u_{21}$ and $v_{1}, \ldots, v_{21}$ over times $t_1 < \ldots < t_{21}$, we considered the probability model
\begin{align*}
    u_{i} \sim \text{Log-normal}(\log u(t_i), (\sigma'_1)^2), \qquad 
    v_{i} \sim \text{Log-normal}(\log v(t_i), (\sigma'_2)^2).
\end{align*}
In order to satisfy positivity constraints, we performed inference on the logarithm of the parameters $(\alpha,\beta,\gamma,\delta,u_0,v_0,\sigma_1,\sigma_2) = (\log \alpha', \log \beta',\log \gamma',\log \delta',\log u_0',\log v_0', \log \sigma_1', \log \sigma_2')$. We took the following independent priors on the constrained parameters:
\begin{align*}
    &\alpha' \sim \text{Log-normal}(\log(0.7), 0.6^2), \,\,  \beta' \sim \text{Log-normal}(\log(0.02), 0.3^2), \\
    &\gamma' \sim \text{Log-normal}(\log(0.7), 0.6^2), \,\,  \delta' \sim \text{Log-normal}(\log(0.02), 0.3^2), \\
    &u_0' \sim \text{Log-normal}(\log(10), 1), \,\,  v_0' \sim \text{Log-normal}(\log(10), 1), \\
    &\sigma_1' \sim \text{Log-normal}(\log(0.25), 0.02^2), \,\,  \sigma_2' \sim \text{Log-normal}(\log(0.25), 0.02^2).
\end{align*}
In order to obtain independent samples from the posterior for comparison, we utilised  Stan \citep{stan2022} to obtain $8,000$ posterior samples using four Markov chain Monte Carlo chains. Each chain was initialised at the prior mode. The data analysed are due to \cite{hewitt1922conservation} and can be seen, along with a posterior predictive check, in \Cref{fig: lotka pp check}.

The Laplace approximation was obtained by the use of $48$ iterations of the L-BFGS optimisation algorithm \citep{liu1989bfgs} initialised at the prior mode. The Hessian approximation was obtained using Stan's default numeric differentiation of the gradient.

Finally, the quadratic programme defining the optimal weights of gradient-free Stein importance sampling (refer to \Cref{thm: importance sampling}) was solved using the splitting conic solver of \cite{donoghue2016scs}.

\begin{figure}[t!]
    \centering
    \includegraphics[width=0.7\textwidth,align=c]{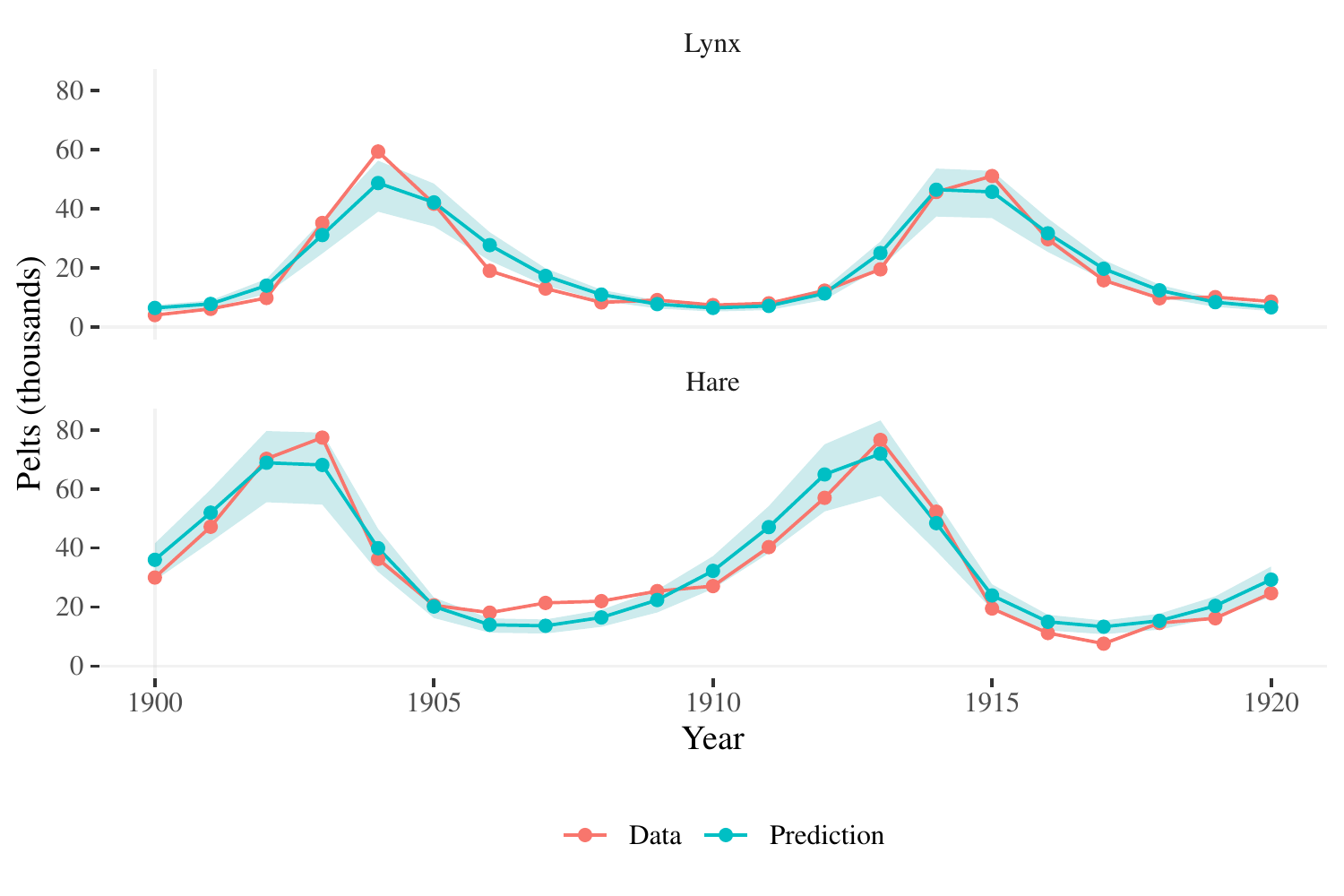}
    \caption{Posterior predictive check for the Lotka--Volterra model. The shaded blue region indicates the $50\%$ interquartile range of the posterior samples.
    }
    \label{fig: lotka pp check}
\end{figure}

\subsection{Stein Variational Inference Without Second-Order Gradient}
\label{ap: SVI}

This appendix contains full details for the experiment reported in \Cref{subsec: variational}.
We considered the following bivariate densities
\begin{align*}
    p_1(x,y) &:= \mathcal{N}(x; 0, \eta_1^2) \; \mathcal{N}(y; \sin(ax), \eta_2^2), \\
    p_2(x,y) &:= \mathcal{N}(x; 0, \sigma_1^2) \; \mathcal{N}(y; bx^2, \sigma_2^2),
\end{align*}
where $\mathcal{N}(x;\mu,\sigma^2)$ is the univariate Gaussian density with mean $\mu$ and variance $\sigma^2$.
The parameter choices for the sinusoidal experiment $p_1$ were $\eta_1^2 = 1.3^2, \eta_2^2 = 0.09^2$ and $a=1.2$. The parameter choices for the banana experiment $p_2$ were $\sigma_1^2 = 1, \sigma_2^2 = 0.2^2$ and $b = 0.5$.

The development of a robust stochastic optimisation routine for measure transport with gradient-free kernel Stein discrepancy is beyond the scope of this work, and in what follows we simply report one strategy that was successfully used in the setting of the application reported in the main text.
This strategy was based on tempering of $p$, the distributional target, to reduce a possibly rather challenging variational optimisation problem into a sequence of easier problems to be solved.
Specifically, we considered tempered distributions $p_m \in \mathcal{P}(\mathbb{R}^d)$ with log density
\begin{equation*}
    \log p_m(x) = \epsilon_m \log p_0(x) + (1-\epsilon_m) \log p(x),
\end{equation*}
where $(\epsilon_m)_{m\in\mathbb{N}} \in [0,1]^\mathbb{N}$ is the tempering sequence and $p_0 \in \mathcal{P}(\mathbb{R}^d)$ is fixed.
In this case $p_0$ was taken to be $\mathcal{N}(0,2I)$ in both the banana and sinusoidal experiment. 
Then, at iteration $m$ of stochastic optimisation, we considered the variational objective function
$$
\pi \mapsto \log D_{p_m,q}(\pi)
$$
where $q = \pi_{\theta_m}$, as explained in the main text. Tempering has been applied in the context of normalising flows in \cite{prangle2019distilling}. The tempering sequence used $(\epsilon_m)_{m\in\mathbb{N}}$ for each of the experiments is displayed in \Cref{fig: tempering seq}.

\begin{figure}[t!]
    \centering
    \includegraphics[width=0.5\textwidth,align=c]{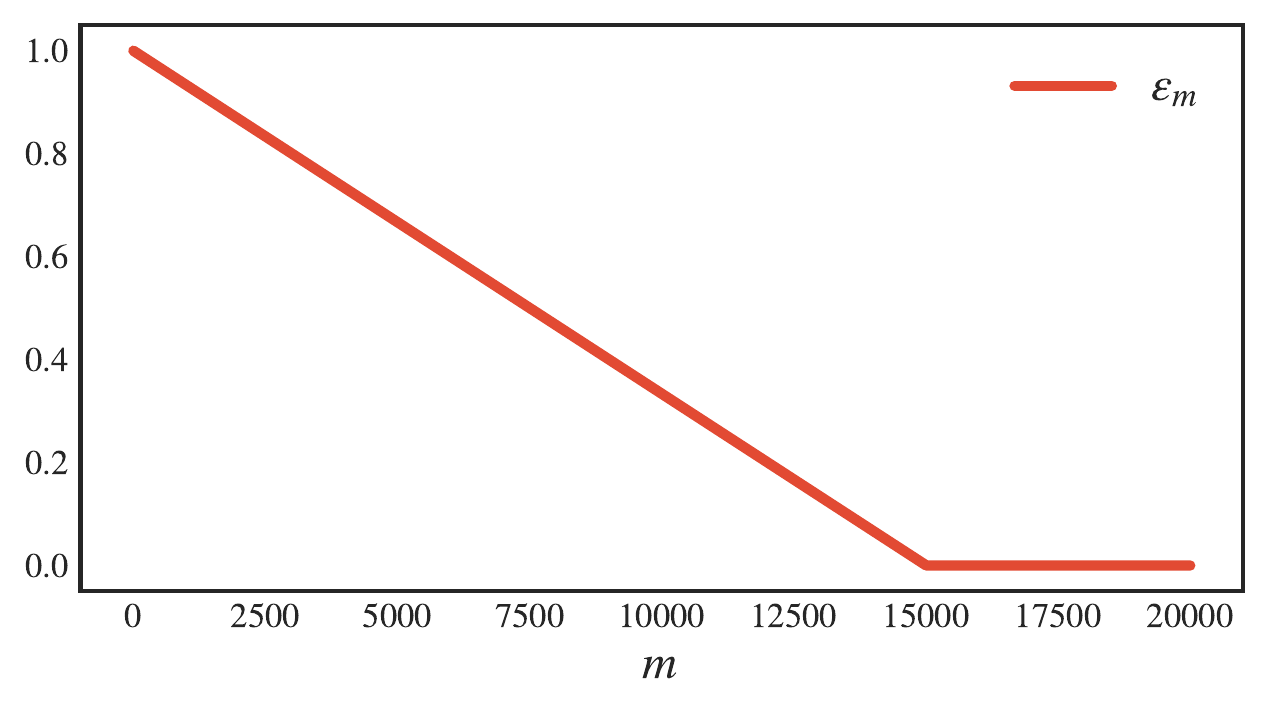}
    \caption{Tempering sequence $(\epsilon_m)_{m\in\mathbb{N}}$ used in each of the variational inference experiments.
    }
    \label{fig: tempering seq}
\end{figure}

For each experiment, the stochastic optimisation routine used was Adam \citep{kingma2014adam} with learning rate $0.001$. Due to issues involving exploding gradients due to the $q / p$ term in GF-KSD, we utilised gradient clipping in each of the variational inference experiments, with the maximum $2$-norm value taken to be $30$. In both the banana and sinusoidal experiment, the parametric class of transport maps $T^\theta$  was the \textit{inverse autoregressive flow} of \cite{kingma2016iaf}. 
In the banana experiment, the dimensionality of the hidden units in the underlying autoregressive neural network  was taken as $20$. In the sinusoidal experiment, the dimensionality of the hidden units in the underlying autoregressive neural network  was taken as $30$.
For the comparison with standard kernel Stein discrepancy, the same parametric class $T^\theta$ and the same initialisations of $\theta$ were used.

\subsection{Additional Experiments}

\Cref{app: different p} explores the impact of $p$ on the conclusions drawn in the main text.
\Cref{app: sigma beta} investigates the sensitivity of the proposed discrepancy to the choice of the parameters $\sigma$ and $\beta$ that appear in the kernel. \Cref{app: gfksd comparison} compares the performance of gradient-free KSD importance sampling, KSD importance sampling and self-normalised importance sampling.

\subsubsection{Exploring the Effect of $p$} \label{app: different p}

In this section we investigate the robustness of the convergence detection described in \Cref{fig: sequences} subject to different choices of the target $p$. We consider two further choices of $p$:
\begin{align*}
    p_1(x) &= \sum_{i=1}^4 c_i \mathcal{N}(x; \mu_i ,\sigma^2_i), \\
    p_2(x) &= \sum_{i=1}^4 d_i \,  \text{Student-T}(x; \nu, m_i, s_i), 
\end{align*}
where $\mathcal{N}(x; \mu, \sigma^2)$ is the univariate Gaussian density with mean $\mu$ and variance $\sigma^2$ and $\text{Student-T}(x; \nu, m ,s)$ is the univariate Student-T density with degrees of freedom $\nu$, location parameter $m$ and scale parameter $s$. The parameter choices for $p_1$ were
\begin{align*}
    (c_1,c_2,c_3,c_4) &= (0.3125,0.3125,0.3125,0.0625), \\
    (\mu_1,\mu_2,\mu_3,\mu_4) &= (-0.3,0,0.3,0), \\
    (\sigma^2_1,\sigma^2_2,\sigma^2_3,\sigma^2_4) &= (0.1^2,0.05^2,0.1^2,1).
\end{align*}
The parameter choices for $p_2$ were $\nu = 10$ and
\begin{align*}
    (d_1,d_2,d_3,d_4) &= (0.1,0.2,0.3,0.4), \\
    (m_1,m_2,m_3,m_4) &= (-0.4,-0.2,0,0.3), \\
    (s_1,s_2,s_3,s_4) &= (0.05,0.1,0.1,0.3).
\end{align*}
Instead of using the location-scale sequences of \Cref{fig: sequences}, we instead considered tempered sequences of the form
\begin{equation*}
    \log \pi_n(x) = \epsilon_n \log \pi_0(x) + (1-\epsilon_n) \log u(x).
\end{equation*}
For the converging sequences considered we set $u$ to be the target (either $u = p_1$ or $u=p_2$) and set $u = \mathcal{N}(x; 0,0.4^2)$ for each of the non-converging sequences. The different sequences vary in choice of $\pi_0$ and tempering sequence $(\epsilon_n)_{n\in\mathbb{N}}$. These choices are displayed in \Cref{fig: p0 and tempering choices} and are taken as the same for both of the targets considered.

\begin{figure}[t!]
    \centering
    \begin{subfigure}[b]{0.35\linewidth}
    \includegraphics[width=\textwidth,align=c]{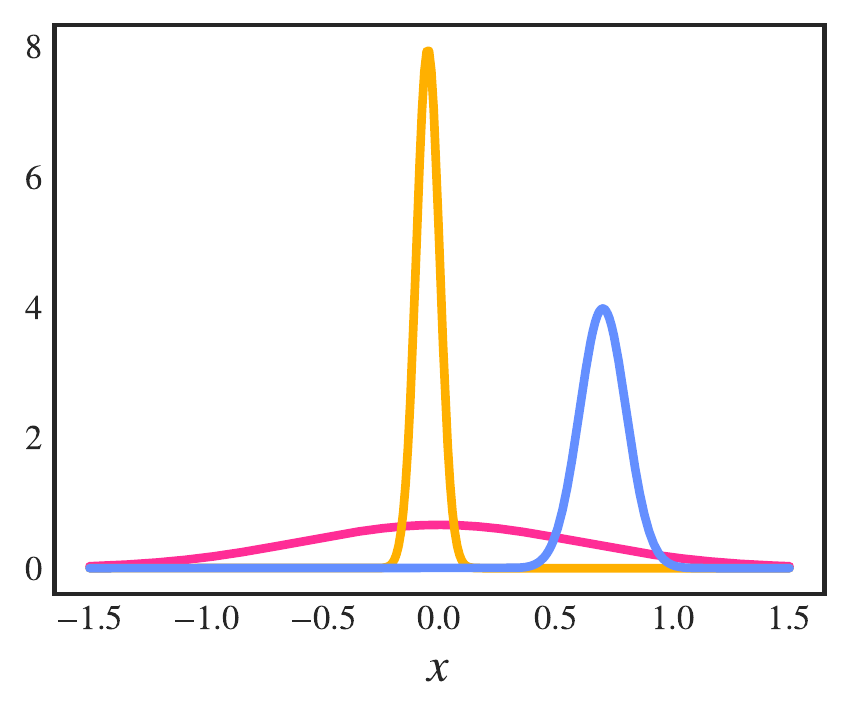} 
    \caption{}
    \label{subfig: p0 choices}
    \end{subfigure}
        \begin{subfigure}[b]{0.35\linewidth}
    \includegraphics[width=\textwidth,align=c]{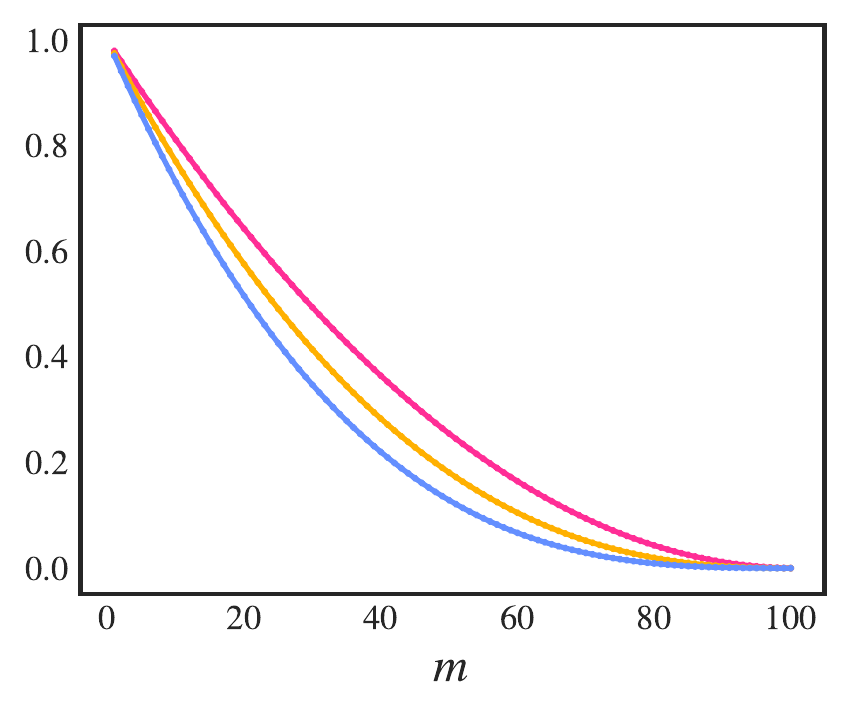} 
    \caption{}
    \label{subfig: tempering choices}
    \end{subfigure}    
    \caption{The $\pi_0$ and tempering sequences used in the additional convergence detection experiments. The colour of each curve indicates which of the sequences in \Cref{fig: sequences p1} and \Cref{fig: sequences p2} they correspond with. (a) The $\pi_0$ choices of each tempered sequence of distributions. (b) The tempering sequences $(\epsilon_m)_{m\in\mathbb{N}}$ considered.
    }
    \label{fig: p0 and tempering choices}
\end{figure}

The specification of our choices of $q$ is the following:
\begin{itemize}
    \item \texttt{Prior}: For $p_1$, we took $q \sim \mathcal{N}(0,0.5^2)$. For $p_2$, we took $q \sim \text{Student-T}(10, 0, 0.5)$.
    \item \texttt{Laplace}: For $p_1$, the Laplace approximation computed was $q\sim \mathcal{N}(0, 0.051^2)$. For $p_2$, the Laplace approximation computed was $q\sim \mathcal{N}(0, 0.125^2)$.
    \item \texttt{GMM}: For both targets, the Gaussian mixture model was computed using $100$ samples from the target. In both cases, the number of components used was $3$, since this value minimised the Bayes information criterion \citep{schwarz1978bic}.
    \item \texttt{KDE}: For both targets, the kernel density estimate was computed using $100$ samples from the target. In both cases, we utilised a Gaussian kernel $k(x,y) = \exp(-(x-y)^2/ \ell^2)$ with the lengthscale or bandwidth parameter $\ell$ determined by Silverman's rule of thumb \citep{silverman1986kernel}.
\end{itemize}
Results for $p_1$ are displayed in \Cref{fig: sequences p1} and results for $p_2$ are displayed in \Cref{fig: sequences p2}. It can be seen that for both target distributions and the different sequences considered, gradient-free kernel Stein discrepancy correctly detects convergence in each case. For both targets and for $q = \texttt{Laplace}$, it can be seen that gradient-free kernel Stein discrepancy exhibits the same behaviour of Failure Mode (b), displayed in \Cref{subfig: fail b}.

The values of gradient-free kernel Stein discrepancy reported in \Cref{fig: sequences p1} and \Cref{fig: sequences p2} were computed using a quasi Monte Carlo approximation to the integral \eqref{eq: gfksd explicit}, utilising a length $300$ low-discrepancy sequence. Due to the lack of an easily computable inverse CDF, we performed an importance sampling estimate of GF-KSD as follows
\begin{align*}
    \mathrm{D}_{p,q}(\pi) &= \iint (\mathcal{S}_{p,q}\otimes \mathcal{S}_{p,q}) k(x,y) \; \mathrm{d}\pi(x) \; \mathrm{d}\pi(y) \\
    &= \iint (\mathcal{S}_{p,q}\otimes \mathcal{S}_{p,q}) k(x,y) \frac{\pi(x)\pi(y)}{w(x)w(y)}\; \mathrm{d}w(x) \; \mathrm{d}w(y),
\end{align*}
where $w$ is the proposal distribution. For each element of a sequence $\pi_n$, we used a Gaussian proposal $w_n$ of the form:
\begin{equation*}
    \log w_n(x) = \epsilon_n \log \pi_0(x) + (1-\epsilon_n) \log \mathcal{N}(x;0, 0.4^2).
\end{equation*}
Since $\pi_0$ is Gaussian for each sequence, this construction ensures that each $w_n$ is both Gaussian and a good proposal distribution for $\pi_n$. The low-discrepancy sequences were then obtained by first specifying a uniform grid over $[0,1]$ and the performing an inverse CDF transformation using $w_n$.

\begin{figure}[t!]
    \centering
    \begin{subfigure}[b]{0.4\linewidth}
    \includegraphics[width=\textwidth]{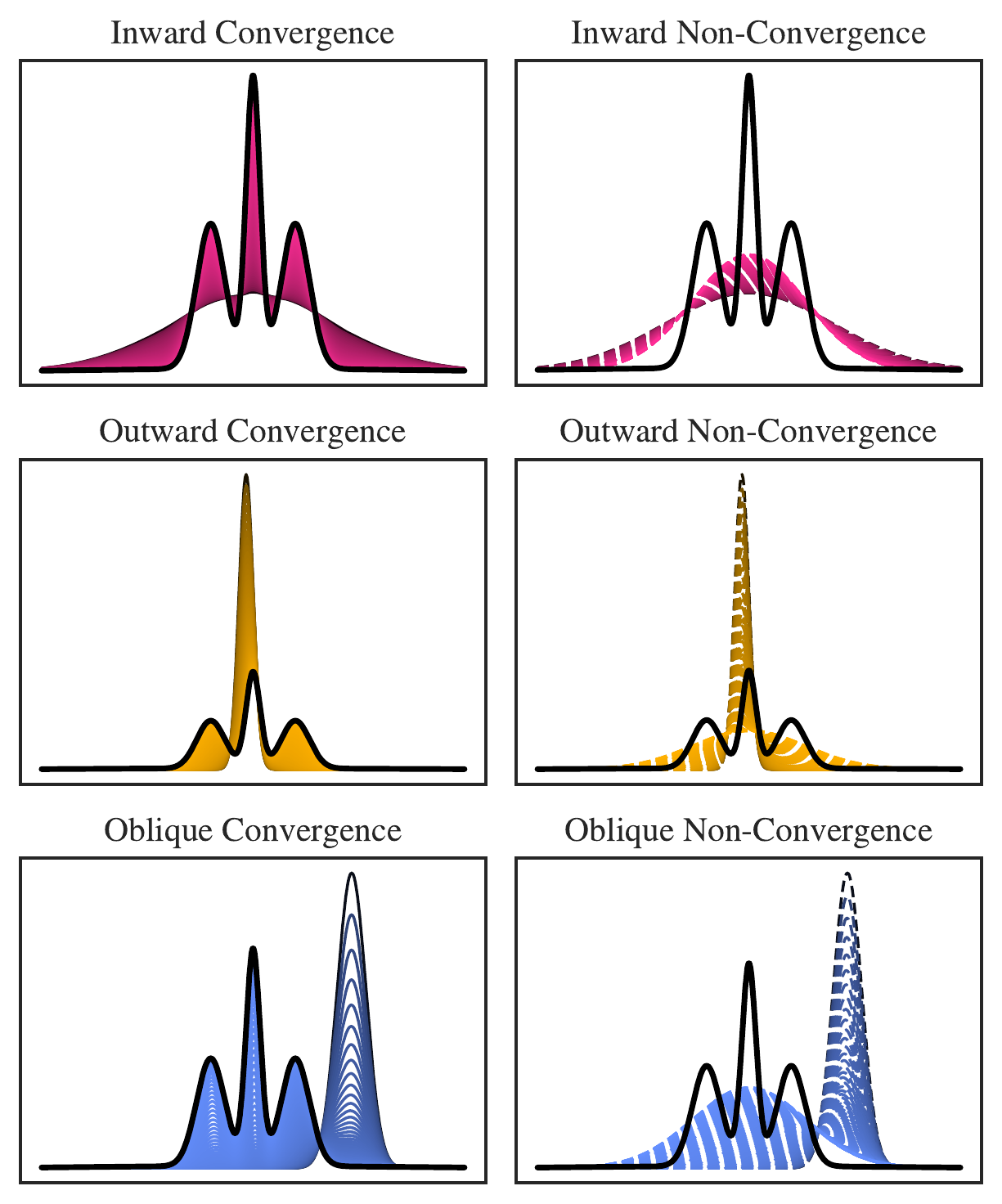} 
    \vspace{2pt}
    \caption{}
    \end{subfigure}
    \begin{subfigure}[b]{0.52\linewidth}
    \includegraphics[width=\textwidth]{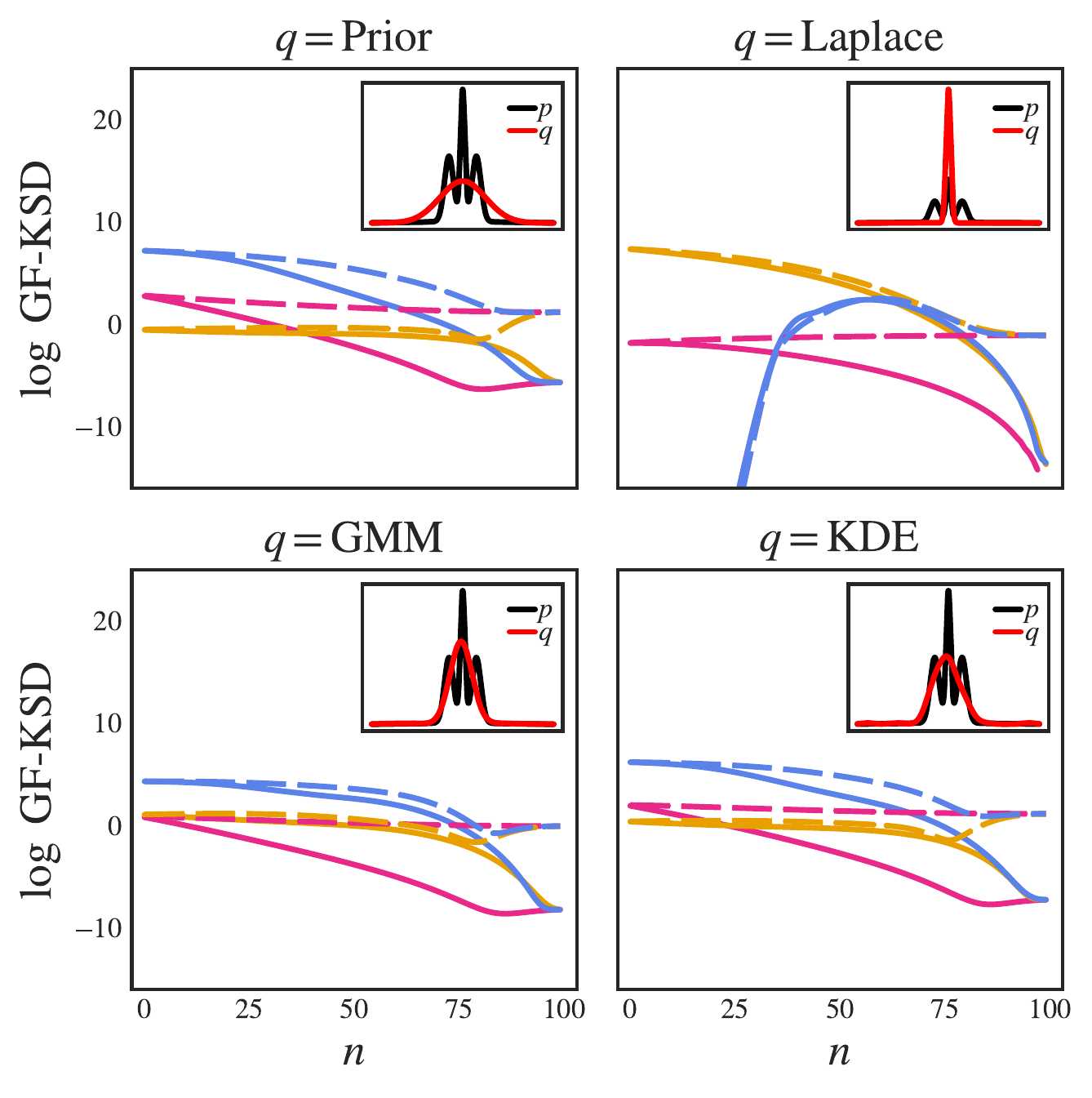}
    \caption{}
    \end{subfigure}
    \caption{Additional empirical assessment of gradient-free kernel Stein discrepancy using the target $p_1$ defined in \Cref{app: different p}.
    (a) Test sequences $(\pi_n)_{n \in \mathbb{N}}$, defined in \Cref{app: different p}.
    The first column displays sequences (solid) that converge to the distributional target $p$ (black), while the second column displays sequences (dashed) which converge instead to a fixed Gaussian target. 
    (b) Performance of gradient-free kernel Stein discrepancy, when different approaches to selecting $q$ are employed.
    The colour and style of each curve in (b) indicates which of the sequences in (a) is being considered.
    [Here we fixed the kernel parameters $\sigma = 1$ and $\beta = 1/2$.]
    }
    \label{fig: sequences p1}
\end{figure}

\begin{figure}[t!]
    \centering
    \begin{subfigure}[b]{0.4\linewidth}
    \includegraphics[width=\textwidth]{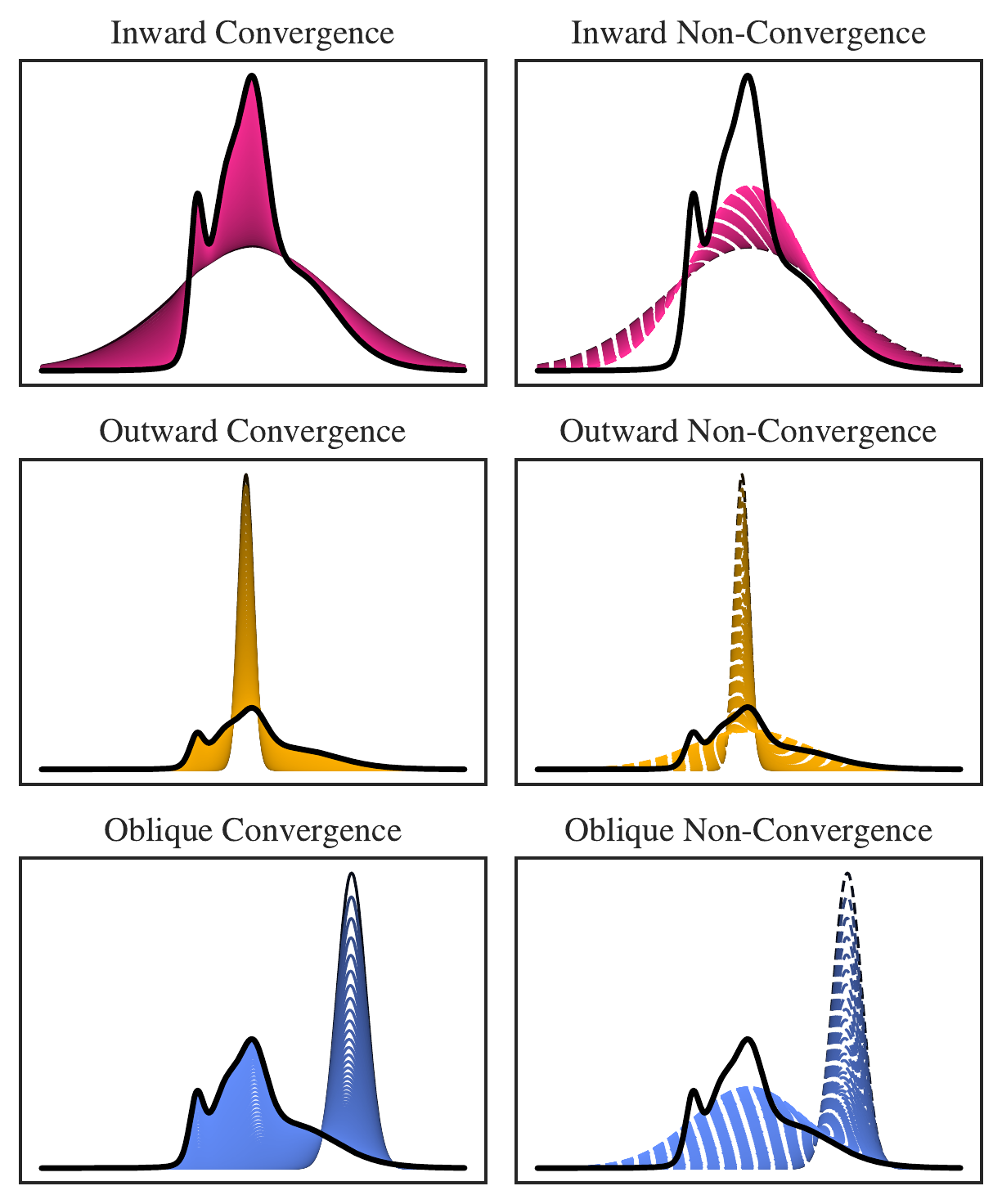} 
    \vspace{2pt}
    \caption{}
    \end{subfigure}
    \begin{subfigure}[b]{0.52\linewidth}
    \includegraphics[width=\textwidth]{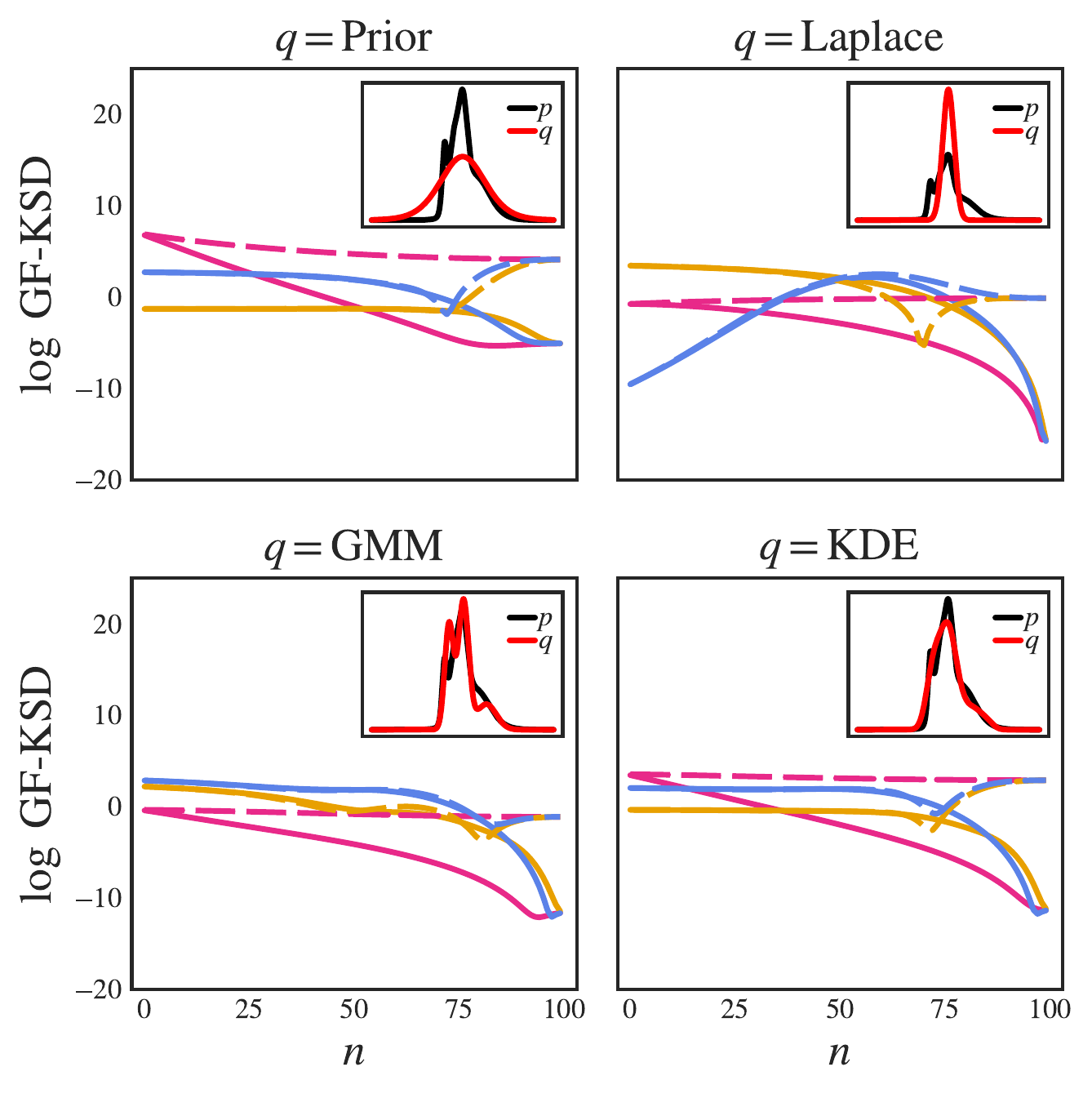}
    \caption{}
    \end{subfigure}
    \caption{Additional empirical assessment of gradient-free kernel Stein discrepancy using the target $p_2$ defined in \Cref{app: different p}.
    (a) Test sequences $(\pi_n)_{n \in \mathbb{N}}$, defined in \Cref{app: different p}.
    The first column displays sequences (solid) that converge to the distributional target $p$ (black), while the second column displays sequences (dashed) which converge instead to a fixed Gaussian target. 
    (b) Performance of gradient-free kernel Stein discrepancy, when different approaches to selecting $q$ are employed.
    The colour and style of each curve in (b) indicates which of the sequences in (a) is being considered.
    [Here we fixed the kernel parameters $\sigma = 1$ and $\beta = 1/2$.]
    }
    \label{fig: sequences p2}
\end{figure}

\subsubsection{Exploring the Effect of $\sigma$ and $\beta$} 
\label{app: sigma beta}
In this section we investigate the effect on convergence detection that results from changing the parameters $\sigma$ and $\beta$ in the inverse multi-quadric kernel \eqref{eq: imq}. Utilising the same test sequences and choices of $q$ used in \Cref{fig: sequences}, we plot the values of gradient-free kernel Stein discrepancy in \Cref{fig: sigma beta effect}. It can be seen that the convergence detection is robust to changing values of $\sigma$ and $\beta$.

\begin{figure}[t!]
    \centering
    \includegraphics[width=0.9\textwidth,align=c]{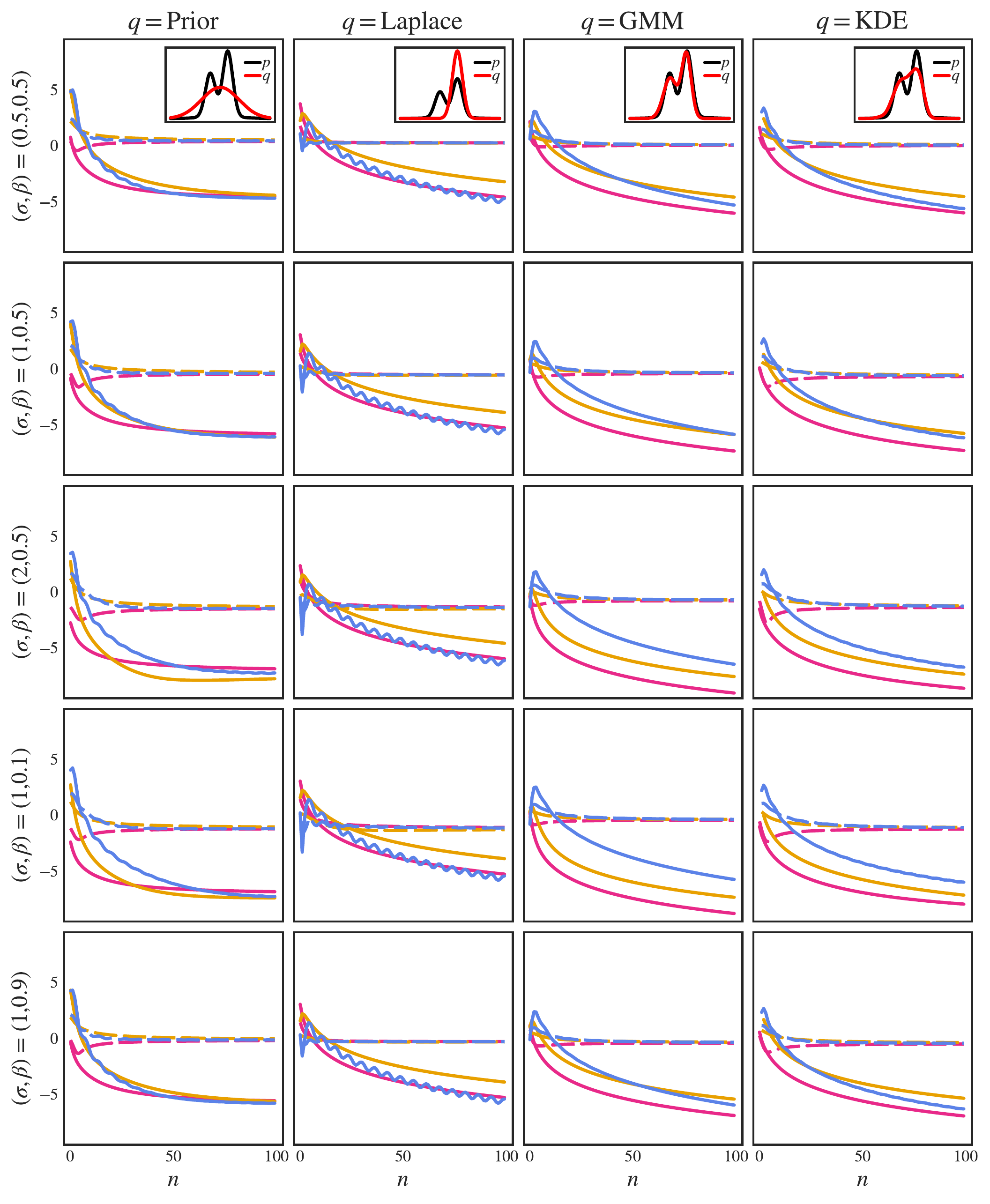}
    \caption{Comparison of different values of $\sigma$ and $\beta$ in the inverse multiquadric kernel.
    Here the vertical axis displays the logarithm of the gradient free kernel Stein discrepancy.
    The colour and style of each of the curves indicates which of the sequences in \Cref{fig: sequences} is being considered.
    }
    \label{fig: sigma beta effect}
\end{figure}

\subsubsection{ GF-KSD vs. KSD Importance Sampling} \label{app: gfksd comparison}

\begin{figure}[t!]
    \centering
    \includegraphics[width=0.9\textwidth,align=c]{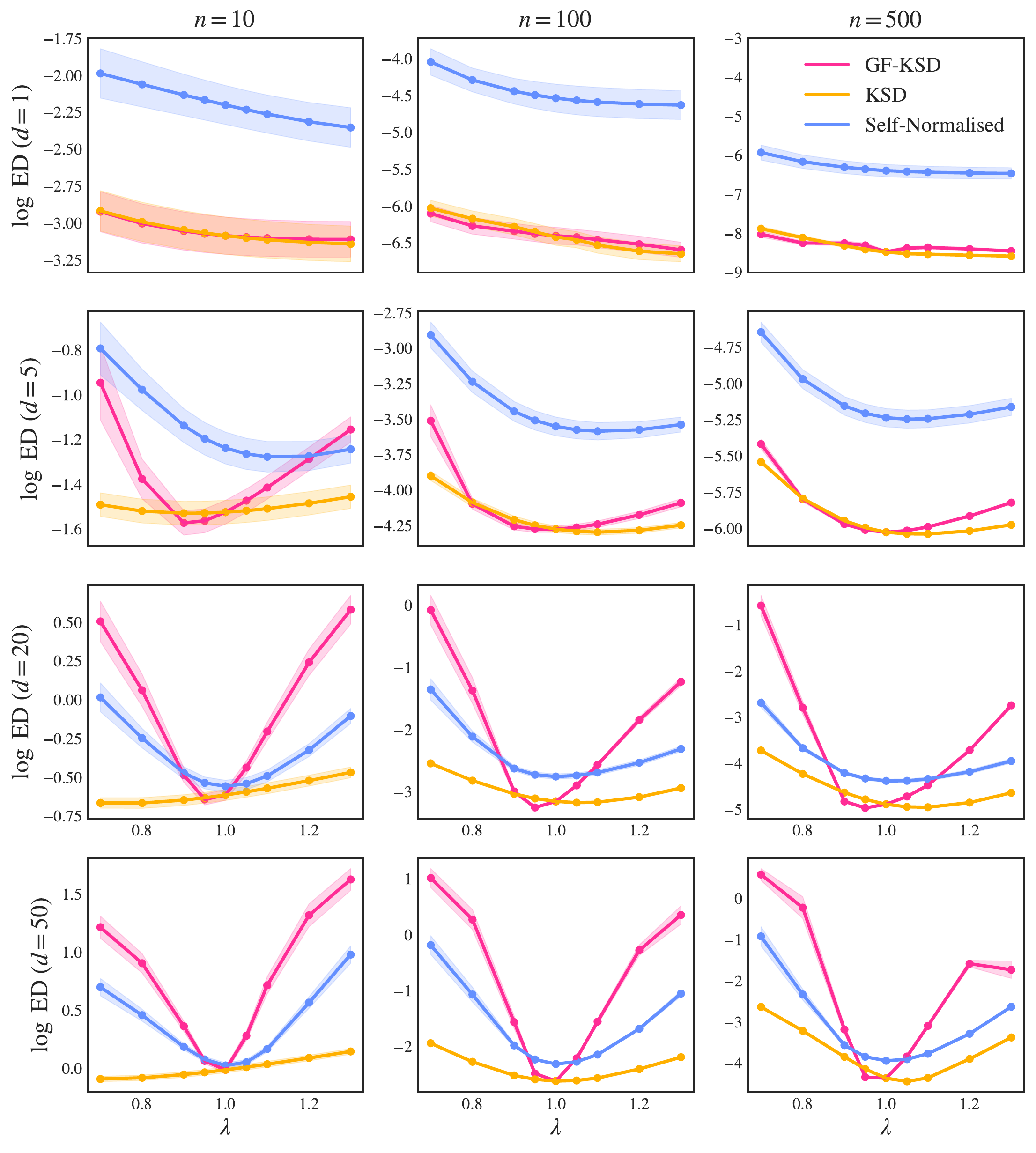}
    \caption{Comparison of the performance of importance sampling methodologies in varying dimension $d$ and number of sample points considered $n$ under the regime $q = \mathcal{N}(0,\lambda I)$. The approximation quality is quantified as the logarithm of the Energy Distance (ED).
    }
    \label{fig: gf-ksd is comparison 1}
\end{figure}

In this section we investigate the performance of gradient-free Stein importance sampling, standard Stein importance sampling, and self-normalised importance sampling, as the distribution $q$ varies in quality as an approximation to $p$. We consider two different regimes:
\begin{enumerate}
    \item $p = \mathcal{N}(0,I)$ and $q = \mathcal{N}(0, \lambda I)$ for $0.7 \leq \lambda \leq 1.3$.
    \item $p = \mathcal{N}(0, I)$ and $q = \mathcal{N}(c\mathbf{1}, I)$ for $ -0.6\leq c \leq 0.6$, where $\mathbf{1} = (1,\ldots,1)^\top$.
\end{enumerate}
In both cases, we consider the performance of each approach for varying dimension $d$ and number of samples $n$. Results are reported in \Cref{fig: gf-ksd is comparison 1} and \Cref{fig: gf-ksd is comparison 2} for each regime respectively. The quadratic programme defining the optimal weights of gradient-free Stein importance sampling and Stein importance sampling (refer to \Cref{thm: importance sampling}) was solved using the splitting conic solver of \cite{donoghue2016scs}.

\begin{figure}[t!]
    \centering
    \includegraphics[width=0.9\textwidth,align=c]{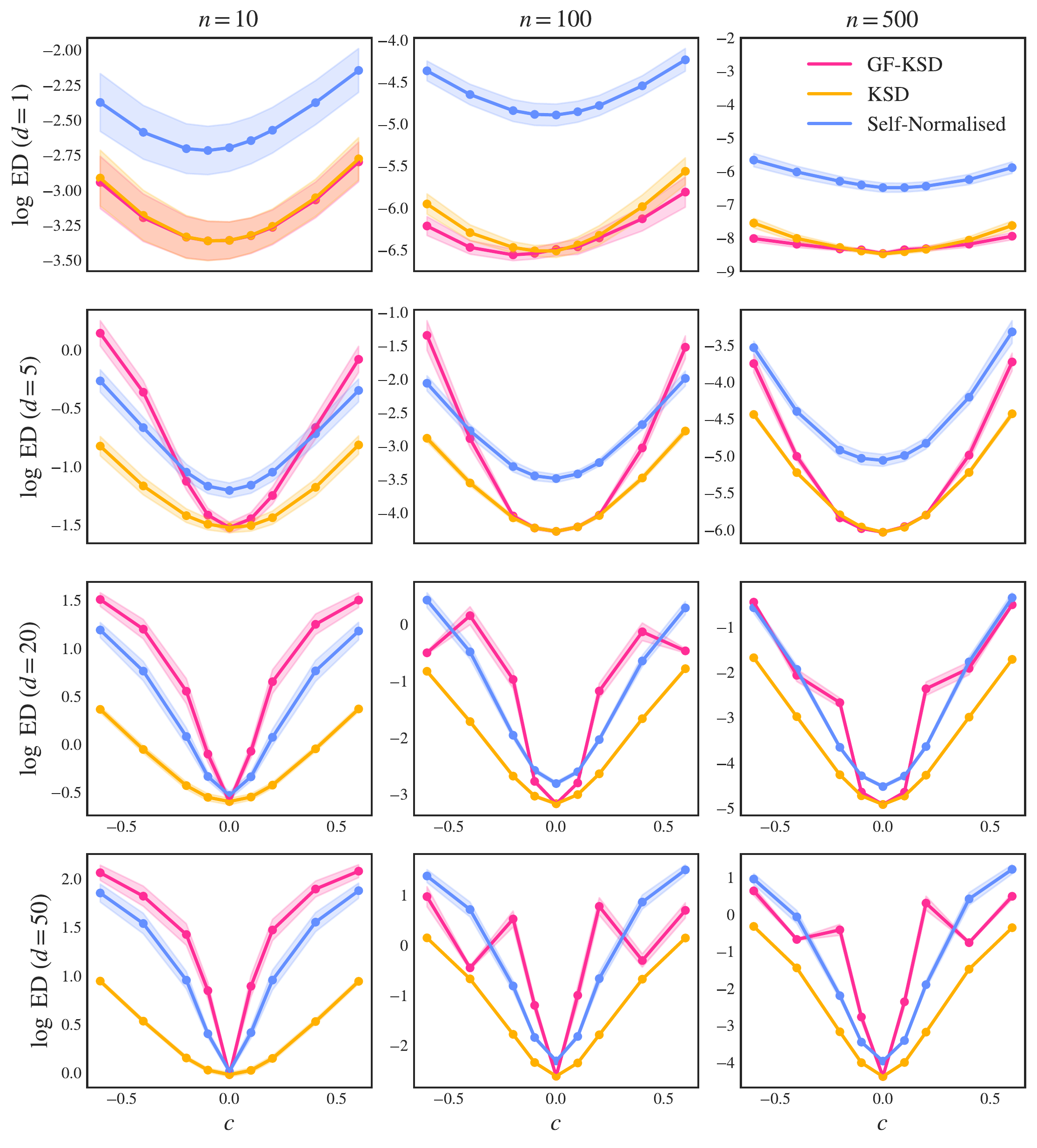}
    \caption{Comparison of the performance of importance sampling methodologies in varying dimension $d$ and number of sample points considered $n$ under the regime $q = \mathcal{N}(c, I)$. The approximation quality is quantified as the logarithm of the Energy Distance (ED).
    }
    \label{fig: gf-ksd is comparison 2}
\end{figure}

\clearpage

\bibliographystyle{abbrvnat}
\bibliography{bibliography}

\end{document}